\newcommand{\E}{\mathbb{E}}
\newtheorem{proposition}{Proposition}[section]
\newtheorem{remark}{Remark}[section]
\newtheorem{theorem}{Theorem}[section]
\newtheorem{lemma}{Lemma}[section]
\newtheorem{definition}{Definition}
\newtheorem{assumption}{Assumption}
\newtheorem{example}{Example}
\newcommand{\sumi}{\sum_{i=1}^{N}}
\newcommand{\sumt}{\sum_{t=1}^{T}}
\newcommand{\cW}{\mathcal W}
\newcommand{\cF}{\mathcal F}
\newcommand{\betahat}{\hat \beta}
\newcommand{\doubleY}{\dot{\widecheck Y}}
\newcommand{\doubleW}{\dot{\widecheck W}}
\def\arrvline{\hfil\kern\arraycolsep\vline\kern-\arraycolsep\hfilneg}
\newcommand{\appendixpagenumbering}{
  \break
  \pagenumbering{arabic}
  \renewcommand{\thepage}{\arabic{page}}
}
\begin{document}

\title{\Large{\textbf{Panel Experiments and Dynamic Causal Effects:\\ A Finite Population Perspective}\thanks{ 
We thank Isaiah Andrews, Robert Minton, Karthik Rajkumar and Jonathan Roth for helpful discussions. We thank the editor, co-editor and two anonymous referees for valuable and constructive comments. We especially thank James Andreoni and Larry Samuelson for kindly sharing their data. Finally, we are grateful to Gary Chamberlain for early conversations about this project. Any remaining errors are our own. Rambachan gratefully acknowledges financial support from the NSF Graduate Research Fellowship under Grant DGE1745303.} }}

\author{\Large{Iavor Bojinov} \thanks{Technology and Operations Management Unit, Harvard Business School: \href{mailto:ibojinov@hbs.edu}{ibojinov@hbs.edu}}
\and \Large{Ashesh Rambachan} \thanks{Department of Economics, Harvard University: \href{mailto:asheshr@g.harvard.edu}{asheshr@g.harvard.edu}}
\and \Large{Neil Shephard} \thanks{Department of Economics and Department of Statistics, Harvard University: \href{mailto:shephard@fas.harvard.edu}{shephard@fas.harvard.edu}}}
\date{\today}

\maketitle
\thispagestyle{empty} 
\setcounter{page}{0}

{\singlespacing
\begin{abstract}
In panel experiments, we randomly assign units to different interventions, measuring their outcomes, and repeating the procedure in several periods. Using the potential outcomes framework, we define finite population dynamic causal effects that capture the relative effectiveness of alternative treatment paths. For a rich class of dynamic causal effects, we provide a nonparametric estimator that is unbiased over the randomization distribution and derive its finite population limiting distribution as either the sample size or the duration of the experiment increases. We develop two methods for inference: a conservative test for weak null hypotheses and an exact randomization test for sharp null hypotheses. We further analyze the finite population probability limit of linear fixed effects estimators. These commonly-used estimators do not recover a causally interpretable estimand if there are dynamic causal effects and serial correlation in the assignments, highlighting the value of our proposed estimator.

\medskip

\noindent \textbf{Keywords:} Panel data, dynamic causal effects, potential outcomes, finite population, nonparametric. 
\end{abstract}
}

\newpage
\clearpage
\section{Introduction}

Panel experiments, where we randomly assign units to different interventions, measuring their response and repeating the procedure in several periods, form the basis of causal inference in many areas of biostatistics (e.g., \cite{MurphyEtAl(01)}), epidemiology (e.g., \cite{Robins(86)}), and psychology (e.g., \cite{lillie2011n}). In experimental economics, many authors recognize the benefits of panel-based experiments, for instance \cite{BellemareEtAl(14), BellemareEtAl(16)} highlighted the potentially large gains in power and \cite{CziborEtAl(19)} emphasized that panel-based experiments may help uncover heterogeneity across units. Despite these benefits, panel experiments are used infrequently in part due to the lack of a formal statistical framework and concerns about how the impact of past treatments on subsequent outcomes may induce biases in conventional estimators \citep{CharnessEtAl(12)}. In practice, authors typically assume away this complication by requiring that the outcomes only depend on contemporaneous treatment, what is often called the ``no carryover assumption'' (e.g., \cite{AbadieEtAl(17)}, \cite{AtheyImbens(18)}, \cite{AtheyEtAl(18)-MatrixCompletion}, \cite{ImaiKim(19)}, \cite{ArkhangelskyImbens(19)-DoubleRobustId}, \cite{ImaiKim(20)-twoway}, \cite{ChaisemartinDHaultfoeuille(20)}). Even when researchers allow for carryover effects, they commonly focus on incorporating the uncertainty due to sampling units from some super-population as opposed to the design-based uncertainty, which arises due to the random assignment.\footnote{See \cite{AbadieEtAl(20)} for a discussion of the difference between sampling-based and design-based uncertainty in the cross-sectional setting.}

In this paper, we tackle these challenges by defining a variety of new panel-based dynamic causal estimands without evoking restrictions on the extent to which treatments can impact subsequent outcomes. Our approach builds on the potential outcomes formulation of causal inference and takes a purely design-based perspective on uncertainty, allowing us to be agnostic to the outcomes model \citep{Neyman(23),Kempthorne(55),Cox(58book),Rubin(74)}. Our main estimands are various averages of lag-$p$ dynamic causal effects, which capture how changes in the assignments affect outcomes after $p$ periods. We provide nonparametric estimators that are unbiased over the randomization distribution induced by the random design. By exploiting the underlying Martingale property of our unbiased estimators, we derive their finite population asymptotic distribution as either the number of sample periods, experimental units, or both increases. This is a new technique for proving finite population central limit theorems, which may be broadly useful and of independent interest to researchers.

We develop two methods for conducting nonparametric inference on these dynamic causal effects. The first uses the limiting distribution to perform conservative tests on weak null hypotheses of no average dynamic causal effects. The second provides exact randomization tests for sharp null hypotheses of no dynamic causal effects. We then highlight the usefulness of our framework by deriving the finite population probability limit of commonly used linear estimation strategies, such as the unit fixed effects estimator and the two-way fixed effects estimator. Such estimators are biased for a contemporaneous causal effect whenever there exists carryover effects and serial correlation in the assignment mechanism, underscoring the value of our proposed nonparametric estimator.

Finally, we illustrate our theoretical results in a simulation study and apply our framework to reanalyze a panel-based experiment. The simulation study illustrates our finite population central limit theorems under a variety of assumptions about the underlying potential outcomes and assignment mechanism. We confirm that conservative tests based on the limiting distribution of our nonparametric estimator control size well and have good rejection rates against a variety of alternatives. We finish by reanalyzing a panel experiment conducted in \cite{AndreoniSamuelson(06)}, which studies cooperative behavior in game theory and is a natural application of our methods. Participants in the experiment played a twice-repeated prisoners' dilemma many times, and payoff structure of the game was randomly varied across plays. The sequential nature of the experiment raises the possibility that past assignments may impact future actions as participants learn about the structure of the game over time. For example, the random variation in the payoff structure may induce participants to explore possible strategies. This motivates us to analyze the experiment using our methods that are robust to possible dynamic causal effects. We confirm the authors' original hypothesis that the payoff structure of the twice repeated prisoners' dilemma has significant contemporaneous effects on cooperative behavior. Moreover, we provide suggestive evidence of dynamic causal effects in this experiment --- the payoff structure of previously played games may affect cooperative behavior in the current game, which may be indicative of such learning.

Our design-based framework provides a unified generalization of the finite population literature in cross-sectional causal inference (as reviewed in \cite{ImbensRubin(15)}) and time series experiments \citep{BojinovShephard(19)} to panel experiments. Three crucial contributions differentiate our work from the existing literature. First, we focus on a much richer class of dynamic causal estimands, which answer a broader set of causal questions by summarizing heterogeneity across both units and time periods. Second, we derive two new finite population central limit theorems as the size of the population grows, and as both the duration and population size increase. Third, we compute the bias present in standard linear estimators in the presence of dynamic causal effects and serial correlation in the treatment assignment probabilities.

Our framework is also importantly distinct from foundational work by \cite{Robins(86)} and co-authors, that uses treatment paths for causal panel data analysis and focuses on providing super-population (or sampling-based) inference methods. In contrast, we avoid super-population arguments entirely. Our estimands and inference procedures are conditioned on the potential outcomes and all uncertainty arises solely from the randomness in assignments. Avoiding super-populations arguments is often attractive in panel data applications. For example, a company only operates in a finite number of markets (e.g., states or cities within the United States) and can only conduct advertising or promotional experiments across markets. Such panel experiments are increasingly common in industry \citep[e.g.][]{BojinovEtAl(2020)--HBR, BojinovEtAl(2020)--SwitchBack}.\footnote{Of course, in other applications, super-population arguments may be entirely natural. For example, in the mental healthcare digital experiments of \cite{BoruvkaAlmirallWitkiwitzMurphy(17)}, it is compelling to use sampling-based arguments as the experimental units are drawn from a larger group of patients for whom we wish to make inference on as, if successful, the technology will be broadly rolled out.} In econometrics, \cite{AbadieEtAl(17)} highlight the appeal of this design-based perspective in panel data applications. However, the panel-based potential outcome model developed in that work contains no dynamics as the authors primarily focus on cross-sectional data with an underlying cluster structure. Similarly, \cite{AtheyImbens(18)}, \cite{AtheyEtAl(18)-MatrixCompletion} and \cite{ArkhangelskyImbens(19)-DoubleRobustId} also introduce a potential outcome model for panel data, but assume away carryover effects. \cite{HeckmanEtAl(16)}, \cite{Hull(18)}, and \cite{Han(19)} consider a potential outcome model similar to ours but again rely on super-population arguments to perform inference. Additionally, an influential literature in econometrics focuses on estimating dynamic causal effects in panel data under rich models that allow heterogeneity across units, but does not introduce potential outcomes to define counterfactuals and also relies on super-population arguments for inference (e.g., see \cite{ArellanoBonhomme(16)}, \cite{ArellanoBlundellBonhomme(17)} and the review in \cite{ArellanoBonhomme(12)}).


\paragraph{Notation:} For an integer $t \geq 1$ and a variable $A_{t}$, we write $A_{1:t} := \left(A_{1}, \hdots, A_{t}\right)$. We compactly write index sets as $[N] := \{1, \hdots, N\}$ and $[T] := \{1, \hdots, T\}$. Finally, for a variable $A_{i,t}$ observed over $i \in [N]$ and $t \in [T]$, define its average over $t$ as $\bar A_{i\cdot} := \tfrac{1}{T} \sumt A_{i,t}$, its average over $i$ as $\bar A_{\cdot t} :=\tfrac{1}{N} \sumi A_{i,t}$ and its average over both $i$ and $t$ as $\bar A := \tfrac{1}{NT} \sumt \sumi A_{i,t}$.

\section{Potential outcome panel and dynamic causal effects}\label{section:set-up}

\subsection{Assignment panels and potential outcomes} 
Consider a panel in which $N$ units (e.g., individuals or firms) are observed over $T$ time periods. For each unit $i \in [N]$ and period $t \in [T]$, we allocate an assignment $W_{i, t} \in \cW$. The assignment is a random variable and we assume $|\cW| < \infty$. For a binary assignment $\cW = \{0,1\}$, we refer to ``1'' as treatment and ``0'' as control. 

The \textit{assignment path} for unit $i$ is the sequence of assignments allocated to unit $i$, denoted $W_{i,1:T} := (W_{i,1},...,W_{i,T})^\prime \in \cW^{T}$. The \textit{cross-sectional assignment} at time-$t$ describes all assignments allocated at period $t$, denoted $W_{1:N,t} := (W_{1,t},...,W_{N,t})^\prime \in \cW^{N}.$ The \textit{assignment panel} is the $N\times T$ matrix $W_{1:N,1:T} \in \cW^{N \times T}$ that summarizes the assignments given to all units over the sample period, where $W_{1:N, 1:T} := \left( W_{1:N,1}, \hdots , W_{1:N,T} \right) = \left( W_{1, 1:T}^\prime, \hdots , W_{N, 1:T}^\prime \right)^\prime$.

A \textit{potential outcome} describes what would be observed for a particular unit at a fixed point in time along any assignment path.
\begin{definition}\label{defn: potential outcome}
    The \textbf{potential outcome} for unit-$i$ at time-$t$ along assignment path $w_{i, 1:T} \in \mathcal{W}^{T}$ is written as $Y_{i, t}(w_{i, 1:T})$.
\end{definition}
\noindent In principle, the potential outcome can depend upon the entire assignment path allowing for arbitrary spillovers across time periods. Definition \ref{defn: potential outcome} imposes that there are no treatment spillovers across units \citep{Cox(58book)}.\footnote{The idea of defining potential outcomes as a function of assignment paths first appears in \cite{Robins(86)} and has been further developed in subsequent work such as \cite{Robins(94)}, \cite{RobinsGreenlandHu(99)}, \cite{MurphyEtAl(01)}, \cite{BoruvkaAlmirallWitkiwitzMurphy(17)} and \cite{BlackwellGlynn(18)}.}

\subsection{The potential outcome panel model}
We now define the potential outcomes panel model by restricting the potential outcomes for a unit in a given period not to be affected by future assignments.

\begin{assumption}\label{ass:nonanticipating}
    The potential outcomes are \textbf{non-anticipating} if, for all $i \in [N]$, $t \in [T]$, and $w_{i,1:T}, \tilde{w}_{i,1:T} \in \cW^{T}$,
         $Y_{i,t}(w_{i,1:T}) = Y_{i,t}(\tilde w_{i,1:T})$ whenever  $w_{i,1:t} = \tilde{w}_{i,1:t}$.
\end{assumption}

\noindent Non-anticipation still allows an arbitrary dependence on past and contemporaneous assignments, and arbitrary heterogeneity across units and time periods.\footnote{Allowing for rich heterogeneity in panel data models is often useful in many economic applications. For example, there is extensive heterogeneity across units in income processes \citep{BrowningEjrnaesAlvarez(10)} and the dynamic response of consumption to earnings \citep{ArellanoBlundellBonhomme(17)}. Time-varying heterogeneity is also an important feature. For example, it is a classic point of emphasis in studying human capital formation -- see \cite{ben-porath_production_1967}, \cite{griliches_estimating_1977} and more recently, \cite{cunha_chapter_2006} and \cite{cunha_estimating_2010}.} Under Assumption \ref{ass:nonanticipating}, the potential outcome for unit $i$ at time $t$ only depends on the assignment path for unit $i$ up to time $t$, allowing us to write the potential outcomes as $Y_{i,t}(w_{i,1:t})$. As notation, let ${\textbf Y_{i,t} } = \{ Y_{i,t}(w_{i,1:t}): w_{i,1:t} \in \mathcal{W}^{t}\}$ denote the collection of potential outcomes for unit $i$ at time $t$ and ${\textbf Y_{1:N, 1:T} } = \{ {\textbf Y_{i,t} } \,:\, i\in[N], t\in[T]\}$ denote the collection of potential outcomes for all units across all time periods. Along an assignment panel $w_{1:N, 1:t} \in \mathcal{W}^{N \times t}$ up to time $t$, let $Y_{1:N, 1:t}(w_{1:N, 1:t})$ denote the associated $N \times t$ matrix of outcomes for all units up to time $t$.

To connect the observed outcomes with the potential outcomes, we assume every unit complies with the assignment.\footnote{In some applications, this assumption may be unrealistic. For example, in a panel-based clinical trial, we may worry that patients do not properly adhere to the assignment. In such cases, our analysis can be re-interpreted as focusing on dynamic intention-to-treat (ITT) effects.} For all $i \in [N]$, $t \in [T]$, the observed outcomes for unit $i$ are $y_{i,1:T}^{obs} = Y_{i,1:T}(w_{i,1:T}^{obs})$, where $w_{i,1:T}^{obs}$ is the observed assignment path for unit $i$.

A panel of units, assignments and outcomes in which the units are non-interfering and compliant with the assignments and the outcomes obey Assumption \ref{ass:nonanticipating} is a \textit{potential outcome panel}. For $N = 1$, the potential outcome panel reduces to the potential outcome time series model in \cite{BojinovShephard(19)}. For $T = 1$, the potential outcome panel reduces to the cross-sectional potential outcome model (e.g., \cite{Holland(86)} and \cite{ImbensRubin(15)}).

\subsection{Assignment mechanism assumptions}\label{section:special-cases-treatment-assigment}

We focus on randomized experiments in which the assignment mechanisms for each period only depend on past assignments and observed outcomes, but not on future potential outcomes nor unobserved past potential outcomes.

\begin{definition}\label{ass:assignments} 
    The assignments are \textbf{sequentially randomized} if, for all $t \in [T]$ and any $w_{1:N,1:t-1} \in \mathcal{W}^{N \times (t-1)}$ 
        \begin{equation*}
            \Pr(W_{1:N,t}|W_{1:N,1:t-1}=w_{1:N,1:t-1},{\textbf Y_{1:N,1:T}}) = \Pr(W_{1:N,t}|W_{1:N,1:t-1}=w_{1:N,1:t-1},Y_{1:N,1:t-1}(w_{1:N,1:t-1})).
        \end{equation*}
\end{definition}
\noindent  It is common to focus on sequentially randomized assignments in biostatistics and epidemiology \citep{Robins(86), Murphy(03)}. This is the panel data analogue of an ``unconfounded'' or ``ignorable'' assignment mechanism in the literature on cross-sectional causal inference (as reviewed in Chapter 3 of \cite{ImbensRubin(15)}).\footnote{If the researcher further observes characteristics $X_{i, t}$ that are causally unaffected by the assignments, then the definition of a sequentially randomized assignment mechanism can be modified to additionally condition on past and contemporaneous values of the characteristics $X_{1:N, 1:t}$.} Since future potential outcomes and counterfactual past potential outcomes are unobservable, any feasible assignment mechanism must be sequentially randomized.



An important special case imposes further conditional independence structure across assignments. Let $W_{-i,t} := (W_{1,t},...,W_{i-1,t},W_{i+1,t},...,W_{N,t})$ and $\mathcal{F}_{1:N,t,T}$ be the filtration generated by $W_{1:N,1:t}$ and ${\textbf Y_{1:N,1:T}}$. 

\begin{definition}\label{ass:individualistic}
    The assignments are \textbf{individualistic} for unit $i$ if, for all $t \in [T]$ and any $w_{1:N,1:t-1} \in \mathcal{W}^{N \times (t-1)}$ 
    $$           
        \Pr(W_{i,t}| W_{-i,t}, \mathcal{F}_{1:N,t-1,T}) = \Pr(W_{i,t}|W_{i,1:t-1}=w_{i,1:t-1}, Y_{i,1:t-1}(w_{i,1:t-1})).
    $$
\end{definition}


\noindent An individualistic assignment mechanism further imposes that conditional on its own past assignments and outcomes, the assignment for unit $i$ at time $t$ is independent of the past assignments and outcomes of all other units as well as all other contemporaneous assignments. For example, the Bernoulli assignment mechanism, where  $\Pr(W_{i,t}|W_{-i,t},\mathcal{F}_{1:N,t-1,T}) = \Pr\left(W_{i, t}\right)$  for all $i\in[N]$ and $t\in[T]$, is individualistic. 

\begin{example}\label{ex:adaptive}
    Consider a food delivery firm that is testing the effectiveness of a new pricing policy across ten major U.S. cities \citep{kastelman2018switchback, Sneider2019switchback}. Each city is an experimental unit, and the intervention administers the appropriate pricing policy for a duration of one hour. The outcome is the total revenue generated during each hour of the experiment, $t\in [T]$ and from city $i \in [N]$. The firm wishes to learn the best policy for each city and the best overall policy across all cities. To do so, it may conduct a panel experiment with an individualistic treatment assignment in which the probability a particular pricing policy is administered in a given city over the next hour depends on prior observed revenue in that city in earlier hours of the experiment. 
\end{example}

\begin{remark}
    Many adaptive experimental strategies (such as the one described in Example \ref{ex:adaptive}), in which a series of units are sequentially exposed to random treatments whose probability vary depending on the past observed data, satisfy our individualistic sequentially randomized assignment assumptions \citep[e.g.,][]{robbins1952some,lai1985asymptotically}. Such experiments are widely used by technology companies to quickly discern user preferences in recommendation algorithms \citep{li2010contextual,li2016collaborative} and by academics interested in improving their power against a particular hypothesis \citep{van2008construction}. There has been a growing interest in drawing causal inferences based on the collected data in such adaptive experimental designs \citep{HadadEtAl(21), zhang2020inference}. Since the assignment probabilities are known to the researcher, our results can be viewed as providing finite population techniques for drawing causal conclusions from adaptive experiments. In the special case of our framework where $N = 1$, $t \in [T]$ indexes individuals arriving over time and there no carryover effects, our results in the subsequent section are the finite population analogue of the inference results in \citet{HadadEtAl(21)}.\footnote{The setup with $N=1$ was developed in \cite{BojinovShephard(19)}, but this connection to adaptive experiments has not been previously made.}
    \end{remark}

Our finite population central limit theorems require that the assignment mechanism be individualistic. In a non-individualistic assignment mechanism, the past outcomes of other units may affect the contemporaneous assignment of a given unit, which introduces complex dependence structure across units. A similar difficulty arises in the growing literature on relaxing the non-interference assumptions in cross-sectional experiments, where researchers allow one unit's potential outcomes to depend on another unit's assignments \citep[e.g., see][]{SavjeEtAl(19)}. To derive the asymptotic distribution of causal estimators in such settings, researchers typically require the assignment mechanism to be independent \citep{chin2018central} or at least have only limited dependence structure across units \citep{aronow2017estimating}.  

\subsection{Dynamic causal effects}\label{sect: dynamic causal effects}
A \textit{dynamic causal effect} compares the potential outcomes for unit $i$ at time $t$ along different assignment paths, which we denote by $\tau_{i, t}(w_{i,1:t}, \tilde w_{i,1:t}) := Y_{i, t}(w_{i,1:t}) - Y_{i, t}(\tilde w_{i,1:t})$ for assignment paths $w_{i,1:t}, \tilde w_{i,1:t} \in \mathcal{W}^{t}$. We use these dynamic causal effects to build up causal estimands of interest.

\subsubsection{Lag-$p$ dynamic causal effects and average dynamic causal effects}

Since the number of potential outcomes grows exponentially with the time period $t$, there is a considerable number of possible causal estimands. To make progress, we restrict our attention to a core class, referred to as the \textit{lag-$p$ dynamic causal effects}.

\begin{definition}\label{defn:lagp-dynamic-causal-effect}
    For $0 \leq p < t$ and $\mathbf{w},\tilde{\mathbf{w}} \in \mathcal{W}^{p+1}$, the \textbf{$i,t$-th lag-$p$ dynamic causal effect} is     
    \begin{align*}
        \tau_{i,t}(\mathbf{w}, \tilde {\mathbf{w}};p) := \begin{cases}  \tau_{i, t}(\{w_{i,1:t-p-1}^{obs}, \mathbf{w}\},  \{w_{i,1:t-p-1}^{obs},\tilde{\mathbf{w}}\}) & \text{ if } p<t-1 \\
            \tau_{i,t}(\mathbf{w}, \tilde{\mathbf{w}}) & \text{otherwise}.
        \end{cases}
        \end{align*}
\end{definition}

\noindent The $i,t$-th lag-$p$ dynamic causal effect measures the difference between the outcomes from following assignment path $\mathbf{w}$ from period $t-p$ to $t$ compared to the alternative path $\tilde {\mathbf{w}}$, fixing the assignments for unit $i$ to follow the observed path up to time $t-p-1$.
Generally, when $N>>T$ we recommend setting $p=t-1$, removing the dependence on the observed path.\footnote{In a time series experiment with $N = 1$, \cite{BojinovShephard(19)} introduced defining causal effects that depend on the observed assignment path because most potential outcomes are unobserved since there is only one experimental unit in their setting. In our more general panel experiments setting, an analogous problem arises when $T$ is of a similar order as $N$.}

By further restricting the paths $\mathbf{w}$ and $\tilde{\mathbf{w}}$ to share common features, we obtain the weighted average $i,t$-th lag-$p$ dynamic causal effect.   

\begin{definition}\label{defn:pq-causal-effect}
    For integers $p, q$ satisfying $0 \leq p < t$, $0 < q \leq p+1$, the \textbf{weighted average $i,t$-th lag-$p,q$ dynamic causal effect} is
    \begin{align*}
        \tau_{i,t}^\dagger(\mathbf{w}, \tilde {\mathbf{w}};p,q) := \sum_{\mathbf{v} \in \cW^{p-q+1}} a_{\mathbf{v}} \tau_{i,t}((\mathbf{w}, \mathbf{v}), (\tilde{\mathbf{w}}, \mathbf{v}); p)
        ,
        \end{align*}
        where $\mathbf{w},\tilde{\mathbf{w}} \in \mathcal{W}^{q}$ and $\{a_{\mathbf{v}}\}$ are 
        non-stochastic weights chosen by the researcher that satisfy $\sum_{\mathbf{v} \in \cW^{p-q+1}}a_{\mathbf{v}} = 1$ and $a_\mathbf{v} \geq 0$ for all $\mathbf{v} \in \cW^{p-q+1}$.
\end{definition}
 
\noindent The weighted average $i,t$-th lag-$p,q$ dynamic causal effect summarizes the \textit{ceteris paribus}, average causal effect of switching the assignment path between period $t-p$ and period $t-p+q$ from $\mathbf{w}$ to $\tilde{\mathbf{w}}$ on outcomes at time $t$.\footnote{For a binary assignment, setting $N=q=1$ gives us a special case that was studied in \cite{BojinovShephard(19)}.} In this sense, the weighted average lag-$p, q$ causal effect is a finite-population causal generalization of an impulse response function, which is a common estimand of interest in existing econometric research.\footnote{For time series experiments, \cite{RambachanShephard(20)} show that a particular version of the weighted average lag-$p,1$ causal effect is equivalent to the generalized impulse response function \citep{KoopPesaranPotter(96)}.} Whenever $q=1$, we drop the $q$ from the notation, simply writing $\tau_{i,t}^\dagger(w, \tilde {w};p) := \tau_{i,t}^\dagger(w, \tilde {w};p,1 )$.
 
The main estimands of interest in this paper are averages of the dynamic causal effects that summarize how different assignments impact the experimental units.
\begin{definition}\label{defn:dynamic-estimand}
     For $p < T$ and $\mathbf{w},\tilde{\mathbf{w}} \in \mathcal{W}^{p+1}$, 
     \begin{enumerate}
         \item the \textbf{time-$t$ lag-$p$ average dynamic causal effect} is $\bar{\tau}_{\cdot t}(\mathbf{w}, \tilde{\mathbf{w}};p) := \frac{1}{N} \sum_{i=1}^{N} \tau_{i,t}(\mathbf{w}, \tilde{ \mathbf{w}};p)$.
         
         \item the \textbf{unit-$i$ lag-$p$ average dynamic causal effect} is $\bar{\tau}_{i\cdot}(\mathbf{w}, \tilde{\mathbf{w}};p) := \frac{1}{T-p} \sum_{t=p+1}^{T} \tau_{i,t}(\mathbf{w}, \tilde {\mathbf{w}};p)$.
         
         \item the \textbf{total lag-$p$ average dynamic causal effect} is $\bar{\tau}(\mathbf{w}, \tilde{\mathbf{w}};p) := \frac{1}{N(T-p)} \sum_{t=p+1}^{T} \sum_{i=1}^{N} \tau_{i,t}(\mathbf{w}, \tilde {\mathbf{w}};p)$.
     \end{enumerate}
     
    \noindent These estimands extend to the weighted average $i,t$-th lag-$p$ dynamic causal effect by analogously defining $\bar\tau_{\cdot t}^\dagger(\mathbf{w}, \tilde{\mathbf{w}};p,q)$, $\bar\tau^\dagger_{i\cdot}(\mathbf{w}, \tilde{\mathbf{w}};p,q)$,  and $\bar\tau^\dagger(\mathbf{w}, \tilde{\mathbf{w}};p,q)$.
\end{definition}

We can augment any of the above averages to incorporate non-stochastic weights. For example, we could define $\{c_{i,t}\}_{i=1}^N$ the weights and consider the weighted time-$t$ lag-$p$ average dynamic causal effect $\frac{1}{N} \sum_{i=1}^{N} c_{i,t} \tau_{i,t}(\mathbf{w}, \tilde{ \mathbf{w}};p)$. These weights, for instance, could be used to adjust for different assignment path probabilities up to time $t - p - 1$, which are non-stochastic since the assignment mechanism is known.


\section{Nonparametric estimation and inference}\label{sect:non-param-estimation}

In this section, we develop a nonparametric \cite{HorvitzThompson(52)} type estimator of the $i,t$-th lag-$p$ dynamic causal effects and derive its properties. If the assignment mechanism is individualistic (Definition \ref{ass:individualistic}) and probabilistic (defined below), our proposed estimator is unbiased for the $i,t$-th lag-$p$ dynamic causal effects and its related averages over the assignment mechanism. An appropriately scaled and centered version of our estimator for the average lag-$p$ dynamic causal effects becomes approximately normally distributed as either the number of units or time periods grows large. These limiting results are finite population central limit theorems in the spirit of \cite{Freedman(08)}, and \cite{LiDing(17)}. 

\subsection{Setup: adapted propensity score and probabilistic assignment}
For each $i,t$, and any ${\mathbf w}= (w_1,\dots, w_{p+1}) \in \mathcal{W}^{(p+1)}$, the \textit{adapted propensity score} summarizes the conditional probability of a given assignment path and is given by $p_{i,t-p}({\mathbf w}) := \Pr(W_{i,t-p:t}={\mathbf w}|W_{i,1:t-p-1},Y_{i,1:t}(W_{i,1:t-p-1},{\mathbf w}))$. Even though the assignment mechanism is known, we only observe the outcomes along the realized assignment path $Y_{i,1:t}(w_{i,1:t}^{obs})$, and so it is not possible to compute $p_{i,t-p}({\mathbf w})$ for all assignment paths. However, we can compute the adapted propensity score along the observed assignment path, $p_{i,t-p}( w_{i,t-p:t}^{obs})$ (see Appendix \ref{section: additional theoretical results} for further discussion).

We next assume that the assignment mechanism is \textit{probabilistic}.

\begin{assumption}[Probabilistic Assignment]\label{ass:probtreat1} 
    Consider a potential outcome panel. There exists $C^L, C^U \in (0, 1)$ such that $C^L < p_{i,t-p}({\mathbf w}) < C^U$ for all $i \in [N]$, $t \in [T]$ and ${\mathbf w} \in \mathcal{W}^{(p+1)}$.
    
\end{assumption}

\noindent This is also commonly known as the ``overlap'' or ``common support'' assumption. 

All expectations, denoted by $\E[\cdot]$, are computed with respect to the probabilistic assignment mechanism. We write $\mathcal{F}_{i,t-p-1}$ as the filtration generated by $W_{i,1:t-p-1}$ and $\mathcal{F}_{1:N,t-p-1}$ as the filtration generated by $W_{1:N,1:t-p-1}$. Since we condition on all of the potential outcomes, conditioning on $W_{i,1:t-p-1}$ is the same as conditioning on both $W_{i,1:t-p-1}$ and $Y_{i,1:t-p-1}(W_{i,1:t-p-1})$.


\subsection{Estimation of the $i,t$-th lag-$p$ dynamic causal effect}
For any $\mathbf w, \tilde {\mathbf w} \in \mathcal{W}^{(p+1)}$, the nonparametric estimator of $\tau_{i,t}({\mathbf w}, \tilde{{\mathbf w}};p)$ is
    \begin{equation}
        \hat{\tau}_{i,t}({\mathbf w}, \tilde{{\mathbf w}};p) := \left\{ \frac{Y_{i,t}(w_{i,1:t-p-1}^{obs},{\mathbf w})\mathbbm{1}(w^{obs}_{i,t-p:t}={\mathbf w})}{p_{i,t-p}({\mathbf w})} - \frac{Y_{i,t}(w_{i,1:t-p-1}^{obs},\tilde{{\mathbf w}})\mathbbm{1}(w^{obs}_{i,t-p:t}=\tilde{{\mathbf w}})}{p_{i,t-p}(\tilde{{\mathbf w}})} \right\},
    \end{equation}
where $\mathbbm{1}\{A\}$ is an indicator function for an event $A$. Under individualistic assignments (Definition \ref{ass:individualistic}), the estimator simplifies to  $\hat{\tau}_{i,t}({\mathbf w}, \tilde{{\mathbf w}};p)  = \frac{y_{i,t}^{obs}\{ \mathbbm{1}(w^{obs}_{i,t-p:t}={\mathbf w}) - \mathbbm{1}(w^{obs}_{i,t-p:t}=\tilde{{\mathbf w}})\}}{p_{i,t-p}(w_{i,t-p:t}^{obs})}$.

\begin{theorem}\label{thm:thmMDs}
Consider a potential outcome panel with an assignment mechanism that is individualistic (Definition \ref{ass:individualistic}) and probabilistic (Assumption \ref{ass:probtreat1}). For any ${\mathbf w},\tilde{{\mathbf w}} \in \mathcal{W}^{(p+1)}$, 
    \begin{equation}\label{eqn:expectU}
        \E[\hat{\tau}_{i,t}({\mathbf w},\tilde{{\mathbf w}};p) \,|\, \mathcal{F}_{i,t-p-1}]=\tau_{i,t}({\mathbf w},\tilde{{\mathbf w}};p), \\
    \end{equation}
    \begin{equation}\label{eqn:varU}
         Var(\hat{\tau}_{i,t}({\mathbf w},\tilde{{\mathbf w}};p)|\mathcal{F}_{i,t-p-1}) = \gamma^2_{i,t}({\mathbf w},\tilde{{\mathbf w}}) - \tau_{i,t}({\mathbf w}, \tilde{{\mathbf w}};p)^2 := \sigma_{i,t}^2,
    \end{equation}
    where 
    \begin{equation}
        \gamma^2_{i,t}({\mathbf w},\tilde{{\mathbf w}};p) = \frac{Y_{i,t}(w_{i,1:t-p-1}^{obs},{\mathbf w})^2}{p_{i,t-p}({\mathbf w})} + \frac{Y_{i,t}(w_{i,1:t-p-1}^{obs},\tilde{{\mathbf w}})^2}{p_{i,t-p}(\tilde{{\mathbf w}})}.
    \end{equation}
  Further, for distinct ${\mathbf w},\tilde{{\mathbf w}},\bar{\mathbf w},\hat{\mathbf w}\in \mathcal{W}^{(p+1)}$
$$
Cov(\hat{\tau}_{i,t} ({\mathbf w},\tilde{{\mathbf w}};p),
\hat{\tau}_{i,t}(\bar{\mathbf w},\hat{\mathbf w};p)|\mathcal{F}_{i,t-p-1}) = - \tau_{i,t}({\mathbf w}, \tilde{{\mathbf w}};p)\tau_{i,t}(\bar{\mathbf w}, \hat{\mathbf w};p).
$$  
Finally, $\hat{\tau}_{i,t}({\mathbf w},\tilde{{\mathbf w}})$ and $\hat{\tau}_{j,t}({\mathbf w},\tilde{{\mathbf w}})$ are independent for $i \ne j$ conditional on $\mathcal{F}_{1:N,t-p-1}$.
\end{theorem}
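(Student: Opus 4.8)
The plan is to verify each claim by direct computation, working conditionally on $\mathcal{F}_{i,t-p-1}$ (equivalently, conditioning on $W_{i,1:t-p-1}$ and the potential outcomes, which are treated as fixed). Throughout I will write $Y(\mathbf{w}) := Y_{i,t}(w_{i,1:t-p-1}^{obs},\mathbf{w})$ and $p(\mathbf{w}) := p_{i,t-p}(\mathbf{w})$ for brevity, and abbreviate $\mathbbm{1}(w^{obs}_{i,t-p:t}=\mathbf{w})$ as $\mathbbm{1}_{\mathbf{w}}$. Note that, conditional on $\mathcal{F}_{i,t-p-1}$, the adapted propensity score $p(\mathbf{w})$ is a known constant (it depends only on $W_{i,1:t-p-1}$ and the potential outcomes along paths starting from $w_{i,1:t-p-1}^{obs}$), and $\E[\mathbbm{1}_{\mathbf{w}} \mid \mathcal{F}_{i,t-p-1}] = p(\mathbf{w})$ by the definition of the adapted propensity score. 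The events $\{w^{obs}_{i,t-p:t}=\mathbf{w}\}$ for distinct $\mathbf{w}$ are mutually exclusive, so $\mathbbm{1}_{\mathbf{w}}\mathbbm{1}_{\tilde{\mathbf{w}}}=0$ for $\mathbf{w}\neq\tilde{\mathbf{w}}$ and $\mathbbm{1}_{\mathbf{w}}^2=\mathbbm{1}_{\mathbf{w}}$.

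For the unbiasedness claim \eqref{eqn:expectU}, I apply $\E[\cdot\mid\mathcal{F}_{i,t-p-1}]$ termwise to $\hat\tau_{i,t}(\mathbf{w},\tilde{\mathbf{w}};p) = Y(\mathbf{w})\mathbbm{1}_{\mathbf{w}}/p(\mathbf{w}) - Y(\tilde{\mathbf{w}})\mathbbm{1}_{\tilde{\mathbf{w}}}/p(\tilde{\mathbf{w}})$; each term collapses to $Y(\mathbf{w})$ resp.\ $Y(\tilde{\mathbf{w}})$, giving $\tau_{i,t}(\mathbf{w},\tilde{\mathbf{w}};p)$. For the variance \eqref{eqn:varU}, I first compute the conditional second moment: expanding the square, the cross term vanishes by mutual exclusivity, and $\E[\mathbbm{1}_{\mathbf{w}}^2\mid\mathcal{F}_{i,t-p-1}]=p(\mathbf{w})$ yields $\E[\hat\tau_{i,t}(\mathbf{w},\tilde{\mathbf{w}};p)^2\mid\mathcal{F}_{i,t-p-1}] = Y(\mathbf{w})^2/p(\mathbf{w}) + Y(\tilde{\mathbf{w}})^2/p(\tilde{\mathbf{w}}) = \gamma^2_{i,t}(\mathbf{w},\tilde{\mathbf{w}};p)$; subtracting the square of the conditional mean gives the stated formula. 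The covariance formula for distinct $\mathbf{w},\tilde{\mathbf{w}},\bar{\mathbf{w}},\hat{\mathbf{w}}$ is analogous: every cross product of indicators among four distinct paths is zero, so $\E[\hat\tau_{i,t}(\mathbf{w},\tilde{\mathbf{w}};p)\hat\tau_{i,t}(\bar{\mathbf{w}},\hat{\mathbf{w}};p)\mid\mathcal{F}_{i,t-p-1}]=0$, and the covariance is then just minus the product of the conditional means.

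For the final independence claim, the key point is that, conditional on $\mathcal{F}_{1:N,t-p-1}$, the estimator $\hat\tau_{i,t}(\mathbf{w},\tilde{\mathbf{w}};p)$ is a deterministic function of the potential outcomes (fixed) and the assignment increments $W_{i,t-p:t}$, with coefficients $1/p_{i,t-p}(\cdot)$ that are themselves $\mathcal{F}_{1:N,t-p-1}$-measurable under individualistic assignment (Assumption \ref{ass:individualistic}) — indeed Lemma \ref{lemma:pred} factorizes $p_{i,t-p}(\mathbf{w})$ into terms depending only on unit $i$'s own history and the path $\mathbf{w}$. So it suffices to show that $W_{i,t-p:t}$ and $W_{j,t-p:t}$ are conditionally independent given $\mathcal{F}_{1:N,t-p-1}$ for $i\neq j$. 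This follows by iterating the individualistic assignment assumption over the $p+1$ periods: at each step $t-p+s$, conditional on $\mathcal{F}_{1:N,t-p+s-1}$, the assignment $W_{i,t-p+s}$ depends only on unit $i$'s own past, so the joint conditional law of the time-$(t-p+s)$ cross-section factors across units; chaining these factorizations via the prediction decomposition yields conditional independence of the whole blocks $W_{i,t-p:t}$ across $i$. The main obstacle — and the step requiring the most care — is precisely this last argument: one must be careful that individualistic assignment is stated period-by-period, so the product-form conditional independence of the multi-period blocks $W_{i,t-p:t}$ across units is not immediate and must be built up inductively using the tower property, keeping track of which $\sigma$-algebra each conditioning step uses.
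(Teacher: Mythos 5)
Your proof is correct and follows essentially the same route as the paper's: the paper packages the computation through auxiliary variables $Z_{i,t-p:t}(\mathbf{w}) = \mathbbm{1}(W_{i,t-p:t}=\mathbf{w})/p_{i,t-p}(\mathbf{w})$ and their conditional means, variances, and covariances, but this is the same direct moment calculation you carry out with the indicators, resting on $\E[\mathbbm{1}(W_{i,t-p:t}=\mathbf{w})\mid\mathcal{F}_{i,t-p-1}]=p_{i,t-p}(\mathbf{w})$ and mutual exclusivity of the path events. Your handling of the cross-unit conditional independence is, if anything, more careful than the paper's, which simply asserts it from the individualistic assignment assumption.
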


Theorem \ref{thm:thmMDs} states that for every $i,t$, the error in estimating $\tau_{i,t}({\mathbf w}, \tilde{{\mathbf w}};p)$ is a martingale difference sequence through time and conditionally independent across units. The variance of $\hat{\tau}_{i,t}({\mathbf w},\tilde{{\mathbf w}};p)$ depends upon the potential outcomes under both the treatment and counterfactual and is generally not estimable. However, its variance is bounded from above by $\gamma_{i,t}^2({\mathbf w},\tilde{{\mathbf w}};p)$, which we can estimate by $\hat{\gamma}^2_{i,t}({\mathbf w}, \tilde{{\mathbf w}};p)= \frac{(y_{i,t}^{obs})^2\{\mathbbm{1}(w^{obs}_{i,t-p:t}={\mathbf w}) + \mathbbm{1}(w^{obs}_{i,t-p:t}=\tilde{{\mathbf w}})\}}{{p_{i,t-p}(w_{i,t-p:t}^{obs})^2}}$. The following proposition establishes that $\hat{\gamma}^2_{i,t}({\mathbf w}, \tilde{{\mathbf w}};p)$ is an unbiased estimator of $\gamma^2_{i,t}({\mathbf w}, \tilde{{\mathbf w}};p)$ and its error in estimating $\gamma^2_{i,t}({\mathbf w}, \tilde{{\mathbf w}};p)$ is also a martingale difference sequence through time and conditionally independent across units.

\begin{proposition}\label{prop:varbounds}
    Under the setup of Theorem \ref{thm:thmMDs}, $ \E[\hat{\gamma}^2_{i,t}({\mathbf w}, \tilde{{\mathbf w}};p)| \mathcal{F}_{i,t-p-1}] = \gamma^2_{i,t}({\mathbf w}, \tilde{{\mathbf w}};p)$. Additionally, $\hat{\gamma}^2_{i,t}({\mathbf w}, \tilde{{\mathbf w}};p)$ and $\hat{\gamma}^2_{j,t}({\mathbf w}, \tilde{{\mathbf w}};p)$ are independent for $i \neq j$ conditional on $F_{1:N, t-p-1}$.
\end{proposition}

The variance bound $\gamma^2_{i,t}({\mathbf w}, \tilde{{\mathbf w}};p)$ is different from the typical Neyman variance bound, derived under the assumption of a completely randomized experiment \citep[Chapter~5]{ImbensRubin(15)}. In a completely randomized experiment, there is a negative correlation between any two units' assignments since the total number of units assigned to each treatment is fixed. In our setting, all units' assignments are conditionally independent under individualistic assignments, precluding us from exploiting the negative correlation in deriving a bound.

\begin{remark}\label{remark: HT weighted average causal effect} 
Since the weighted average $i,t$-th lag-$p,q$ dynamic causal effects (Definition \ref{defn:pq-causal-effect}) are linear combinations of the $i,t$-th lag-$p$ dynamic causal effects, we can directly apply Theorem \ref{thm:thmMDs} and Proposition \ref{prop:varbounds}. We provide the details for the case when $q=1$. 

For $w, \tilde{w} \in \mathcal{W}$ and ${\mathbf v} \in \mathcal{W}^p$, the nonparametric estimator of $\tau_{i,t}^{\dagger}(w, \tilde{w}; p)$ is 
$$
\hat{{\tau}}_{i,t}^\dagger (w, \tilde{w};p) 
= \sum_{{\mathbf v} \in W^p} a_{\mathbf v}\left\{ \frac{Y_{i,t}(w_{i,1:t-p-1}^{obs}, w,{\mathbf v})\mathbbm{1}(w^{obs}_{i,t-p:t}=(w,{\mathbf v}))}{p_{i,t-p}({w, {\mathbf v}})} - \frac{Y_{i,t}(w_{i,1:t-p-1}^{obs},\tilde{w},{\mathbf v})\mathbbm{1}(w^{obs}_{i,t-p:t}=(\tilde{w}, {\mathbf v}))}{p_{i,t-p}(\tilde{w},{\mathbf v})} \right\}.
$$
Under an individualistic assignment mechanism, this estimator simplifies to $ \hat{\tau}_{i,t}^\dagger(w, \tilde{w};p) = \frac{a_{ w^{obs}_{i,t-p+1:t}} y_{i,t}^{obs}\{ \mathbbm{1}(w^{obs}_{i,t-p}=w)) - \mathbbm{1}(w^{obs}_{i,t-p}=\tilde{w}) \}}{p_{i,t-p}({w}^{obs}_{i,t-p:t})}$. This estimator is unbiased over the randomization distribution, and its variance can be bounded from above. For uniform weights, the rest of the generalizations follow immediately by noticing that we can replace all instances of $\mathbf w$ and $\tilde{\mathbf w}$ with $(\mathbf w,{\mathbf v})$ and $(\tilde{\mathbf w}, {\mathbf v})$. 
\end{remark}

\subsection{Estimation of lag-$p$ average causal effects}

The martingale difference properties of the nonparametric estimator means that the averaged plug-in estimators 
    \begin{align}
        & \hat{\bar{\tau}}_{\cdot t}({\mathbf w}, \tilde{\mathbf w};p) := \frac{1}{N} \sum_{i=1}^N \hat{\tau}_{i,t}({\mathbf w},\tilde{{\mathbf w}};p) \\
        & \hat{\bar{\tau}}_{i \cdot}({\mathbf w}, \tilde{\mathbf w};p) := \frac{1}{(T-p)} \sum_{t=p+1}^T \hat{\tau}_{i,t}({\mathbf w},\tilde{{\mathbf w}};p) \\
        & \hat{\bar{\tau}}({\mathbf w}, \tilde{\mathbf w};p) := \frac{1}{N(T-p)} \sum_{i=1}^N \sum_{t=p+1}^T \hat{\tau}_{i,t}({\mathbf w},\tilde{{\mathbf w}};p)
    \end{align}
are also unbiased for the average causal estimands $\bar{\tau}_{\cdot t}({\mathbf w}, \tilde{\mathbf w};p)$, $\bar{\tau}_{i \cdot }({\mathbf w}, \tilde{\mathbf w};p)$, and $\bar{\tau}({\mathbf w}, \tilde{\mathbf w};p)$, respectively. 
We next derive the limiting distribution of appropriately scaled and centered versions of these averaged estimators.

\begin{theorem}\label{thm:clts}
Consider a potential outcome panel with an individualistic (Definition \ref{ass:individualistic}) and probabilistic assignment mechanism (Assumption \ref{ass:probtreat1}). Further assume that the potential outcomes are bounded.\footnote{Assuming the potential outcomes are bounded is a common simplifying assumption made in deriving finite population central limit theorems. As discussed in \cite{LiDing(17)}, this assumption can often be replaced by a finite-population analogue of the Lindeberg condition in analyses of cross-sectional, randomized experiments.} Then, for any ${\mathbf w},\tilde{{\mathbf w}} \in \mathcal{W}^{(p+1)}$,
    \begin{align*}
        & \frac{\sqrt{N}\{\hat{\bar{\tau}}_{\cdot t}({\mathbf w}, \tilde {\mathbf w};p) - \bar{\tau}_{\cdot t}({\mathbf w}, \tilde {\mathbf w};p) \}}{\sigma_{\cdot t}} \xrightarrow{d} N(0,1) \quad \mbox{as } N \rightarrow \infty, \\ 
        & \frac{\sqrt{T-p}\{\hat{\bar{\tau}}_{i \cdot}({\mathbf w}, \tilde {\mathbf w};p) - \bar{\tau}_{i \cdot }({\mathbf w}, \tilde {\mathbf w};p) \}}{\sigma_{i\cdot}} \xrightarrow{d} N(0,1) \quad \mbox{as } T \rightarrow \infty, \\
        & \frac{\sqrt{N(T-p)}\{\hat{\bar{\tau}}({\mathbf w}, \tilde {\mathbf w};p) - \bar{\tau}({\mathbf w}, \tilde {\mathbf w};p) \}}{\sigma} \xrightarrow{d} N(0,1) \quad \mbox{as } NT \rightarrow \infty,
    \end{align*}
    where $\sigma_{\cdot t}$, $\sigma_{i \cdot}$, and $\sigma$ are the square root of the appropriate averages of $\sigma_{i,t}^2$, defined in \eqref{eqn:varU}.
\end{theorem}

Likewise, for bounded potential outcomes with an individualistic and probabilistic assignment mechanism, the scaled variances are
\begin{align}
    & N\times Var(\hat{\bar{\tau}}_{\cdot t}(w, \tilde w;p)|\mathcal{F}_{1:N,t-p-1}) = \mathbb{E}\left[\sigma_{\cdot t}^2 \,|\, \mathcal{F}_{1:N, t-p-1} \right], \\
    & (T-p)\times Var(\hat{\bar{\tau}}_{i \cdot}(w, \tilde w;p)|\mathcal{F}_{i,0}) =  \mathbb{E}[\sigma_{i\cdot}^2|\mathcal{F}_{i,0}], \\
    & N(T-p)\times Var(\hat{\bar{\tau}}(w, \tilde w;p)|\mathcal{F}_{1:N,0}) =   \mathbb{E}[\sigma^2|\mathcal{F}_{1:N,0}].
\end{align}
\noindent Following the same logic as earlier, we can establish unbiased and consistent estimators of the variance bounds of the averaged estimators.
\begin{proposition}\label{lemma: cons_var}
  Under the setup of Theorem \ref{thm:clts}, for any ${\mathbf w},\tilde{{\mathbf w}} \in \mathcal{W}^{(p+1)}$,
    \begin{align*}
        & \E\left[\left( \frac{1}{N} \sum_{i=1}^N \hat{\gamma}_{i,t}^2({\mathbf w},\tilde{{\mathbf w}};p) \right) \,|\, \cF_{1:N, t-p-1} \right] = \frac{1}{N} \sumi \gamma_{i,t}^2({\mathbf w},\tilde{{\mathbf w}};p), \\
        & \E\left[ \left( \frac{1}{(T-p)} \sum_{t=p+1}^T \hat{\gamma}_{i,t}^2({\mathbf w},\tilde{{\mathbf w}};p) - \frac{1}{(T-p)} \sum_{t=p+1}^T \gamma_{i,t}^2({\mathbf w},\tilde{{\mathbf w}};p) \right) \,|\, \mathcal{F}_{i,0} \right] = 0,  \\
        & \E\left[ \left( \frac{1}{N(T-p)} \sum_{i=1}^N \sum_{t=p+1}^T \hat{\gamma}_{i,t}^2({\mathbf w},\tilde{{\mathbf w}};p) - \frac{1}{N(T-p)} \sum_{i=1}^N \sum_{t=p+1}^T \gamma_{i,t}^2({\mathbf w},\tilde{{\mathbf w}};p) \right) \,|\, \mathcal{F}_{1:N,0} \right] = 0.
    \end{align*}
   Moreover, 
    \begin{align*}
        & \frac{1}{N} \sum_{i=1}^N \hat{\gamma}_{i,t}^2({\mathbf w},\tilde{{\mathbf w}};p) - \frac{1}{N} \sumi \gamma_{i,t}^2({\mathbf w},\tilde{{\mathbf w}};p) \xrightarrow{p} 0 \mbox{ as } N \rightarrow \infty, \\
        & \frac{1}{(T-p)} \sum_{t=p+1}^T \hat{\gamma}_{i,t}^2({\mathbf w},\tilde{{\mathbf w}};p) - \frac{1}{(T-p)} \sum_{t=p+1}^T \gamma_{i,t}^2({\mathbf w},\tilde{{\mathbf w}};p) \xrightarrow{p} 0 \mbox{ as } T \rightarrow \infty, \\
        & \frac{1}{N(T-p)} \sum_{i=1}^N \sum_{t=p+1}^T \hat{\gamma}_{i,t}^2({\mathbf w},\tilde{{\mathbf w}};p) - \frac{1}{N(T-p)} \sum_{i=1}^N \sum_{t=p+1}^T \gamma_{i,t}^2({\mathbf w},\tilde{{\mathbf w}};p) \xrightarrow{p} 0 \mbox{ as } NT \rightarrow \infty.
    \end{align*}
\end{proposition}

\noindent Proposition \ref{lemma: cons_var} shows that increasing the lag $p$ increases our estimator's variance, highlighting an important trade-off: increasing the lag $p$ reduces the dependence on the observed treatment path at the cost of increased variance. Striking the correct balance depends on the context and the design of the experiment.

Theorem \ref{thm:clts} and Proposition \ref{lemma: cons_var} naturally extend to the weighted average $i,t$-th lag-$p,q$ dynamic causal effect from Definition \ref{defn:pq-causal-effect} by using the estimator developed in Remark \ref{remark: HT weighted average causal effect}. 

\subsection{Confidence intervals and testing for lag-$p$ average causal effects}\label{section: testing}

Combining the variance bound estimators in Proposition \ref{lemma: cons_var} with the central limit theorems in Theorem \ref{thm:clts}, we can carry out conservative inference for $\bar{\tau}_{\cdot t}({\mathbf w},\tilde{{\mathbf w}};p)$, $\bar{\tau}_{i \cdot }({\mathbf w},\tilde{{\mathbf w}};p)$ and $\bar{\tau}({\mathbf w},\tilde{{\mathbf w}};p)$. Such techniques can be used to construct conservative confidence intervals or tests of weak null hypotheses that the average dynamic causal effects are zero. For example, these may be $H_0: \bar{\tau}_{i \cdot }({\mathbf w},\tilde{{\mathbf w}};p)=0$ for $i=3$ or $H_0: \bar{\tau}_{\cdot t}({\mathbf w},\tilde{{\mathbf w}};p)=0$ for $t=4$. 

Alternatively, we may construct exact tests for sharp null hypotheses. An example of such a sharp null hypothesis is $H_0: \bar{\tau}_{i,t}({\mathbf w},\tilde{{\mathbf w}};p)=0,$ for all, ${\mathbf w}, \tilde{{\mathbf w}}$, $i\in[N]$ and specific $t=4$. Since all potential outcomes are known under such sharp null hypotheses, we can simulate the assignment path $W_{i,t-p:t}|W_{i,1:t-p-1}^{obs},y_{i,1:t-p-1}^{obs}$ for each unit $i$ and compute $\hat{\tau}_{i,t}({\mathbf w},\tilde{{\mathbf w}};p)$ at each draw. Therefore, we may simulate the exact distribution of any test statistics under the sharp null hypothesis and compute an exact $p$-value for the observed test statistic. These randomization tests only require us to be able to simulate from the randomization distribution of the assignments paths. Therefore, such randomization tests may also be conducted if the treatment assignment mechanism is sequentially randomized (Definition \ref{ass:assignments}).

\section{Estimation in a linear potential outcome panel}\label{section:estimation-linear}

This section explore the properties of commonly used linear estimators, such as the canonical unit fixed-effects estimator and two-way fixed effects estimator, under the potential outcomes panel model. We establish that if there are dynamic causal effects and serial correlation in the treatment assignment mechanism, both the unit fixed-effects estimator and the two-way fixed effects estimator are asymptotically biased for a weighted average of contemporaneous causal effects. In Appendix \ref{section: additional theoretical results}, we consider analyzing the panel experiment as a repeated cross-section, estimating a separate linear model in each period $t$.

Throughout this section, we further assume that the potential outcomes themselves are a linear function of the assignment path.

\begin{definition}\label{defn:linearPanel}
A {\bf linear potential outcome panel} is a potential outcome panel where
$$
Y_{i,t}(w_{i,1:t}) = \beta_{i,t,0} w_{i,t} + \hdots
        + \beta_{i,t,t-1} w_{i,1}
        + \epsilon_{i,t} \quad \forall t \in [T] \text{ and } i \in [N],
$$
and the non-stochastic coefficients $\beta_{i,t,0:t-1}$ and non-stochastic error $\epsilon_{i,t}$ do not depend upon treatments. 
\end{definition}

We adapt notation used in \cite{Wooldridge(05)} for analyzing panel fixed effects models. For a generic random variable $A_{i, t}$, we compactly write the within-period transformed variable as $\dot{A}_{i, t} = A_{i, t} - \bar{A}_{\cdot t}$ and the within-unit transformed variable as $\widecheck{A}_{i, t} = A_{i, t} - \bar{A}_{i \cdot}$. The within-unit and within-period transformed variable is $\dot{\widecheck{A}}_{i,t} = (A_{i, t} - \bar{A}) - (\bar{A}_{\cdot t} - \bar{A}) - (\bar{A}_{i \cdot} - \bar{A})$.

\subsection{Interpreting the unit fixed effects estimator}
Our next result characterizes the finite population probability limit of the unit fixed effects estimator, $\betahat_{UFE} = \sumi \sumt \widecheck Y_{i, t} \widecheck W_{i, t} / \sumi \sumt \widecheck W_{i, t}^2$, under the linear potential outcome panel model. Define $Cov(\widecheck W_{i, t}, \widecheck W_{i, s} | \cF_{1:N, 0, T}) := \widecheck \sigma_{W, i, t, s}$ and $\widecheck{\mu}_{i, t} := \mathbb{E}\left[ \widecheck W_{i, t} | \cF_{1:N, 0,  T} \right]$.

\begin{proposition}\label{prop:unit-fe-additive-general-case}
    Assume a linear potential outcome panel and that the assignment mechanism is individualistic (Definition \ref{ass:individualistic}) with $Var(\widecheck W_{i, t} | \cF_{1:N, 0, T}) := \widecheck \sigma^2_{W, i, t} < \infty$ for each $i \in [N]$, $t \in [T]$. Further assume that as $N \rightarrow \infty$, the following sequences converge non-stochastically:
        \begin{align*}
            & N^{-1} \sum_{i=1}^{N} \beta_{i, t, s} \widecheck \sigma_{W, i, t, s} \rightarrow \widecheck \kappa_{W, \beta, t, s} \quad \forall t \in [T] \, \& \, s \leq t, \\
            & N^{-1} \sumi \widecheck \sigma_{W, i, t}^2 \rightarrow \widecheck \sigma^2_{W, t} \quad \forall t \in [T], \\
            & N^{-1} \sum_{i=1}^{N} \widecheck{Y}_{i,t}({\bf 0}) \widecheck \mu_{i, t} \rightarrow \widecheck \delta_{t} \quad \forall t \in [T]. 
        \end{align*}
    Then, as $N \rightarrow \infty$, 
        \begin{align*}
            \betahat_{UFE} \xrightarrow{p} \frac{\sumt \widecheck \kappa_{W, \beta, t, t}}{\sumt \widecheck \sigma^2_{W, t}} + \frac{\sumt \sum_{s=1}^{t-1} \widecheck \kappa_{W, \beta, t, s}}{\sumt \widecheck \sigma^2_{W, t}} + \frac{\sumt \widecheck \delta_t}{\sumt \widecheck \sigma^2_{W, t}}.
        \end{align*}
\end{proposition}

Proposition \ref{prop:unit-fe-additive-general-case} decomposes the finite population probability limit of the unit fixed effects estimator into three terms. The first term is a weighted average of contemporaneous dynamic causal coefficients, describing how the contemporaneous causal coefficients covary with the within-unit transformed assignments over the assignment mechanism. The second term captures how past causal coefficients covary with the within-unit transformed treatments and arises due to the presence of dynamic causal effects. The last term is an additional error that arises due to the possible relationship between the demeaned counterfactual $\widecheck{Y}_{i, t}(\textbf{0})$ and the average, demeaned treatment assignment. A sufficient condition for the last term to be equal zero is for the counterfactual outcomes to be time invariant $Y_{i,t}({\mathbf 0}) = \alpha_{i}$, in which case $\widecheck{Y}_{i, t}(\textbf{0}) = 0$ for all $i \in [N], t \in [T]$. Therefore, the last term is zero whenever unit fixed effects are correctly summarize the variation in the ``control-only'' counterfactual outcomes across units and time.

Proposition \ref{prop:unit-fe-additive-general-case} is related to yet crucially different from results in \cite{ImaiKim(19)}, which show that the unit fixed effects estimator recover a weighted average of unit-specific contemporaneous causal effects if there are no carryover effects. In contrast, we establish that the unit fixed effects estimator does \textit{not} recover a weighted average of unit-specific contemporaneous causal effects in the presence of carryover effects and persistence in the treatment path assignment mechanism. 

\begin{example}
    Consider a linear outcome panel model with, for all $t > 1$, $Y_{i, t}(w_{i, 1:t}) = \beta_{0} w_{i, t} + \beta_{1} w_{i, t-1} + \epsilon_{i, t}$ and $Y_{i, 1}(w_{i, 1}) = \beta_{0} w_{i, 1} + \epsilon_{i, 1}$ for $t = 1$. Assume $Var(\widecheck W_{i, t} | \cF_{1:N, 0, T}) = \widecheck \sigma^2_{W, t}$ for all $t$ and $Cov(\widecheck W_{i, t}, \widecheck W_{i, t-1} | \cF_{1:N, 0, T}) = \widecheck \sigma_{W, t, t-1}$ for all $t > 1$ are constant across units. In this case, Proposition \ref{prop:unit-fe-additive-general-case} implies
        \begin{equation*}
            \hat{\beta}_{UFE} \xrightarrow{p} \beta_{0} + \beta_{1} \frac{ \sum_{t = 2}^{T} \widecheck{\sigma}_{W, t, t-1} }{ \sum_{t=1}^{T} \widecheck{\sigma}^2_{W, t} } + \frac{\sum_{t=1}^{T} \widecheck{\delta}_t }{ \sum_{t=1}^{T} \widecheck{\sigma}^2_{W, t}  }.
        \end{equation*}
    The unit fixed effects estimator converges in probability to the contemporaneous dynamic causal coefficient $\beta_0$ plus a bias that depends on two terms. The first component of the bias depends on the lag-$1$ dynamic causal coefficient and the covariance between assignments across periods.
\end{example}

\subsection{Interpreting the two-way fixed effects estimator}

Consider the two-way fixed-effect estimator is $\hat{\beta}_{TWFE} = \sum_{i=1}^N \sum_{t=1}^T \dot{\widecheck Y}_{i,t} \dot{\widecheck W}_{i,t}/\sum_{i=1}^N \sum_{t=1}^T \dot{\widecheck W}_{i,t}^2.$ Define $E(\dot{\widecheck W}_{i,t} | \mathcal{F}_{1:N, 0,T}) := \dot{\widecheck \mu}_{i, t}$ and $Cov(\doubleW_{i, t}, \dot{\widecheck W}_{i,s}) := \dot{\widecheck \sigma}_{W, i, t, s}$.

\begin{proposition}\label{prop:two-way-fe-additive-general-case}
    Assume a linear potential outcome panel and assume that the assignment mechanism is individualistic and $Var(\doubleW_{i, t} | \cF_{1:N, 0, T}) := \dot{\widecheck \sigma}^2_{W, i, t} < \infty$ for each $i \in [N]$, $t \in [T]$. Further assume that as $N \rightarrow \infty$, the following sequences converge non-stochastically
    \begin{align*}
        & N^{-1} \sum_{i=1}^{N} \beta_{i, t, s} \dot{\widecheck \sigma}_{W, i, t, s} \rightarrow \dot{\widecheck \kappa}_{W, \beta, t, s} \quad \forall t \in [T] \, \& \, s \leq t, \\
        & N^{-1} \sumi \dot{\widecheck \sigma}_{W, i, t}^2 \rightarrow \dot{\widecheck \sigma}^2_{W, t} \quad \forall t \in [T], \\
        & N^{-1} \sum_{i=1}^{N} \dot{\widecheck Y}_{i, t}(\textbf{0}) \dot{\widecheck \mu}_{i, t} \rightarrow \dot{\widecheck \delta}_{t} \quad \forall t \in [T]. 
    \end{align*}
    Then, as $N \rightarrow \infty$, 
        \begin{align*}
            \betahat_{TWFE} \xrightarrow{p} \frac{ \sumt \dot{\widecheck \kappa}_{W, \beta, t, t}}{\sumt \dot{\widecheck \sigma}^2_{W, t}} + \frac{\sumt \sum_{s=1}^{t-1} \dot{\widecheck \kappa}_{W,\beta, t, s}}{\sumt \dot{\widecheck \sigma}^2_{W, t}} + \frac{\sumt \dot{\widecheck \delta}_t}{\sumt \dot{\widecheck \sigma}^2_{W, t}}
        \end{align*}
\end{proposition}

Similar to Proposition \ref{prop:unit-fe-additive-general-case}, the two-way fixed effects estimand can be decomposed into three components under the linear potential outcome panel model, where the interpretation of each component is similar to the unit fixed effects estimator. A simple sufficient condition for the last term to equal zero is for counterfactual outcome to be additively separable into a time-specific and unit-specific effect, $Y_{i,t}(\mathbf{0}) = \alpha_i + \lambda_{t}$ for all $i \in [N], t \in [T]$. Therefore, the last term is zero whenever unit and time fixed effects are correctly summarize the variation in the ``control-only'' counterfactual outcomes across units and time.

An active literature in econometrics analyzes the two-way fixed effects estimator under various identifying assumptions. For example, \cite{ChaisemartinDHaultfoeuille(20)} rule out carryover effects and decompose the two-way fixed effects estimand under a ``common-trends'' assumption that restricts how the potential outcomes under control evolve over time across groups. \cite{AbrahamSun(20)} decompose the two-way fixed effects estimand in staggered designs (meaning units receive the treatments at some period and forever after) under a common-trends assumption. \cite{BoryusakJaravel(17)}, \cite{AtheyImbens(18)} and \cite{GoodmanBacon(18)} also provide a decomposition of the two-way fixed effects estimand in staggered designs. Proposition \ref{prop:two-way-fe-additive-general-case} provides a decomposition in panel experiments without restrictions on the carryover effects, whereas these existing decompositions are useful in observational settings where other identifying assumptions may be plausible.

\section{Simulation Study}\label{section:simulation}
We conduct a simulation study to investigate the finite sample properties of the asymptotic results presented in Section \ref{sect:non-param-estimation}. These simulations show that the finite population central limit theorems (Theorem \ref{thm:clts}) hold for a moderate number of treatment periods and experimental units. The proposed conservative tests for the weak null of no average dynamic causal effects have correct size and reasonable rejection rates against a range of alternatives.

\subsection{Simulation design} 
We generate the potential outcomes for the panel experiment using an autoregressive model,
    \begin{equation}\label{eqn: simulation design}
        Y_{i,t} = \phi_{i, t, 1} Y_{i, t-1}(w_{i, 1:t-1}) + \hdots, \phi_{i, t, t-1} Y_{i, 1}(w_{i, 1}) + \beta_{i, t, 0} w_{i, t} + \hdots + \beta_{i, t, t-1} w_{i, 1} + \epsilon_{i, t} \quad \forall t > 1,  
    \end{equation}
$Y_{i, 1}(w_{i, 1}) = \beta_{i, 1, 0} w_{i, 1} + \epsilon_{i, 1}$ with $\phi_{i, t, 1} = \phi$, $\phi_{i, t, s} = 0$ for $s > 1$, $\beta_{i, t, 0} = \beta$ and $\beta_{i, t, s} = 0$ for $s > 0$. We vary the choice $\phi$, which governs the persistence of the process, and $\beta$, which governs the size of the contemporaneous causal effects. We vary the probability of treatment $p_{i, t-p}(w) = p(w)$ as well as the distribution of the errors $\epsilon_{i,t}$, which we either sample from a standard normal or Cauchy distribution.

We document the performance of our nonparametric estimators over the randomization distribution, meaning that we first generate the potential outcomes ${\textbf Y_{1:N, 1:T} }$ and simulate over different assignment panels $W_{1:N, 1:T}$, holding the potential outcomes fixed. In the main text, we focus on evaluating the properties of our estimator for the total average dynamic causal effect $\hat{\bar{\tau}}(1, 0; 0)$. Appendix \ref{section: additional simulation results} explores the properties of our estimators for the time-$t$ average $\hat{\bar{\tau}}_{\cdot t}(1, 0; 0)$ and the unit-$i$ average $\hat{\bar{\tau}}_{i \cdot}(1, 0;0)$, as well as our estimators of the lag-$1$ weighted average dynamic causal effects $\hat{\bar{\tau}}^{\dagger}_{\cdot t}(1, 0; 1), \hat{\bar{\tau}}^{\dagger}_{i \cdot}(1, 0; 1)$ and $\hat{\bar{\tau}}^{\dagger}(1, 0; 1)$. 

\begin{figure}[htbp!]
    \centering
    \begin{subfigure}{.5\textwidth}
    \centering
    \includegraphics[width=2.5in, height=2.5in]{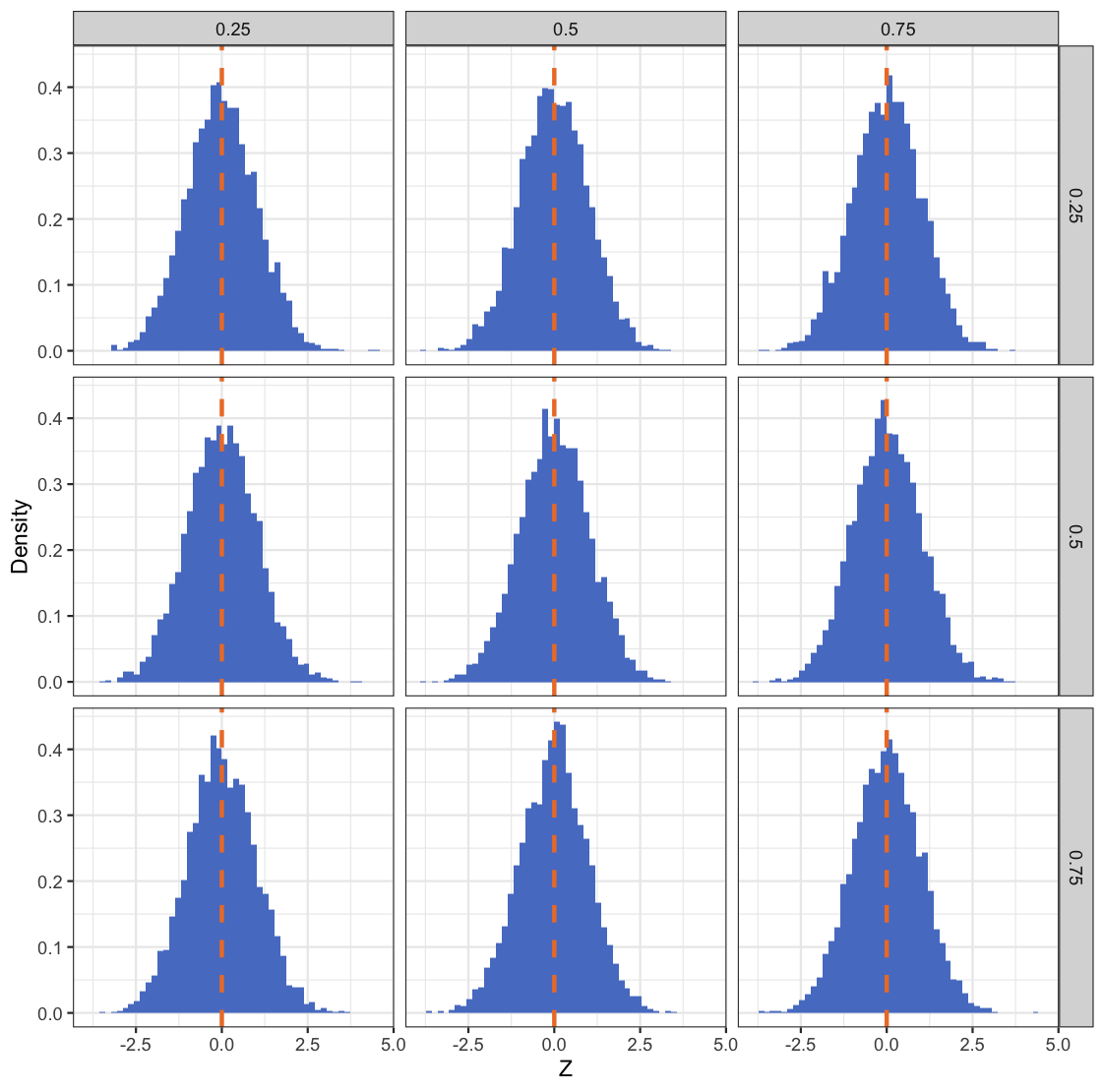}
    \caption{$\epsilon_{i, t} \sim N(0, 1)$, $N = 100$, $T = 10$}
    \end{subfigure}%
    \begin{subfigure}{.5\textwidth}
    \centering
    \includegraphics[width=2.5in, height=2.5in]{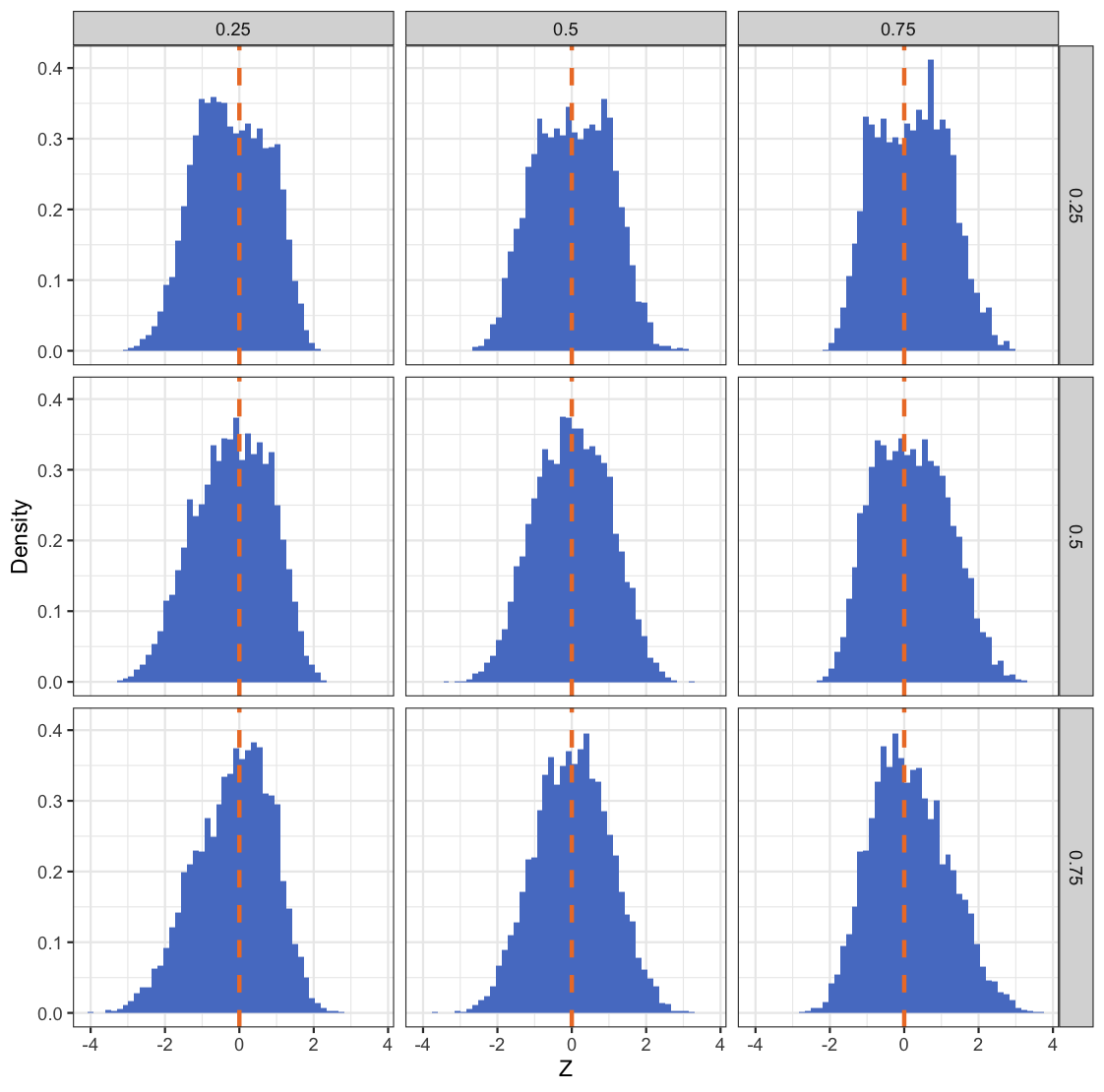}
    \caption{$\epsilon_{i, t} \sim Cauchy$, $N = 500$, $T = 100$}
    \end{subfigure}
    \caption{Simulated randomization distribution for $\hat{\bar{\tau}}(1, 0; 0)$ under different choices of the parameter $\phi$ (defined in (\ref{eqn: simulation design})) and treatment probability $p(w)$. The rows index the parameter $\phi \in \{0.25, 0.5, 0.75\}$. The columns index the treatment probability $p(w) \in \{0.25, 0.5, 0.75\}$. Panel (a) plots the simulated randomization distribution with normally distributed errors $\epsilon_{i, t} \sim N(0, 1)$ and $N = 100, T = 10$. Panel (b) plots the simulated randomization distribution with Cauchy distribution errors $\epsilon_{i, t} \sim Cauchy$ and $N = 500, T = 10$. Results are computed over 5,000 simulations.}
    \label{fig: total average histogram plag0}
\end{figure}

\subsection{Normal approximations and size control}

Figure \ref{fig: total average histogram plag0} plots the randomization distribution for the estimator of the total average dynamic causal effect $\hat{\bar{\tau}}(1, 0; 0)$. We present results for the case with $N = 100,\ T = 10$ and $N = 500,\ T = 100$ (the results are similar when the roles of $N, T$ are reversed). When the errors $\epsilon_{i, t}$ are normally distributed, the randomization distribution quickly converges to a normal distribution. When the errors are Cauchy distributed, the total number of units and time periods must be quite large for the randomization distribution to become approximately normal. There is little difference in the results across the values of $\phi$ and $p(w)$. Appendix \ref{section: additional simulation results} provides quantile-quantile plots of the simulated randomization distributions to further illustrate the quality of the normal approximations. Testing based on the normal asymptotic approximation controls size effectively, staying close to the nominal 5\% level (see Table \ref{table: null rejection prob, total average}).

\begin{table}[htbp!]
    \begin{minipage}{0.5\textwidth}
        \centering
            \begin{tabular}{r l || c c c }
                 & & \multicolumn{3}{c}{$p(w)$} \\
                 & & $0.25$ & $0.5$ & $0.75$ \\
                 \hline \hline
                 \multirow{3}{*}{$\phi$} & $0.25$ & $0.050$ & $0.047$ & $0.048$\\
                 & $0.5$ & $0.052$ & $0.052$ & $0.050$ \\
                 & $0.75$ & $0.050$ & $0.049$ & $0.048$
            \end{tabular}
            \subcaption{$\epsilon_{i, t} \sim N(0, 1)$, $N = 100$, $T = 10$}
        \end{minipage}
        \begin{minipage}{0.5\textwidth}
            \centering
            \begin{tabular}{r l || c c c}
                & & \multicolumn{3}{c}{$p(w)$} \\
                 & & $0.25$ & $0.5$ & $0.75$ \\
                 \hline \hline
                 \multirow{3}{*}{$\phi$} & 0.25 & $0.028$ & $0.029$ & $0.032$ \\
                 & $0.5$ & $0.046$ & $0.039$ & $0.044$ \\
                 & $0.75$ & $0.055$ & $0.044$ & $0.054$ \\
            \end{tabular}
            \subcaption{$\epsilon_{i, t} \sim Cauchy$, $N = 500$, $T = 100$}
        \end{minipage}
        \caption{Null rejection rate for the test of the null hypothesis $H_0: \bar{\tau}(1, 0; 0) = 0$ based upon the normal asymptotic approximation to the randomization distribution of $\hat{\bar{\tau}}(1, 0; 0)$. Panel (a) reports the null rejection probabilities in simulations with $\epsilon_{i, t} \sim N(0, 1)$ and $N = 100$, $T = 10$. Panel (b) reports the null rejection probabilities in simulations with $\epsilon_{i, t} \sim Cauchy$ and $N = 500$, $T = 100$. Results are computed over 5,000 simulations.}
        \label{table: null rejection prob, total average}
\end{table}

\subsection{Rejection rate}
Focusing on simulations with normally distributed errors, we next investigate the rejection rate of statistical tests based on the normal asymptotic approximations. To do so, we generate potential outcomes ${\textbf Y_{1:N, 1:T}}$ under different values of $\beta$, which governs the magnitude of the contemporaneous causal effect. As we vary $\beta = \{-1, -0.9, \hdots, 0.9, 1\}$, we also vary the parameter $\phi \in \{0.25, 0.5, 0.75\}$ and probability of treatment $p(w)\in \{0.25,0.5,0.75\}$ to investigate how rejection varies across a range of parameter values. We report the fraction of tests that reject the null hypothesis of zero average dynamic causal effects.

Figure \ref{fig: total average power plot} plots rejection rate curves against the weak null hypotheses $H_0: \bar{\tau}(1, 0; 0) = 0$ and $H_0: \bar{\tau}^{\dagger}(1, 0; 1) = 0$ as the parameter $\beta$ varies for different choices of the parameter $\phi$ and treatment probability $p(w)$. The rejection rate against $H_0: \bar{\tau}(1, 0; 0) = 0$ quickly converges to one as $\beta$ moves away from zero across a range of simulations, indicating that the conservative variance bound still leads to informative tests. When $\phi = 0.25$, the rejection rate against $H_0: \bar{\tau}^{\dagger}(1, 0; 1) = 0$ is relatively low -- lower values of $\phi$ imply less persistence in the causal effects across periods. When $\phi = 0.75$, there is substantial persistence in the causal effects across periods and we observe that the rejection rate curves looks similar. 

\begin{figure}[htbb!]
    \centering
    \includegraphics[width=5in, height=2.5in]{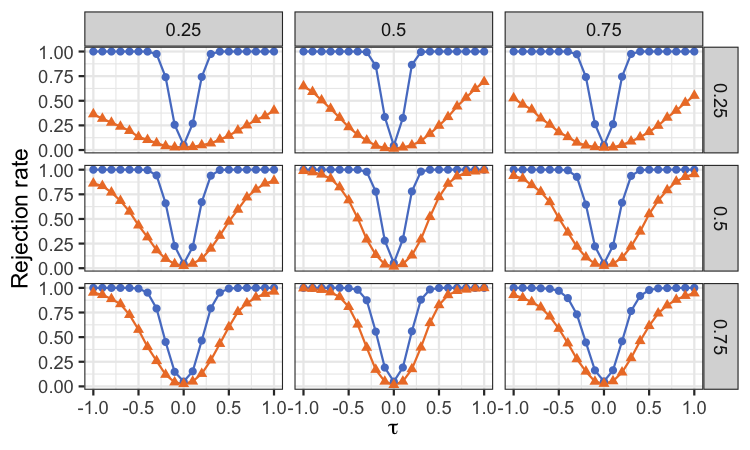}
    \caption{Rejection probabilities for a test of the null hypothesis $H_0: \bar{\tau}(1, 0;0) = 0$ and $H_0: \bar{\tau}^{\dagger}(1, 0; 1) = 0$ as the parameter $\beta$ varies under different choices of the parameter $\phi$ and treatment probability $p(w)$. The rejection rate curve against $H_0: \bar{\tau}(1,0;0) = 0$ is plotted in blue and the rejection rate curve against $H_0: \bar{\tau}^{\dagger}(1, 0; 1) = 0$ is plotted in orange. The rows index the parameter $\phi \in \{0.25, 0.5, 0.75\}$. The columns index the treatment probability $p(w) \in \{0.25, 0.5, 0.75\}$. The simulations are conducted with normally distributed errors $\epsilon_{i, t} \sim N(0, 1)$ and $N = 100$, $T = 10$. Results are averaged over $5000$ simulations.}
    \label{fig: total average power plot}
\end{figure}

Appendix \ref{section: additional simulation results} analyzes the rejection rate curves against the weak null hypothesis on the time-$t$ average dynamic causal effects with $N = 100$ units and the unit-$i$ average dynamic causal effect with $T = 100$ time periods. The conservative tests can have low power against these unit-specific or time period-specific weak null hypotheses in small experiments with few units or few time periods. Unless researchers are analyzing a panel experiment with a large cross-sectional or time dimension, we recommend that researchers focus on analyzing total lag-$p$ dynamic causal effects, which enables them to improve power by pooling information across both units and time periods.

\section{Empirical application in experimental economics}\label{sect: empirical illustration}
We apply our methods to reanalyze a panel experiment from \cite{AndreoniSamuelson(06)} that tests a game-theoretic model of ``rational cooperation'' and  studied how variation in the payoff structure of a two-player, twice-played prisoners' dilemma affects the choices of players. 

The payoffs of the game were determined by two parameters $x_1, x_2 \geq 0$ such that $x_1 + x_2 = 10$. In each period, both players simultaneously select either $C$ (cooperate) or $D$ (defect) and subsequently received the payoffs associated with these choices. Table \ref{tab: andreoni-samuelson stage game} summarizes the payoff structure. Let $\lambda = \frac{x_2}{x_1 + x_2} \in [0,1]$ govern the relative payoffs between the two periods of the prisoners' dilemma; when $\lambda = 0$, all payoffs occurred in period one and when $\lambda = 1$, all payoffs occurred in period two. The authors develop a model of rational cooperation that predicts when $\lambda$ is large, players will cooperate more often in period one compared to when $\lambda$ is small.  

\begin{table}[!htbp]
    \centering
    \begin{tabular}{c c c}
         \arrvline{} & $C$ & $D$ \\
         \hline \hline
        $C$  \arrvline{} & $(3x_1, 3x_1)$ & $(0, 4x_1)$  \\
        $D$  \arrvline{} & $(4x_1, 0)$  & $(x_1, x_1)$ \\ \\
        \multicolumn{3}{c}{Period one}
        \end{tabular}\quad
        \begin{tabular}{c c c}
        \arrvline{} & $C$ & $D$ \\
        \hline \hline
        $C$  \arrvline{} & $(3x_2, 3x_2)$  & $(0, 4x_2)$  \\
        $D$  \arrvline{} & $(4x_2, 0)$  & $(x_2, x_2)$ \\ \\
        \multicolumn{3}{c}{Period two}
        \end{tabular}  
    \caption{Stage games from twice-played prisoners' dilemma in the experiment conducted by \cite{AndreoniSamuelson(06)}, where the parameters satisfy $x_1, x_2 \geq 0$, $x_1 + x_2 = 10$ and $\lambda = \frac{x_1}{x_1 + x_2}$. The choice $C$ denotes ``cooperate'' and the choice $D$ ``defect.''}
    \label{tab: andreoni-samuelson stage game}
\end{table}

To investigate this hypothesis, \cite{AndreoniSamuelson(06)} conducted a panel-based experiment. In each session of the experiment, 22 subjects were recruited to play 20 rounds of the twice-played prisoners' dilemma in Table \ref{tab: andreoni-samuelson stage game}. In each round, participants were randomly matched into pairs, and each pair was then randomly assigned $\lambda \in \{0, 0.1, \hdots, 0.9, 1\}$ with equal probability. The authors conducted the experiment over five sessions for a total sample of 110 participants and we observe 2200 choices total. 

\begin{table}[t]
    \centering
    \begin{tabular}{r || c c | c }
         & \multicolumn{2}{c}{Counts} & \\
         & $0$ & $1$ & Mean \\
         \hline \hline
         Observed treatment, $W_{i,t}$ & $1136$ & $1064$ & $0.484$ \\
         Observed outcome, $Y_{i,t}$ & $521$ & $1679$ & $0.763$
    \end{tabular}
    \caption{Summary statistics for the experiment in \cite{AndreoniSamuelson(06)}. The treatment $W_{i, t}$ equals one when the assigned value of $\lambda$ is larger than $0.6$. The outcome $Y_{i, t}$ equals one whenever the participant cooperates in period one of the twice-repeated prisoners' dilemma. There are 110 participants and we observe 2220 choices total.}
    \label{tab: summary statistics}
\end{table}

This panel experiment is a natural application of our methods. The sequential nature of the experiment raises the possibility that past assignments may impact future actions as participants learn about the structure of the game over time. For example, random variation in the payoff structure may induce players to explore the strategy space. Additionally, the authors originally analyzed the experiment using regression models with unit-level fixed effects, which may be biased in the presence of dynamic causal effects even if the potential outcomes are linear as discussed in Section \ref{section:estimation-linear}.

In our analysis, the outcome of interest $Y$ is an indicator that equals one whenever the participant cooperated in period one of the stage game, $N = 110$, and $T = 20$. The assignment $W \in \mathcal{W} = \{0,1\}$ is binary and equals one whenever the assigned value $\lambda$ is greater than $0.6$, meaning that the payoffs are more concentrated in period two than period one of the stage game. We binarize the assignment in this manner to keep its cardinality (and therefore the number of possible assignment paths) manageable, while continuing to test the authors' core prediction on cooperative behavior. For a given pair of subjects, the assignment mechanism is Bernoulli with probability $p = 5/11$ for treatment and $p=6/11$ for control.\footnote{One potential complication that may arise from the subjects playing against each other in the stage game is possible spillovers or interference across units. The impact of such spillovers is, however, unlikely to be substantial as the matches are anonymous, and no players play each other more than once. We ignore this concern in our analysis.} Table \ref{tab: summary statistics} summarizes the observed assignments and observed outcomes in the experiment.

\subsection{Inference on total lag-$p$ weighted average dynamic causal effects}

We analyze the total lag-$p$ weighted average causal effect $\bar{\tau}^{\dagger}(1, 0; p)$ for $p = 0, 1, 2, 3$, which pools information across all units and time periods to investigate dynamic causal effects.\footnote{Appendix \ref{section: additional empirical results} investigates unit-specific and period-specific weighted average lag-$p$ dynamic causal effects. Since there are only $N = 110$ units and $T = 20$ periods in the experiment, these estimates are noisier than our estimates of the total lag-$p$ weighted average dynamic causal effects.} Based on the conservative test in Section \ref{section: testing}, the weak null hypothesis $\bar{\tau}^{\dagger}(1, 0; 0) = 0$ can be soundly rejected, indicating that the treatment has a positive contemporaneous effect on cooperation in period one of the stage game and confirming the hypothesis of \cite{AndreoniSamuelson(06)}. Table \ref{tab: pvalues for total plag, cooperation} summarizes these estimates of the total lag-$p$ weighted average causal effects. Interestingly, the point estimates are positive at $p = 1, 2, 3$, suggesting there may be dynamic causal effects on cooperative behavior across rounds of the twice-repeated prisoners' dilemma. For example, the treatment may induce participants to learn about the value of cooperation, thereby producing persistent effects.

\begin{table}[h]
    \centering
    \begin{tabular}{l || c c c c}
         &  \multicolumn{4}{c}{lag-$p$} \\
         & $0$ & $1$ & $2$ & $3$ \\ 
         \hline \hline
        Point estimate, $\hat{\bar{\tau}}^{\dagger}(1,0;p)$ & $0.285$ & $0.058$ & $0.134$ & $0.089$ \\
        Conservative p-value & $0.000$ & $0.226$ & $0.013$ & $0.126$ \\
        Randomization p-value & $0.000$ & $0.263$ & $0.012$ & $0.114$ \\
    \end{tabular}
    \caption{Estimates of the total lag-$p$ weighted average dynamic causal effect for $p = 0, 1, 2, 3$. The conservative p-value reports the p-value associated with testing the weak null hypothesis of no average dynamic causal effects, $H_0: \bar{\tau}^{\dagger}(1,0;p) = 0$, using the conservative estimator of the asymptotic variance of the nonparametric estimator (Theorem \ref{thm:clts}). The randomization p-value reports the p-value associated with randomization test of the sharp null of dynamic causal effects, $H_0: \tau_{i, t}(w, \tilde w;p) = 0$ for all $i \in [N], t \in [T]$. The randomization p-values are constructed based on 10,000 draws.}
    \label{tab: pvalues for total plag, cooperation}
\end{table}

We further investigate these results using randomization tests based on the sharp null of no dynamic causal effects. We construct the randomization distribution for the nonparametric estimator of the total lag-$p$ weighted average dynamic causal effect $\hat{\bar{\tau}}^{\dagger}(1, 0; p)$ for $p = 0, 1, 2, 3$ under the sharp null hypothesis of no lag-$p$ dynamic dynamical causal effects for all units and time periods; $H_0: \tau_{i, t}(w, \tilde w;p) = 0$ for all $i \in [N]$, $t \in [T]$.\footnote{When simulating the randomization distribution, we redraw assignment paths in a manner that respects the realized pairs of subjects in the experiment, meaning that subjects that are paired in the same round receive the same assignment.} Table \ref{tab: pvalues for total plag, cooperation} summarizes randomization p-values for the total lag-$p$ weighted average causal effects.  The p-value for the randomization test at $p = 0$ is approximately zero, strongly rejecting the sharp null of no contemporaneous dynamic causal effects for all units and again confirming the hypothesis of \cite{AndreoniSamuelson(06)}. 


\section{Conclusion}\label{sect:conclusion}
This paper developed a potential outcome model for studying dynamic causal effects in a panel experiment. We defined new panel-based dynamic causal estimands such as the lag-$p$ dynamic causal effect and introduced an associated nonparametric estimator. Our proposed estimator is unbiased for lag-$p$ dynamic causal effects over the randomization distribution, and we derived its finite population asymptotic distribution. We developed tools for inference on these dynamic causal effects -- a conservative test for weak nulls and an exact randomization test for sharp nulls. We showed that the linear unit fixed effects estimator and two-way fixed effects estimator are asymptotically biased for the contemporaneous causal effects in the presence of dynamic causal effects and persistence in the assignment mechanism. Finally, we illustrated our results through a simulation study and analyzed a panel experiment on rational cooperation in games.

\clearpage
\newpage
\singlespacing
\bibliographystyle{aea}
\bibliography{Bibliography.bib}

@STRING{E="Econometrica"}

@STRING{American="The American Statistican"}

@STRING{Annals="Annals of Statistics"}

@article{lillie2011n,
  title={The n-of-1 clinical trial: the ultimate strategy for individualizing medicine?},
  author={Lillie, Elizabeth O and Patay, Bradley and Diamant, Joel and Issell, Brian and Topol, Eric J and Schork, Nicholas J},
  journal={Personalized Medicine},
  volume={8},
  number={2},
  pages={161--173},
  year={2011},
  publisher={Future Medicine}
}

@article{kastelman2018switchback,
  title={Switchback tests and randomized experimentation under network effects at DoorDash},
  author={Kastelman, David and Ramesh, Raghav},
  journal={URL: https://medium.com/@DoorDash/switchback-tests-and-randomized-experimentation-under-network-effects-at-doordash-f1d938ab7c2a},
  year={2018}
}

@article{Sneider2019switchback,
  title={Experiment Rigor for Switchback Experiment Analysis},
  author={Sneider, Carla and Tang, Yixin},
  journal={URL: https://doordash.engineering/2019/02/20/experiment-rigor-for-switchback-experiment-analysis/},
  year={2018}
}

@article{aronow2017estimating,
  title={Estimating average causal effects under general interference, with application to a social network experiment},
  author={Aronow, Peter M and Samii, Cyrus},
  journal={The Annals of Applied Statistics},
  volume={11},
  number={4},
  pages={1912--1947},
  year={2017},
  publisher={Institute of Mathematical Statistics}
}

@article{chin2018central,
  title={Central limit theorems via Stein's method for randomized experiments under interference},
  author={Chin, Alex},
  journal={arXiv preprint arXiv:1804.03105},
  year={2018}
}

@article{van2008construction,
  title={The construction and analysis of adaptive group sequential designs},
  author={van der Laan, Mark J},
  year={2008},
  publisher={bepress}
}

@inproceedings{li2010contextual,
  title={A contextual-bandit approach to personalized news article recommendation},
  author={Li, Lihong and Chu, Wei and Langford, John and Schapire, Robert E},
  booktitle={Proceedings of the 19th international conference on World wide web},
  pages={661--670},
  year={2010}
}

@article{zhang2020inference,
  title={Inference for Batched Bandits},
  author={Zhang, Kelly W and Janson, Lucas and Murphy, Susan A},
  journal={arXiv preprint arXiv:2002.03217},
  year={2020}
}

@inproceedings{li2016collaborative,
  title={Collaborative filtering bandits},
  author={Li, Shuai and Karatzoglou, Alexandros and Gentile, Claudio},
  booktitle={Proceedings of the 39th International ACM SIGIR conference on Research and Development in Information Retrieval},
  pages={539--548},
  year={2016}
}

@article{lai1985asymptotically,
  title={Asymptotically efficient adaptive allocation rules},
  author={Lai, Tze Leung and Robbins, Herbert},
  journal={Advances in Applied Mathematics},
  volume={6},
  number={1},
  pages={4--22},
  year={1985},
  publisher={Academic Press}
}

@article{robbins1952some,
  title={Some aspects of the sequential design of experiments},
  author={Robbins, Herbert},
  journal={Bulletin of the American Mathematical Society},
  volume={58},
  number={5},
  pages={527--535},
  year={1952},
  publisher={Citeseer}
}

@incollection{cunha_chapter_2006,
	title = {Chapter 12 {Interpreting} the {Evidence} on {Life} {Cycle} {Skill} {Formation}},
	volume = {1},
	url = {http://www.sciencedirect.com/science/article/pii/S1574069206010129},
	abstract = {This paper presents economic models of child development that capture the essence of recent findings from the empirical literature on skill formation. The goal of this essay is to provide a theoretical framework for interpreting the evidence from a vast empirical literature, for guiding the next generation of empirical studies, and for formulating policy. Central to our analysis is the concept that childhood has more than one stage. We formalize the concepts of self-productivity and complementarity of human capital investments and use them to explain the evidence on skill formation. Together, they explain why skill begets skill through a multiplier process. Skill formation is a life cycle process. It starts in the womb and goes on throughout life. Families play a role in this process that is far more important than the role of schools. There are multiple skills and multiple abilities that are important for adult success. Abilities are both inherited and created, and the traditional debate about nature versus nurture is scientifically obsolete. Human capital investment exhibits both self-productivity and complementarity. Skill attainment at one stage of the life cycle raises skill attainment at later stages of the life cycle (self-productivity). Early investment facilitates the productivity of later investment (complementarity). Early investments are not productive if they are not followed up by later investments (another aspect of complementarity). This complementarity explains why there is no equity-efficiency trade-off for early investment. The returns to investing early in the life cycle are high. Remediation of inadequate early investments is difficult and very costly as a consequence of both self-productivity and complementarity.},
	urldate = {2019-07-23},
	booktitle = {Handbook of the {Economics} of {Education}},
	publisher = {Elsevier},
	author = {Cunha, Flavio and Heckman, James J. and Lochner, Lance and Masterov, Dimitriy V.},
	editor = {Hanushek, E. and Welch, F.},
	month = jan,
	year = {2006},
	doi = {10.1016/S1574-0692(06)01012-9},
	keywords = {education, educational finance, government policy, skill formation},
	pages = {697--812}
}

@article{cunha_estimating_2010,
	title = {Estimating the {Technology} of {Cognitive} and {Noncognitive} {Skill} {Formation}},
	volume = {78},
	copyright = {© 2010 The Econometric Society},
	issn = {1468-0262},
	url = {https://onlinelibrary.wiley.com/doi/abs/10.3982/ECTA6551},
	doi = {10.3982/ECTA6551},
	abstract = {This paper formulates and estimates multistage production functions for children's cognitive and noncognitive skills. Skills are determined by parental environments and investments at different stages of childhood. We estimate the elasticity of substitution between investments in one period and stocks of skills in that period to assess the benefits of early investment in children compared to later remediation. We establish nonparametric identification of a general class of production technologies based on nonlinear factor models with endogenous inputs. A by-product of our approach is a framework for evaluating childhood and schooling interventions that does not rely on arbitrarily scaled test scores as outputs and recognizes the differential effects of the same bundle of skills in different tasks. Using the estimated technology, we determine optimal targeting of interventions to children with different parental and personal birth endowments. Substitutability decreases in later stages of the life cycle in the production of cognitive skills. It is roughly constant across stages of the life cycle in the production of noncognitive skills. This finding has important implications for the design of policies that target the disadvantaged. For most configurations of disadvantage it is optimal to invest relatively more in the early stages of childhood than in later stages.},
	language = {en},
	number = {3},
	urldate = {2019-07-23},
	journal = {Econometrica},
	author = {Cunha, Flavio and Heckman, James J. and Schennach, Susanne M.},
	year = {2010},
	keywords = {Cognitive skills, anchoring test scores, dynamic factor analysis, endogeneity of inputs, noncognitive skills, parental influence},
	pages = {883--931}
}

@article{ben-porath_production_1967,
	title = {The {Production} of {Human} {Capital} and the {Life} {Cycle} of {Earnings}},
	volume = {75},
	issn = {0022-3808},
	url = {https://www.jstor.org/stable/1828596},
	urldate = {2019-07-23},
	journal = {Journal of Political Economy},
	author = {Ben-Porath, Yoram},
	year = {1967},
	pages = {352--365}
}

@article{griliches_estimating_1977,
	title = {Estimating the {Returns} to {Schooling}: {Some} {Econometric} {Problems}},
	volume = {45},
	issn = {0012-9682},
	shorttitle = {Estimating the {Returns} to {Schooling}},
	url = {https://www.jstor.org/stable/1913285},
	doi = {10.2307/1913285},
	abstract = {This paper surveys various econometric issues that arise in estimating a relation between the logarithm of earnings, schooling, and other variables and focuses on the problem of "ability" as a left-out variable and the various solutions to it. It points out that in optimizing models the "ability bias" need not be positive and shows, using recent analyses of NLS data, that when schooling is treated symmetrically, allowing it too to be subject to errors of measurement and correlated to the disturbance in the earnings function, the usual conclusion of a significantly positive "ability bias" in the estimated schooling coefficients is not only not supported but possibly even reversed.},
	urldate = {2019-07-23},
	journal = {Econometrica},
	author = {Griliches, Zvi},
	year = {1977},
	pages = {1--22}
}

\clearpage  
\appendix

\appendixpagenumbering

\renewcommand{\thefigure}{A\arabic{figure}}
\setcounter{figure}{0}

\renewcommand{\thetable}{A\arabic{table}}
\setcounter{table}{0}

\clearpage
\begin{center}
\vspace{-1.8cm}{\Large \textbf{Panel-Based Experiments and Dynamic Causal Effects:\\ A Finite Population Perspective}}\medskip \\
	\Large \textbf{Online Appendix} \bigskip \\
\large Iavor Bojinov \hspace{0.3cm} Ashesh Rambachan \hspace{0.3cm} Neil Shephard
\end{center}

\section{Proofs of main results}

\subsection*{Proof of Theorem \ref{thm:thmMDs}}

We begin the proof with a Lemma that will be used later on.

\begin{lemma}\label{ref:thmZ1}
  Assume a potential outcome panel with an assignment mechanism that is individualistic (Definition \ref{ass:individualistic}) and probabilistic (Assumption \ref{ass:probtreat1}). Define, for any ${\mathbf w} \in \mathcal{W}^{(p+1)}$, the random function
$
Z_{i,t-p:t}({\mathbf w}) := p_{i,t-p}({\mathbf w})^{-1} \mathbbm{1}\{W_{i,t-p:t}={\mathbf w}\}
$. Then, over the assignment mechanism, $\E(Z_{i,t-p:t}({\mathbf w})|\mathcal{F}_{i,t-p-1}) = 1$ and $Var(Z_{i,t-p:t}({\mathbf w})|\mathcal{F}_{i,t-p-1}) = p_{i,t-p}({\mathbf w})^{-1}(1 - p_{i,t-p}({\mathbf w})),$ and 
$Cov(Z_{i,t-p:t}({\mathbf w}),Z_{i,t-p:t}(\tilde{{\mathbf w}})|\mathcal{F}_{i,t-p-1}) = -1$ for all ${\mathbf w} \ne \tilde{{\mathbf w}}$.
  Moreover, $Z_{i,t-p:t}({\mathbf w})$ and $Z_{j,t-p:t}({\mathbf w})$ are, conditioning on $\mathcal{F}_{1:N,t-p-1}$, independent for $i \ne j$.
    \begin{proof}
    The expectation is by construction, the variance comes from the variance of a Bernoulli trial.  The conditional independence is by the individualistic assignment assumption.  
    \end{proof}
\end{lemma}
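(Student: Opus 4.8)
The plan is to reduce every assertion to a single measurability observation: under the design-based convention in which we condition on the fixed potential outcomes, the adapted propensity score $p_{i,t-p}(\mathbf{w})$ is $\mathcal{F}_{i,t-p-1}$-measurable and equals $\Pr(W_{i,t-p:t}=\mathbf{w} \mid \mathcal{F}_{i,t-p-1})$. First I would record this fact. By definition $p_{i,t-p}(\mathbf{w})$ conditions on $W_{i,1:t-p-1}$ and on $Y_{i,1:t}(W_{i,1:t-p-1},\mathbf{w})$; since the potential outcomes are held fixed, the outcome term is a deterministic function of $W_{i,1:t-p-1}$ and of the fixed argument $\mathbf{w}$, so conditioning on $W_{i,1:t-p-1}$ alone already determines it. That the value is the conditional path probability follows by telescoping the one-step factors in the factorization of Lemma \ref{lemma:pred}. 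With this in hand, $p_{i,t-p}(\mathbf{w})$ may be pulled out of any $\mathcal{F}_{i,t-p-1}$-conditional expectation.

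Given this, the first three claims are pure indicator algebra. For the mean I would write
\[
\E\bigl[Z_{i,t-p:t}(\mathbf{w}) \mid \mathcal{F}_{i,t-p-1}\bigr] = p_{i,t-p}(\mathbf{w})^{-1}\,\Pr\bigl(W_{i,t-p:t}=\mathbf{w}\mid \mathcal{F}_{i,t-p-1}\bigr)=1.
\]
For the variance I would use the idempotence $\mathbbm{1}\{W_{i,t-p:t}=\mathbf{w}\}^2=\mathbbm{1}\{W_{i,t-p:t}=\mathbf{w}\}$ to obtain $\E[Z_{i,t-p:t}(\mathbf{w})^2\mid\mathcal{F}_{i,t-p-1}]=p_{i,t-p}(\mathbf{w})^{-1}$ and subtract the squared mean, yielding $p_{i,t-p}(\mathbf{w})^{-1}(1-p_{i,t-p}(\mathbf{w}))$ — the variance of a scaled Bernoulli trial. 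For distinct paths $\mathbf{w}\neq\tilde{\mathbf{w}}$ the events $\{W_{i,t-p:t}=\mathbf{w}\}$ and $\{W_{i,t-p:t}=\tilde{\mathbf{w}}\}$ are disjoint, so the product of indicators vanishes, giving $\E[Z_{i,t-p:t}(\mathbf{w})Z_{i,t-p:t}(\tilde{\mathbf{w}})\mid\mathcal{F}_{i,t-p-1}]=0$ and hence covariance $0-1=-1$.

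The cross-sectional independence is the substantive step. Here I would show that the conditional joint law of the paths $(W_{1,t-p:t},\dots,W_{N,t-p:t})$ given $\mathcal{F}_{1:N,t-p-1}$ factors across units. The plan is to apply the prediction decomposition period by period over $s=t-p,\dots,t$, and at each period invoke individualistic assignment (Assumption \ref{ass:individualistic}) to split the cross-sectional conditional probability $\Pr(W_{1:N,s}=w_{1:N,s}\mid\mathcal{F}_{1:N,s-1})$ into $\prod_i \Pr(W_{i,s}=w_{i,s}\mid W_{i,1:s-1},Y_{i,1:s-1}(W_{i,1:s-1}))$, each factor depending only on unit $i$'s own history. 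Interchanging the order of the two products and recollecting the single-unit factors via Lemma \ref{lemma:pred} then gives
\[
\Pr\bigl(W_{1:N,t-p:t}=w_{1:N,t-p:t}\mid \mathcal{F}_{1:N,t-p-1}\bigr) = \prod_{i=1}^N \Pr\bigl(W_{i,t-p:t}=w_{i,t-p:t}\mid \mathcal{F}_{i,t-p-1}\bigr).
\]
Since each $Z_{i,t-p:t}(\mathbf{w})$ is a measurable function of $W_{i,t-p:t}$ together with the $\mathcal{F}_{1:N,t-p-1}$-measurable quantity $p_{i,t-p}(\mathbf{w})$, this product structure transfers to conditional independence of $Z_{i,t-p:t}(\mathbf{w})$ and $Z_{j,t-p:t}(\mathbf{w})$ for $i\neq j$.

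I expect the delicate bookkeeping in this last step to be where the real work lies: correctly identifying the filtration to which each conditional probability is adapted, and verifying that the individualistic assumption — stated one unit and one period at a time — genuinely licenses the full cross-sectional-and-temporal factorization. The moment computations themselves are immediate once the measurability of $p_{i,t-p}(\mathbf{w})$ is established.
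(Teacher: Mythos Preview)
Your proposal is correct and follows the same approach as the paper, which gives only a three-sentence sketch (``expectation is by construction, the variance comes from the variance of a Bernoulli trial, conditional independence is by the individualistic assignment assumption''). You have supplied the details that the paper leaves implicit --- in particular the measurability of $p_{i,t-p}(\mathbf{w})$, the disjointness argument for the covariance claim (which the paper's proof does not even mention), and the period-by-period factorization for cross-sectional independence --- all of which are sound.
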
 

For any ${\mathbf w},\tilde{{\mathbf w}} \in \mathcal{W}^{(p+1)}$, let $u_{i,t-p}({\mathbf w},\tilde{{\mathbf w}};p) = \hat{\tau}_{i,t}({\mathbf w},\tilde{{\mathbf w}};p) - \tau_{i,t}({\mathbf w},\tilde{{\mathbf w}};p)$ be the estimation error. Now 
$$u_{i,t-p}({\mathbf w},\tilde{{\mathbf w}};p) = Y_{i,t}(w_{i,1:t-p-1}^{obs},{\mathbf w})(Z_{i,t-p:t}({\mathbf w})-1) 
- Y_{i,t}(w_{i,1:t-p-1}^{obs},\tilde{\mathbf w})(Z_{i,t-p:t}(\tilde{\mathbf w})-1).$$

Hence the conditional expectation is zero by Lemma \ref{ref:thmZ1}. Then,

\begin{align*}
&Var(u_{i,t-p}({\mathbf w},\tilde{{\mathbf w}};p)|\mathcal{F}_{i,t-p-1}) = Y_{i,t}(w_{i,1:t-p-1}^{obs},{\mathbf w})^2Var(Z_{i,t-p:t}({\mathbf w})|\mathcal{F}_{i,t-p-1}) \\
&+ Y_{i,t}(w_{i,1:t-p-1}^{obs},\tilde{{\mathbf w}})^2Var(Z_{i,t-p:t}(\tilde{{\mathbf w}})|\mathcal{F}_{i,t-p-1}) \\
& -2 Y_{i,t}(w_{i,1:t-p-1}^{obs},{\mathbf w}) Y_{i,t}(w_{i,1:t-p-1}^{obs},\tilde{{\mathbf w}})
Cov(Z_{i,t-p:t}(\tilde{{\mathbf w}}),Z_{i,t-p:t}(\tilde{{\mathbf w}}|\mathcal{F}_{i,t-p-1}) \\
& = Y_{i,t}(w_{i,1:t-p-1}^{obs},{\mathbf w})^2 p_{i,t-p}({\mathbf w})^{-1}(1 - p_{i,t-p}({\mathbf w})) \\
& + Y_{i,t}(w_{i,1:t-p-1}^{obs},\tilde{{\mathbf w}})^2 p_{i,t-p}(\tilde{{\mathbf w}})^{-1}(1 - p_{i,t-p}(\tilde{{\mathbf w}})) \\
& - 2 Y_{i,t}(w_{i,1:t-p-1}^{obs},{\mathbf w}) Y_{i,t}(w_{i,1:t-p-1}^{obs},\tilde{{\mathbf w}}). 
\end{align*}

Simplifying gives the result on the variance of the estimation error. Then,
\begin{align*}
&Cov(u_{i,t-p}({\mathbf w},\tilde{\mathbf w};p),
u_{i,t-p}(\bar{\mathbf w},\hat{\mathbf w};p)|\mathcal{F}_{i,t-p-1}) \\
&= Y_{i,t}(w_{i,1:t-p-1}^{obs},{\mathbf w}) Y_{i,t}({w}_{i,1:t-p-1}^{obs},\bar{\mathbf w}) Cov(Z_{i,t-p:t}({\mathbf w}),Z_{i,t-p:t}(\bar{\mathbf w})|\mathcal{F}_{i,t-p-1}) \\
&- Y_{i,t}(w_{i,1:t-p-1}^{obs},{\mathbf w}) Y_{i,t}({w}_{i,1:t-p-1}^{obs},\hat{\mathbf w}) Cov(Z_{i,t-p:t}({\mathbf w}),Z_{i,t-p:t}(\hat{\mathbf w})|\mathcal{F}_{i,t-p-1}) \\
&- Y_{i,t}(w_{i,1:t-p-1}^{obs},\tilde{\mathbf w}) Y_{i,t}({w}_{i,1:t-p-1}^{obs},\bar{\mathbf w}) Cov(Z_{i,t-p:t}(\tilde{\mathbf w}),Z_{i,t-p:t}(\bar{\mathbf w})|\mathcal{F}_{i,t-p-1}) \\
& Y_{i,t}(w_{i,1:t-p-1}^{obs},\tilde{\mathbf w}) Y_{i,t}({w}_{i,1:t-p-1}^{obs},\hat{\mathbf w}) Cov(Z_{i,t-p:t}(\tilde{\mathbf w}),Z_{i,t-p:t}(\hat{\mathbf w})|\mathcal{F}_{i,t-p-1}) \\
& = -Y_{i,t}(w_{i,1:t-p-1}^{obs},{\mathbf w}) Y_{i,t}({w}_{i,1:t-p-1}^{obs},\bar{\mathbf w})  
+ Y_{i,t}(w_{i,1:t-p-1}^{obs},{\mathbf w}) Y_{i,t}({w}_{i,1:t-p-1}^{obs},\hat{\mathbf w})  \\
&+ Y_{i,t}(w_{i,1:t-p-1}^{obs},\tilde{\mathbf w}) Y_{i,t}({w}_{i,1:t-p-1}^{obs},\bar{\mathbf w})  
- Y_{i,t}(w_{i,1:t-p-1}^{obs},\tilde{\mathbf w}) Y_{i,t}({w}_{i,1:t-p-1}^{obs},\hat{\mathbf w})\\ 
\end{align*}
Finally, conditional independence of the errors follows due to the individualistic assignment of treatments. $\Box$ 

\subsection*{Proof of Proposition \ref{prop:varbounds}}

The proof of this result is analogous to the proof of Theorem \ref{thm:thmMDs}. We state the analogue of Lemma \ref{ref:thmZ1} for completeness.

\begin{lemma}\label{lemma:zforvariance}
    Assume a potential outcome panel with an assignment mechanism that is individualistic (Definition \ref{ass:individualistic}) and probabilistic (Assumption \ref{ass:probtreat1}). Define, for any ${\mathbf w} \in \mathcal{W}^{(p+1)}$, the random function $V_{i,t-p:t}({\mathbf w}) := p_{i,t-p}({\mathbf w})^{-2} \mathbbm{1}\{W_{i,t-p:t}={\mathbf w}\}$. Then, over the assignment mechanism, $\E(V_{i,t-p:t}({\mathbf w})|\mathcal{F}_{i,t-p-1}) = p_{i,t-p}({\mathbf w})^{-1}$ and $Var(V_{i,t-p:t}({\mathbf w})|\mathcal{F}_{i,t-p-1}) = p_{i,t-p}({\mathbf w})^{-3}(1 - p_{i,t-p}({\mathbf w})),$ and $Cov(V_{i,t-p:t}({\mathbf w}),V_{i,t-p:t}(\tilde{{\mathbf w}})|\mathcal{F}_{i,t-p-1}) = p_{i,t-p}({\mathbf w})^{-1} p_{i,t-p}(\tilde {\mathbf w})^{-1}$ for all ${\mathbf w} \ne \tilde{{\mathbf w}}$. Moreover, $V_{i,t-p:t}({\mathbf w})$ and $V_{j,t-p:t}({\mathbf w})$ are, conditioning on $\mathcal{F}_{1:N,t-p-1}$, independent for $i \ne j$.
\end{lemma}

For any ${\mathbf w},\tilde{{\mathbf w}} \in \mathcal{W}^{(p+1)}$, let $v_{i,t-p}({\mathbf w},\tilde{{\mathbf w}};p) = \hat{\gamma}_{i,t}^2({\mathbf w},\tilde{{\mathbf w}};p) - \gamma_{i,t}^2({\mathbf w},\tilde{{\mathbf w}};p)$ be the estimation error. Now 
$$v_{i,t-p}({\mathbf w},\tilde{{\mathbf w}};p) = Y_{i,t}(w_{i,1:t-p-1}^{obs},{\mathbf w})^2 (V_{i,t-p:t}({\mathbf w}) - p_{i,t-p}({\mathbf w})^{-1}) + Y_{i,t}(w_{i,1:t-p-1}^{obs},\tilde{\mathbf w})^2 ( V_{i,t-p:t}(\tilde{\mathbf w}) - \tilde p_{i,t-p}({\mathbf w})^{-1}).$$
Therefore, the conditional expectation is zero by Lemma \ref{lemma:zforvariance}. The conditional independence of the errors follows due to the individualistic assignment of the treatments. $\Box$

\subsection*{Proof of Theorem \ref{thm:clts}}

Only the third results requires a new proof. The first result is a reinterpretation of the classic cross-sectional result using a triangular array central limit theorem, for the usual Lindeberg condition must hold due to the bounded potential outcomes and the treatments being probabilistic. The second result follows from results in \cite{BojinovShephard(19)}, who use a martingale difference array central limit theorem.  

The third result, which holds for $NT$ going to infinity, can be split into three parts.  For $NT$ to go to infinity we must have either: (i) $T$ goes to infinity with $N$ finite, (ii) $N$ goes to infinity with $T$ finite, or (iii) both $N$ and $T$ go to infinity.  In the case (i), we apply the martingale difference CLT but now we have preaveraged the cross-sectional errors over the $N$ terms for each time period. The preaverage is still a martingale difference, so the technology is the same. In the case (ii) we preaverage the time aspect. Then we are back to a standard triangular array CLT. As we have both (i) and (ii), then (iii) must hold. $\Box$

\subsection*{Proof of Proposition \ref{lemma: cons_var}}
The unbiasedness statements follow directly from Proposition \ref{prop:varbounds}. The proofs of the consistency statements are analogous to the proof of Theorem \ref{thm:clts}. The first result follows from an application of the triangular array law of law of large numbers, which may be applied due to the bounded potential outcomes and the treatments being probabilistic. The second statement follows from an application of a martingale difference sequence law of large numbers (Theorem 2.13 in \cite{HallHeyde(80)}). The third statement can be again proved in three cases: (i) $T$ goes to infinity with $N$ finite, (ii) $N$ goes to infinity with $T$ finite, or (iii) both $N$ and $T$ go to infinity as in the proof of Theorem \ref{thm:clts} and applying the appropriate law of large numbers. $\Box$

\subsection*{Proof of Proposition \ref{prop:unit-fe-additive-general-case}}

Begin by writing the observed outcomes as 
            \begin{align*}
                Y_{i, t} = Y_{i, t}(\textbf{0}) + \sum_{s=1}^{t} \beta_{i, t, t-s} W_{i, s}.
            \end{align*}
        Similarly, write $\bar{Y}_{i \cdot} = \bar{Y}_{i\cdot}(\textbf{0}) + \overline{\beta W}_{i\cdot}$, where  $\overline{\beta W}_{i\cdot} = \frac{1}{T} \sumt \sum_{s=1}^{t} \beta_{i, t, t-s} W_{i, s}$. The transformed outcome can be then written as
            \begin{align*}
                \widecheck Y_{i, t} = \sum_{s=1}^{t} \beta_{i, t, t-s} W_{i, s} - \overline{\beta W}_{i\cdot} + \widecheck{Y}_{i,t}({\bf 0}).
            \end{align*}
        Consider the numerator of the unit fixed effects estimator. Substituting in, we arrive at
            \begin{align*}
                & \frac{1}{NT} \sumi \sumt \widecheck Y_{i, t} \widecheck W_{i, t} = \frac{1}{NT} \sumi \sumt \beta_{i, t, 0} W_{i, t} \widecheck W_{i, t} + \frac{1}{NT} \sumi \sumt \left( \sum_{s=1}^{t-1} \beta_{i, t, t-s} W_{i, s} \widecheck W_{i, t} \right) + \frac{1}{NT} \sumi \sumt \widecheck{Y}_{i,t}({\bf 0}) \widecheck W_{i, t} \\
                & =  \frac{1}{T} \sumt \left( \frac{1}{N} \sumi \beta_{i, t, 0} W_{i, t} \widecheck W_{i, t} \right) + \frac{1}{T} \sumt \sum_{s=1}^{t-1} \left( \frac{1}{N} \sumi \beta_{i, t, t-s} W_{i, s} \widecheck W_{i, t} \right) + \frac{1}{T} \sumt \left( \frac{1}{N} \sumi \widecheck{Y}_{i,t}({\bf 0}) \widecheck W_{i,t} \right).
            \end{align*}
        Therefore, for fixed $T$ as $N \rightarrow \infty$,
            \begin{align*}
                \frac{1}{T} \sumt \left( \frac{1}{N} \sumi \beta_{i, t, 0} W_{i, t} \widecheck W_{i, t} \right) &\xrightarrow{p} \frac{1}{T} \sumt \widecheck \kappa_{W, \beta, t, t}, \\
                \frac{1}{T} \sumt \sum_{s=1}^{t-1} \left( \frac{1}{N} \sumi \beta_{i, t, t-s} W_{i, s} \widecheck W_{i, t} \right) &\xrightarrow{p} \frac{1}{T} \sumt \sum_{s=1}^{t-1} \widecheck \kappa_{W, \beta, t, s}, \\
                 \frac{1}{T} \sumt \left( \frac{1}{N} \sumi \widecheck{Y}_{i,t}({\bf 0}) \widecheck W_i \right) &\xrightarrow{p} \frac{1}{T} \sumt \widecheck \delta_t.
            \end{align*}
        Similarly, the denominator converges to $\frac{1}{NT} \sumt \sumi \widecheck W_{i, t}^2 \xrightarrow{p} \frac{1}{T} \sumt \widecheck \sigma^2_{W, t}$. The result then follows by Slutsky. $\Box$

\subsection*{Proof of Proposition \ref{prop:two-way-fe-additive-general-case}}

Begin by writing 
            \begin{align*}
                Y_{i,t} = Y_{i,t}(\textbf{0}) + \sum_{s=1}^{t} \beta_{i, t, t-s} W_{i, s}.
            \end{align*}
Then, $\bar Y_{\cdot t} = \bar Y_{\cdot t}(\textbf{0}) + \overline{\beta W}_{\cdot t}$, $\bar Y_{i\cdot} = \bar{Y}_{i\cdot}(\textbf{0}) + \overline{\beta W}_{i\cdot}$ and $\bar Y = \bar Y(\textbf{0}) + \overline{\beta W}$. Therefore, 
            \begin{align*}
                \dot{\widecheck Y}_{i, t} = \dot{\widecheck Y}_{i, t}(\textbf{0}) + \left( \sum_{s=1}^{t} \beta_{i, t, t-s} W_{i, s} - \overline{\beta W} \right) - \left( \overline{\beta W}_{\cdot t} - \overline{\beta W} \right) - \left( \overline{\beta W}_{i\cdot} - \overline{\beta W} \right).
            \end{align*}
Consider the numerator of the unit fixed effects estimator. Substituting in,
            \begin{align*}
                \frac{1}{NT} \sumi \sumt \doubleY_{i,t} \doubleW_{i,t} = \frac{1}{NT} \sumi \sumt \beta_{i, t, 0} W_{i,t} \doubleW_{i,t} + \frac{1}{NT} \sumi \sumt \sum_{s=1}^{t-1} \beta_{i, t, t-s} W_{i, s} \doubleW_{i,t} + \frac{1}{NT} \sumi \sumt \dot{\widecheck Y}_{i, t}(\textbf{0}) \doubleW_{i,t} \\
                = \frac{1}{T} \sumt \left( \frac{1}{N} \sumi \beta_{i, t, 0} W_{i,t} \doubleW_{i,t} \right) + 
                \frac{1}{T} \sumt \left( \frac{1}{N} \sumi \sum_{s=1}^{t-1}  \beta_{i, t, t-s} W_{i, s} \doubleW_{i,t} \right) + \frac{1}{T} \sumt \left( \frac{1}{N} \sumi \dot{\widecheck Y}_{i, t}(\textbf{0}) \doubleW_{i,t} \right).
            \end{align*}
        Therefore, 
            \begin{align*}
                & \frac{1}{N} \sumi \beta_{i, t, 0} W_{i,t} \doubleW_{i,t} \xrightarrow{p} \dot{\widecheck \kappa}_{W, \beta, t, t}, \\
                & \frac{1}{N} \sumi \sum_{s=1}^{t-1}  \beta_{i, t, t-s} W_{i, s} \doubleW_{i,t} \xrightarrow{p} \sum_{s=1}^{t-1} \dot{\widecheck \kappa}_{W, \beta, t, s}, \\
                & \frac{1}{N} \sumi \dot{\widecheck Y}_{i, t}(\textbf{0}) \doubleW_{i,t} \xrightarrow{p} \dot{\widecheck \delta}_{t}.
            \end{align*}
        A similar argument applies to the denominator and the result follows. $\Box$
        
\section{Additional theoretical results}\label{section: additional theoretical results}

\subsection{Prediction decomposition of the adapted propensity score}
Recall the definition of the adapted propensity score in Section \ref{sect:non-param-estimation}
\begin{equation*}
    p_{i,t-p}({\mathbf w}) := \Pr(W_{i,t-p:t}={\mathbf w}|W_{i,1:t-p-1},Y_{i,1:t}(W_{i,1:t-p-1},{\mathbf w})).
\end{equation*}
\noindent The adapted propensity score can be decomposed using individualistic assignment (Definition \ref{ass:individualistic}) and the prediction decomposition. 

\begin{lemma}\label{lemma:pred} 
For a potential outcome panel satisfying individualistic assignment (Definition \ref{ass:individualistic}) and any ${\mathbf w} \in \mathcal{W}^{(p+1)}$, the adapted propensity score can be factorized as  
\begin{align*}
 p_{i,t-p}({\mathbf w}) = &\Pr(W_{i,t-p} = w_1| W_{i,1:t-p-1}, Y_{i,1:t-p-1}(W_{i,1:t-p-1})) \\
&\times \prod_{s=1}^{p} \Pr(W_{i,t-p+s}=w_{s+1}|
W_{i,1:t-p-1},W_{i,t-p:t-p+s-1}={\mathbf w}_{1:s},
Y_{i,1:t-p+s-1}(W_{i,1:t-p-1},{\mathbf w}_{1:s})).
\end{align*}
\end{lemma}
\begin{proof} Use the prediction decomposition for assignments, given all outcomes,  
\begin{align*}
 p_{i,t-p}({\mathbf w}) = &\Pr(W_{i,t-p} = w_1| W_{i,1:t-p-1}, Y_{i,1:t}(W_{i,1:t-p-1},\mathbf{w})) \\
&\times \prod_{s=1}^{p} \Pr(W_{i,t-p+s}=w_{s+1}|
W_{i,1:t-p-1},W_{i,t-p:t-p+s-1}={\mathbf w}_{1:s},
Y_{i,1:t}(W_{i,1:t-p-1},\mathbf{w})).
\end{align*}
and then simplify using the individualistic assignment of treatments.  
\end{proof}

Even though the assignment mechanism is known, we only observe the outcomes along the realized assignment path $Y_{i,1:t}(w_{i,1:t}^{obs})$, and so it is not possible to use Lemma \ref{lemma:pred} to compute $p_{i,t-p}({\mathbf w})$ for all assignment path. We can, however, compute the adapted propensity score along the observed assignment path, $p_{i,t-p}( w_{i,t-p:t}^{obs})$, since the associated outcomes are observed.

\subsection{Estimation as a repeated cross-section}
Denote $\dot Y_{1:N, t} = (\dot Y_{1,t},...,\dot Y_{N,t})^{\prime}$, $\dot W_{i,1:t} = (W_{i,t} - \bar{W}_{\cdot t}, W_{i,t-1} - \bar{W}_{\cdot t-1}, ..., W_{i,1} - \bar{W}_{\cdot 1})^{\prime}$ and $\dot W_{1:N, t} = (\dot W_{1,1:t},...,\dot W_{N,1:t})^{\prime}$. The least squares coefficient in the regression of $\dot Y_{1:N, t}$ on $\dot W_{1:N, t}$ is $\hat{{\boldsymbol \beta}}_{1:N, t} = (\dot W_{1:N,t}^{\prime} \dot W_{1:N, t})^{-1} \dot W_{1:N,t}^{\prime} \dot Y_{1:N,t}$. Proposition \ref{thm:estAdditive} derives the finite population limiting distribution of $\hat{{\boldsymbol \beta}}_{1:N, t}$ as the number of units grows large. 

\begin{proposition}\label{thm:estAdditive} 
    Assume a potential outcome panel and consider the ``control'' only path, for $0\in \cW$ let $\tilde w_{i,1:t}={\bf 0}$. Let $\dot \mu_{i, t}$ be the $t \times 1$ vector whose $u$-th element is $E\left[ \dot W_{i, t-(u-1)} \,|\, \cF_{1:N, 0, T} \right]$ and $\Omega_{i, t}$ be the $t \times t$ matrix whose $u,v$-th element is $Cov(\dot W_{i,t-(u-1)},\dot W_{i,t-(v-1)}|\mathcal{F}_{1:N,0,T})$. Additionally assume that: 
\begin{enumerate}
    \item The potential outcome panel is linear (Definitions \ref{defn:linearPanel}) and homogeneous with ${\boldsymbol \beta_{it}} \equiv {\boldsymbol \beta}_t = \left(\beta_{t, 0}, \hdots, \beta_{t, t-1} \right)$ for all $t$.
    
    \item $W_{i,1:t}$ is an individualistic stochastic assignment path and, over the randomization distribution, $Var(W_{i,t}|\mathcal{F}_{1:N,0,T})=\sigma^2_{W,i,t}<\infty$ for each $i \in [N]$, $t \in [T]$.
    
    \item As $N \rightarrow \infty$,
    \begin{enumerate} 
       \item Non-stochastically, $N^{-1} \sum_{i=1}^N \Omega_{i,t} \rightarrow \Gamma_{2,t},$ where $\Gamma_{2,t}$ is positive definite. 
    
       \item $N^{-1/2} \sum_{i=1}^N (\dot W_{i,1:t} - \dot\mu_{i,t}) \dot{Y}_{i,t}({\bf 0})|\mathcal{F}_{1:N,0,T} \xrightarrow{d} N(0,\Gamma_{1,t}).$
    
       \item Non-stochastically, $N^{-1} \sum_{i=1}^N \dot{Y}_{i,t}({\bf 0}) \dot \mu_{i,t}\rightarrow \dot{\delta}_t$.
    \end{enumerate} 
    \end{enumerate} 
    Then, over the randomization distribution, as $N \rightarrow \infty$, 
    $$
        \sqrt{N}(\hat{{\boldsymbol \beta}}_{1:N, t} - {\boldsymbol \beta}_{t} - \Gamma_{2,t}^{-1} \dot{\delta}_t) |\mathcal{F}_{1:N,0,T} \xrightarrow{d} N(0,\Gamma_{2,t}^{-1} \Gamma_{1,t} \Gamma_{2,t}^{-1}).
    $$
    \begin{proof}
Under linear potential outcomes,
$$
Y_{i,t}(W_{i,1:t}) - Y_{i,t}(\tilde W_{i,1:t}) = \sum_{s=0}^{t-1} \beta_{i,t,s} (W_{i,t-s} - \tilde W_{i,t-s}). 
$$
Focus on the counterfactual $\tilde W_{i,1:t}=\bf{0}$, then
\begin{equation*}
    Y_{i,t} = Y_{i,t}(W_{i,1:t}) = \bar Y_{\cdot t}({\bf 0}) + \sum_{s=0}^{t-1}\beta_{i,t,s} W_{i,t-s} + \dot{Y}_{i,t}({\bf 0}).
\end{equation*}

\noindent Therefore, the within-period transformed outcome equals
$$
\dot{Y}_{i,t} = Y_{i,t}-\bar Y_{\cdot t} = \sum_{s=0}^{t-1} \{ \beta_{i,t,s} W_{i,t-s} -  \frac{1}{N}\sum_{j=1}^N\beta_{j,t,s}W_{j,t-s} \}
+ \dot{Y}_{i,t}({\bf 0}).
$$
Further imposing homogeneity, it simplifies to
$$
\dot{Y}_{i,t} = \sum_{s=0}^{t-1} \{ \beta_{t,s} (W_{i,t-s} -  \frac{1}{N}\sum_{j=1}^N W_{j,t-s} \}
+ \dot{Y}_{i,t}({\bf 0}).
$$
Stacking everything across units, this becomes $\dot Y_{1:N, t} = \dot W_{1:N, t} \beta_t +  \dot{Y}_{1:N, t}({\bf 0})$, and so the linear projection coefficient is given by
$$
\hat{\beta}_{t} = (\dot W_{1:N, t} ^{\prime} \dot W_{1:N, t})^{-1} \dot W_{1:N, t} ^{\prime} \dot Y_{1:N, t} = \beta_{t}
+ (\dot W_{1:N, t} ^{\prime} \dot W_{1:N, t})^{-1} \dot W_{1:N, t} ^{\prime} \dot{Y}_{1:N, t}({\bf 0}). 
$$
The important unusual point here is that $\dot{Y}_{1:N, t}({\bf 0})$ is non-stochastic and that $\dot W_{1:N, t}$ is random, exactly the opposite of the case often discussed in the statistical analysis of linear regression. Now 
$$
\frac{1}{N} \dot W_{1:N, t}^{\prime} \dot W_{1:N, t} =  \frac{1}{N} \sum_{i=1}^N \dot W_{i,1:t} \dot W_{i,1:t}^{\prime},
$$
and 
$$
\frac{1}{N} \sum_{i=1}^N \dot W_{i,1:t} \dot{Y}_{i,t}({\bf 0}) =  \frac{1}{N} \sum_{i=1}^N (\dot W_{i,1:t}-\dot\mu_{i,t}) \dot{Y}_{i,t}({\bf 0}) + \frac{1}{N} \sum_{i=1}^N \mu_{i,t} \dot{Y}_{i,t}({\bf 0}).
$$
Then, under the assumption of individualistic assignment (Definition \ref{ass:individualistic}), 
$$
 \frac{1}{N}\sum_{i=1}^N \dot W_{i,1:t} \dot W_{i,1:t}^{\prime}|\mathcal{F}_{1:N,0,T} \xrightarrow{p} \Gamma_{2,t}, 
$$
recalling $\dot{Y}_{i,t}({\bf 0})$ is non-stochastic and applying Assumptions 3(b) and 3(c), then Slutsky's theorem delivers the result. $\Box$
    \end{proof}
\end{proposition}

\section{Additional simulation results}\label{section: additional simulation results}

\subsection{Additional simulations for the estimator of the total average dynamic causal effects}

\paragraph{Quantile-quantile plot for the normal approximation:} Figure \ref{fig: total qqplot plag0} provides quantile-quantile plots of the simulated randomization distribution for the estimator $\hat{\bar{\tau}}(1, 0; 0)$ presented in Section \ref{section:simulation} of the main text. 

\begin{figure}[htbp!]
    \centering
    \begin{subfigure}{.5\textwidth}
    \centering
    \includegraphics[width=2.5in, height=2.5in]{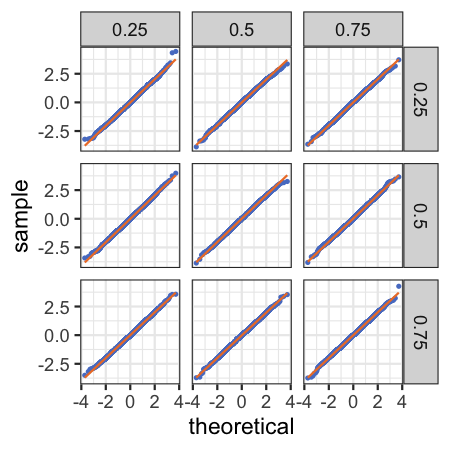}
    \caption{$\epsilon_{i, t} \sim N(0, 1)$, $N = 100$, $T = 10$}
    \end{subfigure}%
    \begin{subfigure}{.5\textwidth}
    \centering
    \includegraphics[width=2.5in, height=2.5in]{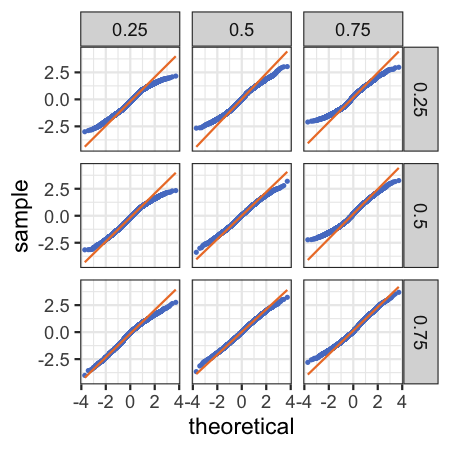}
    \caption{$\epsilon_{i, t} \sim Cauchy$, $N = 500$, $T = 100$}
    \end{subfigure}
    \caption{Quantile-quantile plots for the simulated randomization distribution for $\hat{\bar{\tau}}(1, 0; 0)$ under different choices of the parameter $\phi$ and treatment probability $p(w)$. The quantile-quantile plots compare the quantiles of the simulated randomization distribution (y-axis) against the quantiles of a standard normal random variable (x-axis). The 45 degree line is plotted in solid orange. The rows index the parameter $\phi \in \{0.25, 0.5, 0.75\}$, and the columns index the treatment probability $p(w) \in \{0.25, 0.5, 0.75\}$. Panel (a) plots the quantile-quantile plots for simulated randomization distribution with normally distributed errors $\epsilon_{i, t} \sim N(0, 1)$ and $N = 100, T = 10$. Panel (b) plots the quantile-quantile plots simulated randomization distribution with Cauchy distribution errors $\epsilon_{i, t} \sim Cauchy$ and $N = 500, T = 100$. Results are computed over 5,000 simulations. See Section \ref{section:simulation} of the main text for further details.}
    \label{fig: total qqplot plag0}
\end{figure}

\paragraph{Simulation results for the estimator of the lag-$1$ total weighted average dynamic causal effect, $\bar{\tau}^\dagger(1, 0; 1)$:} We now present simulation results that analyze the properties of our estimator for the lag-$1$ total weighted average dynamic causal effect, $\hat{\bar{\tau}}^\dagger(1, 0; 1)$. We choose the weights to $a_{\mathbf v}$ to place equal weight on the future treatment paths. Figure \ref{fig: total average histogram plag1} plots the simulated randomization distribution for $\hat{\bar{\tau}}^\dagger(1, 0; 1)$ and Figure \ref{fig: total average qqplot plag1} plots the associated quantile-quantile plot. We observe that the normal approximation remains accurate for lagged dynamic causal effects.

\begin{figure}[htbp!]
    \centering
    \begin{subfigure}{.5\textwidth}
    \centering
    \includegraphics[width=2.5in, height=2.5in]{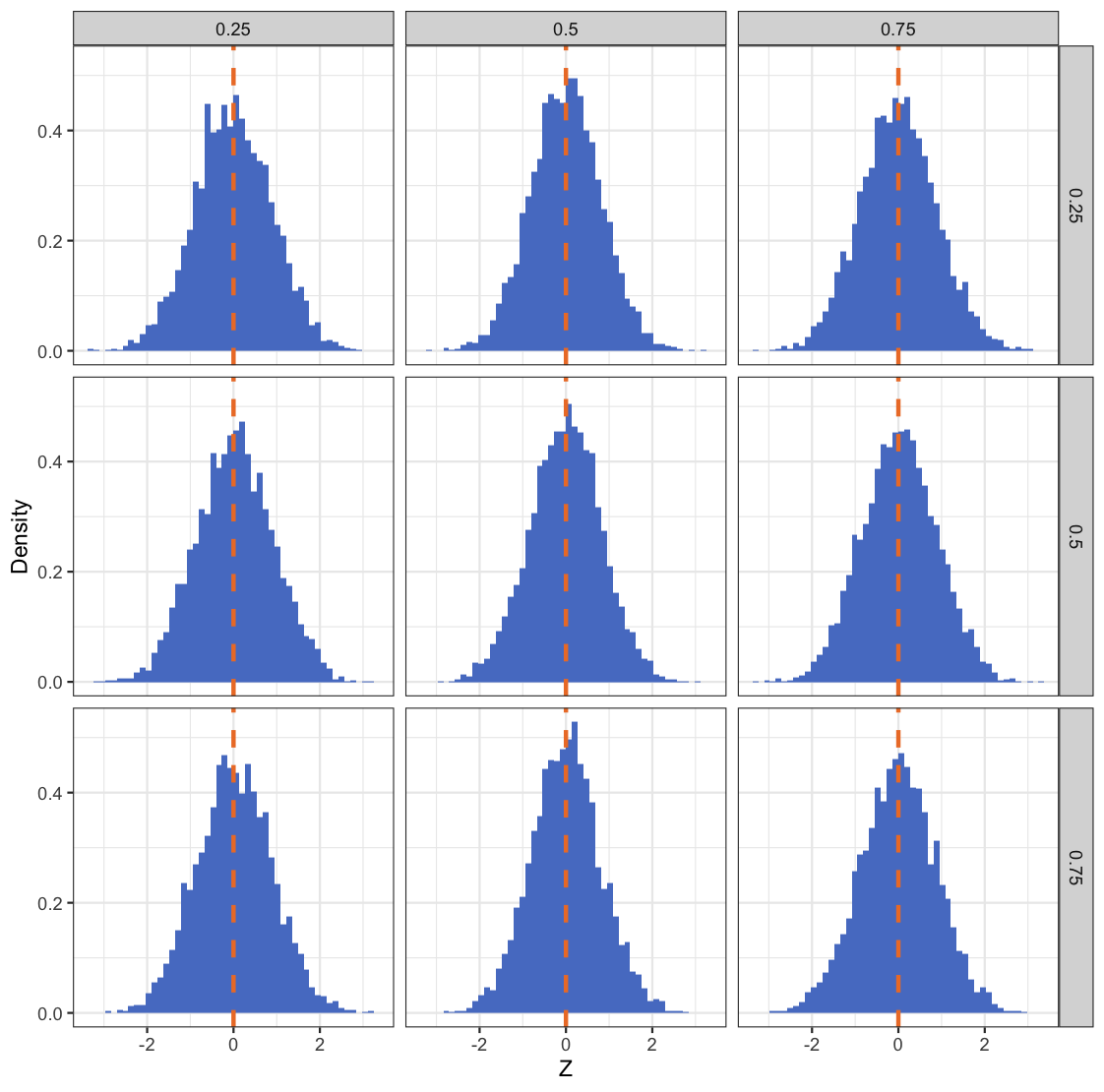}
    \caption{$\epsilon_{i, t} \sim N(0, 1)$, $N = 100$, $T = 10$}
    \end{subfigure}%
    \begin{subfigure}{.5\textwidth}
    \centering
    \includegraphics[width=2.5in, height=2.5in]{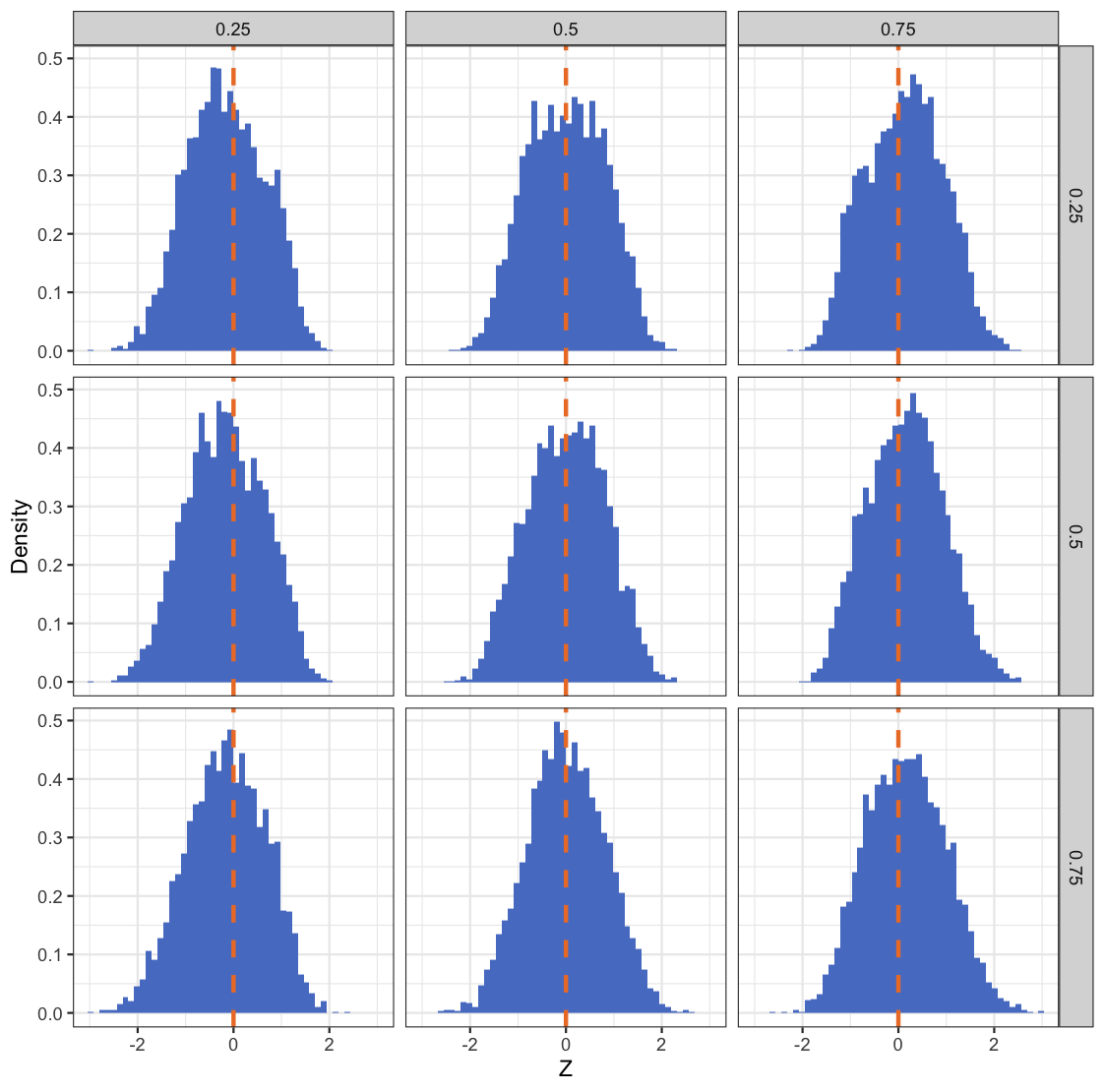}
    \caption{$\epsilon_{i, t} \sim Cauchy$, $N = 500$, $T = 100$}
    \end{subfigure}
    \caption{Simulated randomization distribution for $\hat{\bar{\tau}}^{\dagger}(1, 0; 1)$ under different choices of the parameter $\phi$ and treatment probability $p(w)$. The rows index the parameter $\phi \in \{0.25, 0.5, 0.75\}$, and the columns index the treatment probability $p(w) \in \{0.25, 0.5, 0.75\}$. Panel (a) plots the simulated randomization distribution with normally distributed errors $\epsilon_{i, t} \sim N(0, 1)$ and $N = 100, T = 10$. Panel (b) plots the simulated randomization distribution with Cauchy distribution errors $\epsilon_{i, t} \sim Cauchy$ and $N = 500, T = 10$. Results are computed over 5,000 simulations. See Section \ref{section:simulation} of the main text for further details.}
    \label{fig: total average histogram plag1}
\end{figure}

\begin{figure}[htbp!]
    \centering
    \begin{subfigure}{.5\textwidth}
    \centering
    \includegraphics[width=2.5in, height=2.5in]{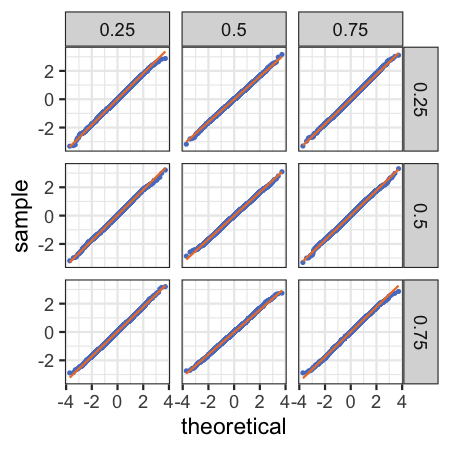}
    \caption{$\epsilon_{i, t} \sim N(0, 1)$, $N = 100, T = 10$}
    \end{subfigure}%
    \begin{subfigure}{.5\textwidth}
    \centering
    \includegraphics[width=2.5in, height=2.5in]{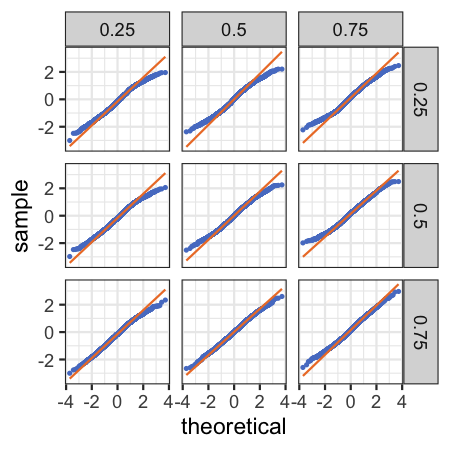}
    \caption{$\epsilon_{i, t} \sim Cauchy$, $N = 500, T = 100$}
    \end{subfigure}
    \caption{Quantile-quantile plots for the simulated randomization distribution for $\hat{\bar{\tau}}^{\dagger}(1, 0; 1)$ under different choices of the parameter $\phi$ and treatment probability $p(w)$. The quantile-quantile plots compare the quantiles of the simulated randomization distribution (y-axis) against the quantiles of a standard normal random variable (x-axis). The 45 degree line is plotted in solid orange. The rows index the parameter $\phi \in \{0.25, 0.5, 0.75\}$, and the columns index the treatment probability $p(w) \in \{0.25, 0.5, 0.75\}$. Panel (a) plots the quantile-quantile plots for simulated randomization distribution with normally distributed errors $\epsilon_{i, t} \sim N(0, 1)$ and $T = 1000$. Panel (b) plots the quantile-quantile plots simulated randomization distribution with Cauchy distribution errors $\epsilon_{i, t} \sim Cauchy$ and $T = 50,000$. Results are computed over 5,000 simulations. See Section \ref{section:simulation} of the main text for further details.}
    \label{fig: total average qqplot plag1}
\end{figure}

\newpage
\clearpage
\subsection{Simulations for the estimator of the time-$t$ average dynamic causal effects}

We present simulation results for our estimator of the time-$t$ average dynamic causal effect, $\hat{\bar{\tau}}_{\cdot, t}(1, 0; 0)$, with $N = 100$ units when the potential outcomes are generated with normally distributed errors and $N = 50,000$ with Cauchy distributed errors.

\paragraph{Normal approximations and size control:} Figure \ref{fig: unit average histogram plag0} plots the randomization distribution for the estimator of the contemporaneous time-$t$ average dynamic causal effect, $\hat{\bar{\tau}}_{\cdot t}(1, 0;0)$, under the null hypothesis of $\beta = 0$ for different combinations of the parameter $\phi\in\{0.25, 0.5, 0.75\}$ and treatment probability $p(w)\in\{0.25, 0.5, 0.75\}$. When the errors $\epsilon_{i, t}$ are normally distributed, the randomization distribution quickly converges to a normal distribution -- the normal approximation is accurate when there are only $N = 100$ units in the experiment. As expected, when the errors are Cauchy distributed, the number of units must be quite large for the randomization distribution to become approximately normal. There is little difference in the results across the values of $\phi$ and $p(w)$. Figure \ref{fig: unit average qqplot plag0} provides quantile-quantile plots of the simulated randomization distributions to further illustrate the quality of the normal approximations. Testing based on the normal asymptotic approximation controls size effectively, staying close to the nominal 5\% level (the exact rejection rates for the null hypothesis, $H_0: \bar{\tau}_{\cdot t}(1, 0;0) = 0$ are reported in Table \ref{table: null rejection prob, unit average}). 

\begin{figure}[!htbp]
    \centering
    \begin{subfigure}{.5\textwidth}
    \centering
    \includegraphics[width=2.5in, height=2.5in]{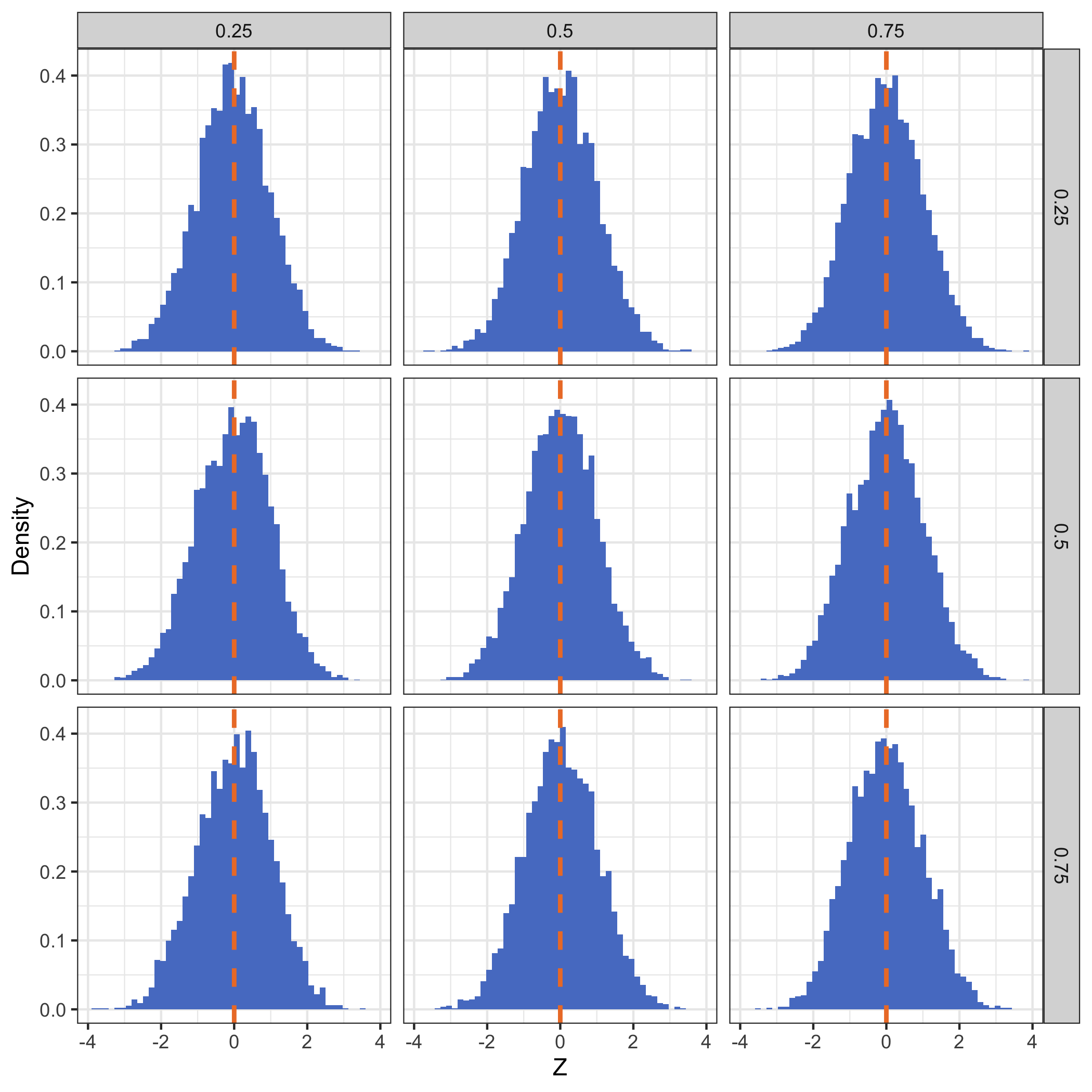}
    \caption{$\epsilon_{i, t} \sim N(0, 1)$, $N = 100$}
    \end{subfigure}%
    \begin{subfigure}{.5\textwidth}
    \centering
    \includegraphics[width=2.5in, height=2.5in]{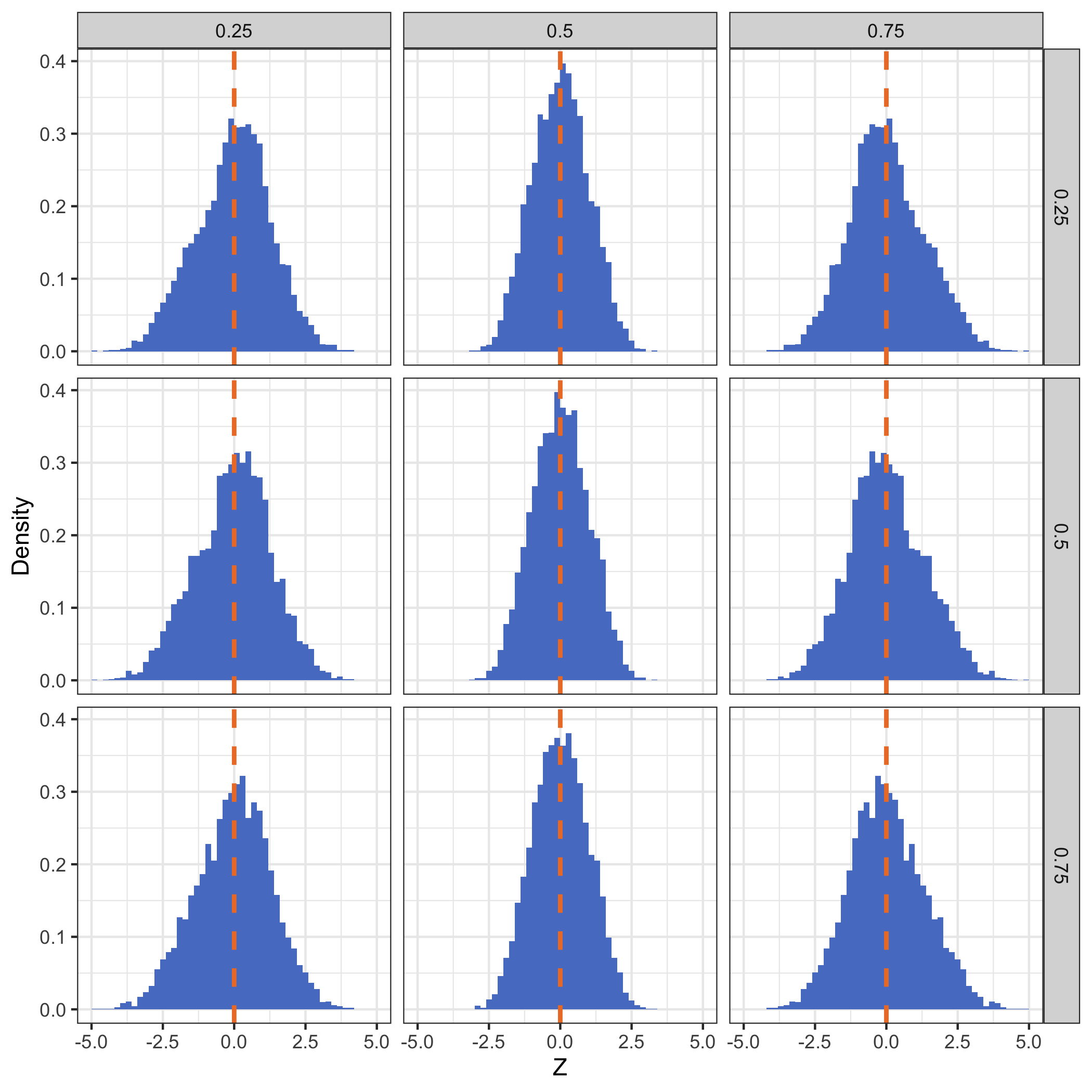}
    \caption{$\epsilon_{i, t} \sim Cauchy$, $N = 50,000$}
    \end{subfigure}
    \caption{Simulated randomization distribution for $\hat{\bar{\tau}}_{\cdot t}(1, 0;0)$ under different choices of the parameter $\phi$ and treatment probability $p(w)$. The rows index the parameter $\phi \in \{0.25, 0.5, 0.75\}$  and the columns index the treatment probability $p(w) \in \{0.25, 0.5, 0.75\}$. Panel (a) plots the simulated randomization distribution with normally distributed errors $\epsilon_{i, t} \sim N(0, 1)$ and $N = 100$. Panel (b) plots the simulated randomization distribution with Cauchy distribution errors $\epsilon_{i, t} \sim Cauchy$ and $N = 50,000$. Results are computed over 5,000 iterations. See Section \ref{section:simulation} of the main text for further details on the simulation design.}
    \label{fig: unit average histogram plag0}
\end{figure}

\begin{table}[htbp!]
    \begin{minipage}{0.5\textwidth}
        \centering
            \begin{tabular}{r l || c c c }
                 & & \multicolumn{3}{c}{$p(w)$} \\
                 & & $0.25$ & $0.5$ & $0.75$ \\
                 \hline \hline
                 \multirow{3}{*}{$\phi$} & $0.25$ & $0.046$ & $0.048$ & $0.048$ \\
                 & $0.5$ & $0.049$ & $0.049$ & $0.050$ \\
                 & $0.75$ & $0.050$ & $0.049$ & $0.045$ \\
            \end{tabular}
            \subcaption{$\epsilon_{i, t} \sim N(0, 1), N = 100$}
        \end{minipage}
        \begin{minipage}{0.5\textwidth}
            \centering
            \begin{tabular}{r l || c c c}
                & & \multicolumn{3}{c}{$p(w)$} \\
                 & & $0.25$ & $0.5$ & $0.75$ \\
                 \hline \hline
                 \multirow{3}{*}{$\phi$} & $0.25$ & $0.031$ & $0.031$ & $0.034$ \\
                 & $0.5$ & $0.048$ & $0.039$ & $0.043$ \\
                 & $0.75$ & $0.052$ & $0.047$ & $0.057$
            \end{tabular}
            \subcaption{$\epsilon_{i, t} \sim Cauchy, N = 50,000$}
        \end{minipage}
        \caption{Null rejection rate for the test of the null hypothesis $H_0: \bar{\tau}_{\cdot t}(1, 0; 0) = 0$ based upon the normal asymptotic approximation to the randomization distribution of $\hat{\bar{\tau}}_{\cdot t}(1, 0; 0)$. Panel (a) reports the null rejection probabilities in simulations with $\epsilon_{i, t} \sim N(0, 1)$ and $N = 100$. Panel (b) reports the null rejection probabilities in simulations with $\epsilon_{i, t} \sim Cauchy$ and $N = 50,000$. Results are computed over 5,000 simulations. See Section \ref{section:simulation} of the main text for further details on the simulation design.}
        \label{table: null rejection prob, unit average}
\end{table}

\begin{figure}[htbp!]
    \centering
    \begin{subfigure}{.5\textwidth}
    \centering
    \includegraphics[width=2.5in, height=2.5in]{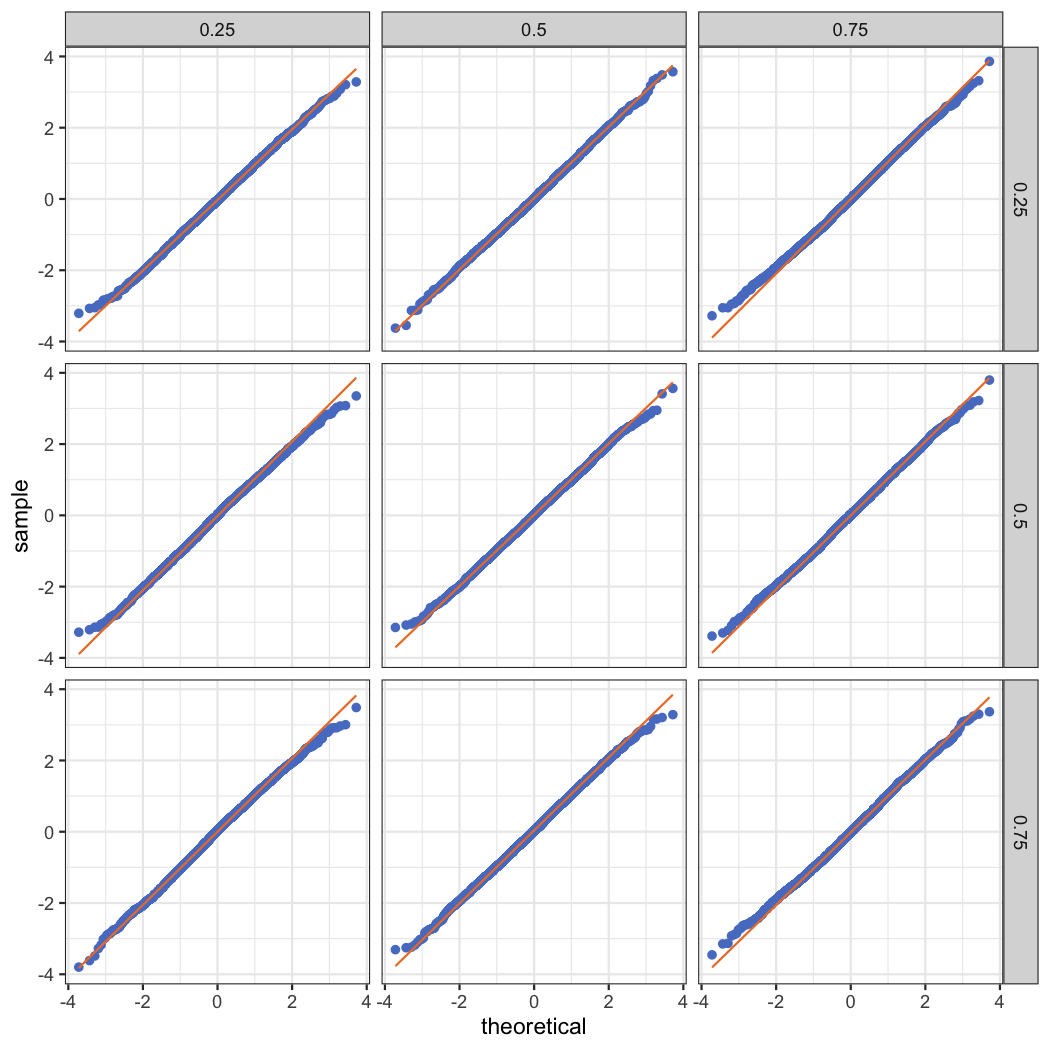}
    \caption{$\epsilon_{i, t} \sim N(0, 1)$, $N = 100$}
    \end{subfigure}%
    \begin{subfigure}{.5\textwidth}
    \centering
    \includegraphics[width=2.5in, height=2.5in]{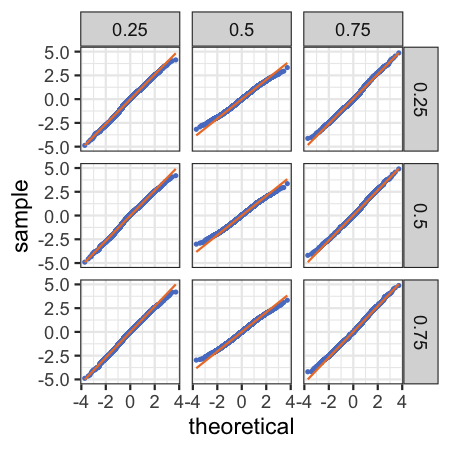}
    \caption{$\epsilon_{i, t} \sim Cauchy$, $N = 50,000$}
    \end{subfigure}
    \caption{Quantile-quantile plots for the simulated randomization distribution for $\hat{\bar{\tau}}_{\cdot t}(1, 0; 0)$ under different choices of the parameter $\phi$ and treatment probability $p(w)$. The quantile-quantile plots compare the quantiles of the simulated randomization distribution (y-axis) against the quantiles of a standard normal random variable (x-axis). The 45 degree line is plotted in solid orange. The rows index the parameter $\phi \in \{0.25, 0.5, 0.75\}$, and the columns index the treatment probability $p(w) \in \{0.25, 0.5, 0.75\}$. Panel (a) plots the quantile-quantile plots for simulated randomization distribution with normally distributed errors $\epsilon_{i, t} \sim N(0, 1)$ and $N = 100$. Panel (b) plots the quantile-quantile plots simulated randomization distribution with Cauchy distribution errors $\epsilon_{i, t} \sim Cauchy$ and $N = 50,000$. Results are computed over 5,000 simulations. See Section \ref{section:simulation} of the main text for further details on the simulation design.}
    \label{fig: unit average qqplot plag0}
\end{figure}

\paragraph{Rejection rates:} Figure \ref{fig: unit average power plot, n100} plots rejection rate curves against the null hypotheses as the parameter $\beta$ varies for different choices of the parameter $\phi$ and treatment probability $p(w)$ in simulations with $N = 100$ units. For $p = 0$, the rejection rate against $H_0: \bar{\tau}_{\cdot t}(1, 0; 0) = 0$ quickly converges to one as $\beta$ moves away from zero across a range of simulations. This is encouraging as it indicates that the conservative variance bound still leads to informative tests. However, when $p = 1$, the persistence of the causal effects $\phi$ has an important effect on the power of our tests. In particular, when $\phi = 0.25$, the rejection rate against $H_0: \bar{\tau}^{\dagger}_{\cdot t}(1, 0; 1) = 0$ is quite low for all values of $\beta$ -- lower values of $\phi$ imply less persistence in the causal effects across periods. When $\phi = 0.75$, there is substantial persistence across periods and observe that the rejection rate curves improve for $p = 1$. Additionally, Figure \ref{fig: unit average power plot, n1000} shows the same power plots for $N = 1000$ units. We again observe that power is relatively low for low values of $\phi$, but when $\phi = 0.75$, the rejection rate curves for $p = 0, 1$ appear similar. This suggests that detecting dynamic causal effects requires larger sample sizes.

\begin{figure}[!htbp]
    \centering
    \includegraphics[width=5in, height=2.5in]{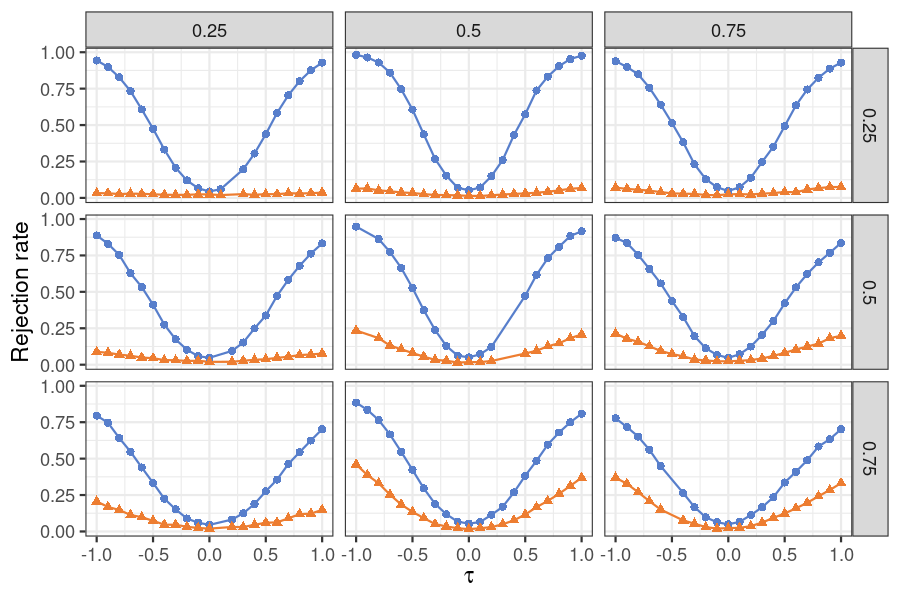}
    \caption{Rejection probabilities for a test of the null hypothesis $H_0: \bar{\tau}_{\cdot t}(1, 0;0) = 0$ and $H_0: \bar{\tau}^{\dagger}_{\cdot t}(1, 0; 1) = 0$ as the parameter $\beta$ varies under different choices of the parameter $\phi$ and treatment probability $p(w)$. The rejection rate curve against $H_0: \bar{\tau}_{\cdot t}(1, 0;0) = 0$ is plotted in blue and the rejection rate curve against $H_0: \bar{\tau}^{\dagger}_{\cdot t}(1, 0; 1) = 0$ is plotted in orange. The rows index the parameter $\phi \in \{0.25, 0.5, 0.75\}$, and the columns index the treatment probability $p(w) \in \{0.25, 0.5, 0.75\}$. The simulations are conducted with normally distributed errors $\epsilon_{i, t} \sim N(0, 1)$ and $N = 100$. Results are averaged over $5000$ simulations. See Section \ref{section:simulation} of the main text for further details on the simulation design.}
    \label{fig: unit average power plot, n100}
\end{figure}

\begin{figure}[!htbp]
    \centering
    \includegraphics[width=5in, height=2.5in]{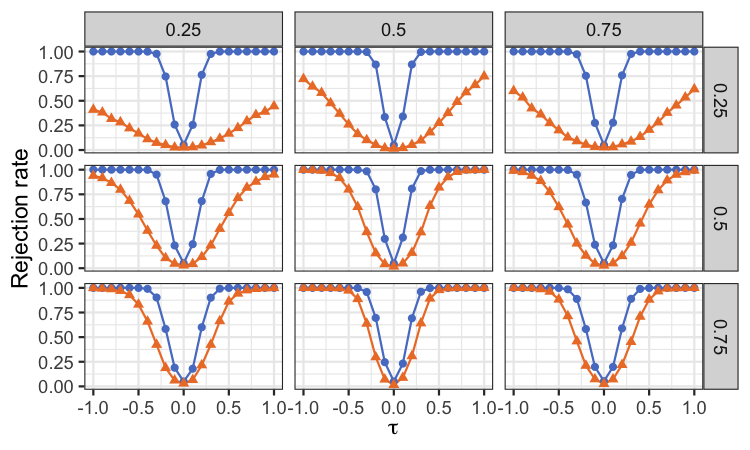}
    \caption{Rejection probabilities for a test of the null hypothesis $H_0: \bar{\tau}_{\cdot t}(1, 0;0) = 0$ and $H_0: \bar{\tau}^{\dagger}_{\cdot t}(1, 0; 1) = 0$ as the parameter $\beta$ varies under different choices of the parameter $\phi$ and treatment probability $p(w)$. The rejection rate curve against $H_0: \bar{\tau}_{\cdot t}(1, 0;0) = 0$ is plotted in blue and the rejection rate curve against $H_0: \bar{\tau}^{\dagger}_{\cdot t}(1, 0; 1) = 0$ is plotted in orange. The rows index the parameter $\phi \in \{0.25, 0.5, 0.75\}$, and the columns index the treatment probability $p(w) \in \{0.25, 0.5, 0.75\}$. The simulations are conducted with normally distributed errors $\epsilon_{i, t} \sim N(0, 1)$ and $N = 1000$. Results are averaged over $5000$ simulations. See Section \ref{section:simulation} of the main text for further details on the simulation design.}
    \label{fig: unit average power plot, n1000}
\end{figure}

\paragraph{Simulation results for the estimator of the lag-$1$, time-$t$ weighted average dynamic causal effect, $\bar{\tau}^\dagger_{\cdot, t}(1, 0; 1)$:} We now present simulation results that analyze the properties of our estimator for the lag-$1$ total weighted average dynamic causal effect, $\hat{\bar{\tau}}^\dagger_{\cdot, t}(1, 0; 1)$. We choose the weights to $a_{\mathbf v}$ to place equal weight on the future treatment paths. Figure \ref{fig: unit average histogram plag1} plots the simulated randomization distribution for $\hat{\bar{\tau}}^\dagger_{\cdot, t}(1, 0; 1)$ and Figure \ref{fig: unit average qqplot plag1} plots the associated quantile-quantile plot. We observe that the normal approximation remains accurate for lagged dynamic causal effects.

\begin{figure}[htbp!]
    \centering
    \begin{subfigure}{.5\textwidth}
    \centering
    \includegraphics[width=2.5in, height=2.5in]{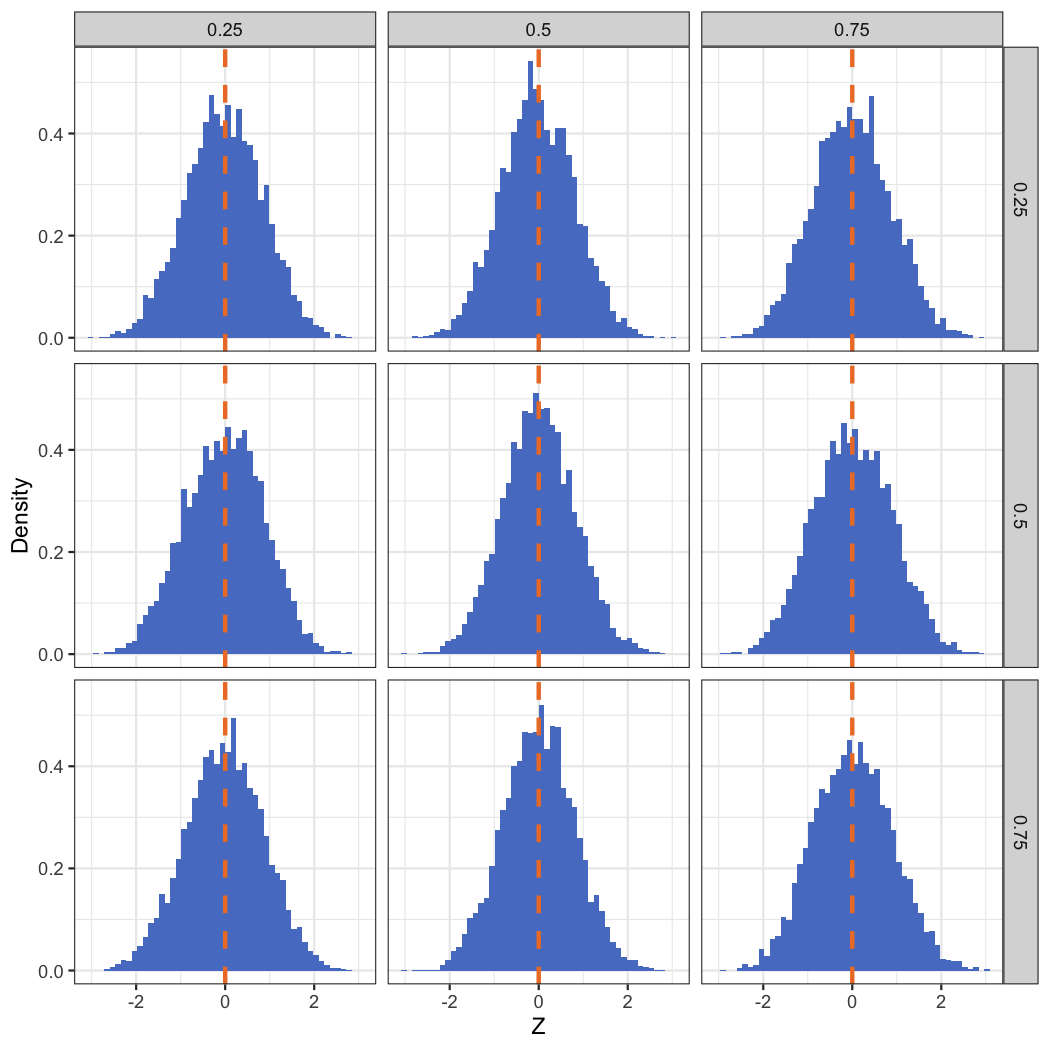}
    \caption{$\epsilon_{i, t} \sim N(0, 1)$, $N = 100$}
    \end{subfigure}%
    \begin{subfigure}{.5\textwidth}
    \centering
    \includegraphics[width=2.5in, height=2.5in]{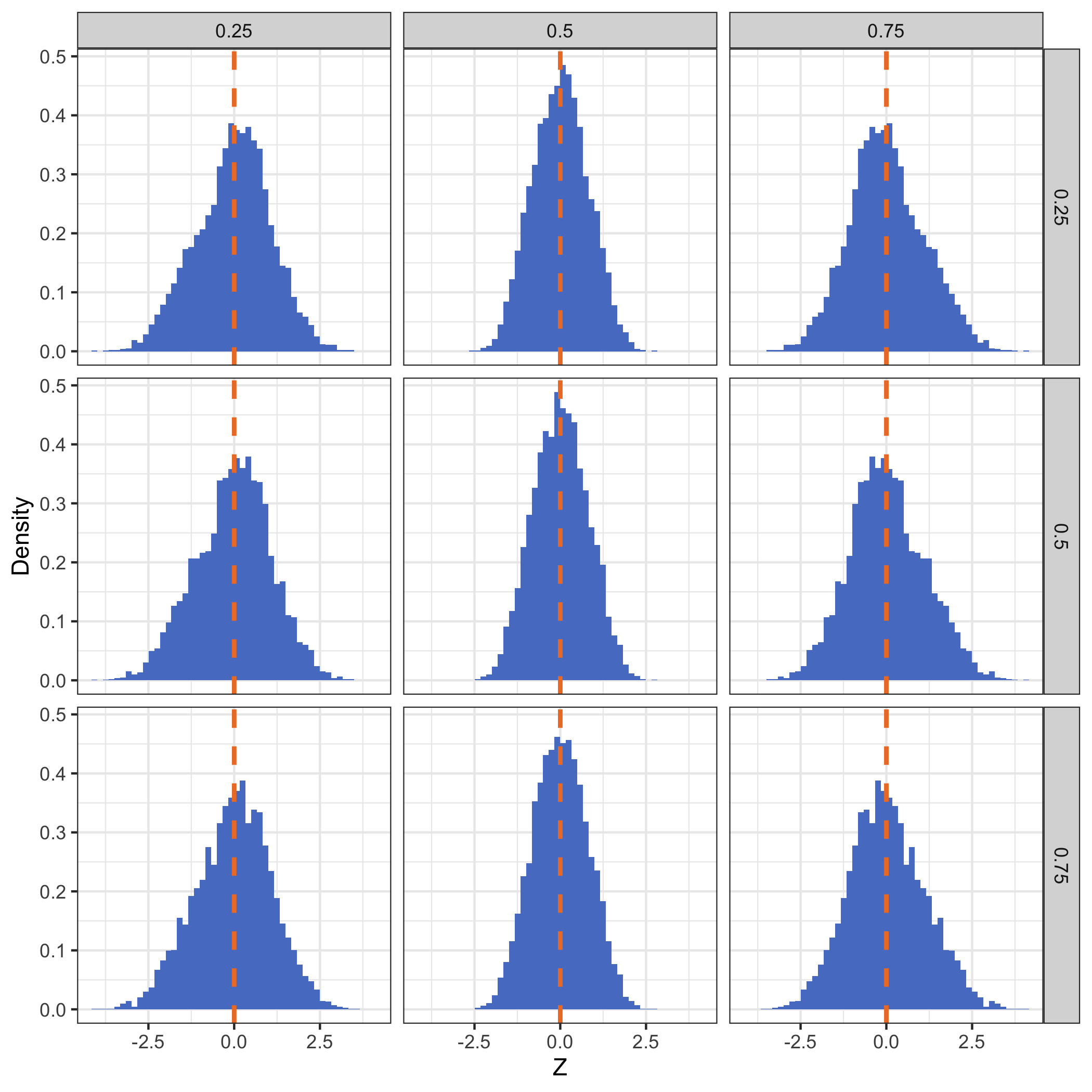}
    \caption{$\epsilon_{i, t} \sim Cauchy$, $N = 50,000$}
    \end{subfigure}
    \caption{Simulated randomization distribution for $\hat{\bar{\tau}}^{\dagger}_{\cdot t}(1, 0; 1)$ under different choices of the parameter $\phi$ and treatment probability $p(w)$. The rows index the parameter $\phi \in \{0.25, 0.5, 0.75\}$, and the columns index the treatment probability $p(w) \in \{0.25, 0.5, 0.75\}$. Panel (a) plots the simulated randomization distribution with normally distributed errors $\epsilon_{i, t} \sim N(0, 1)$ and $N = 100$. Panel (b) plots the simulated randomization distribution with Cauchy distribution errors $\epsilon_{i, t} \sim Cauchy$ and $N = 50,000$. Results are computed over 5,000 simulations. See Section \ref{section:simulation} of the main text for further details on the simulation design.}
    \label{fig: unit average histogram plag1}
\end{figure}

\begin{figure}[htbp!]
    \centering
    \begin{subfigure}{.5\textwidth}
    \centering
    \includegraphics[width=2.5in, height=2.5in]{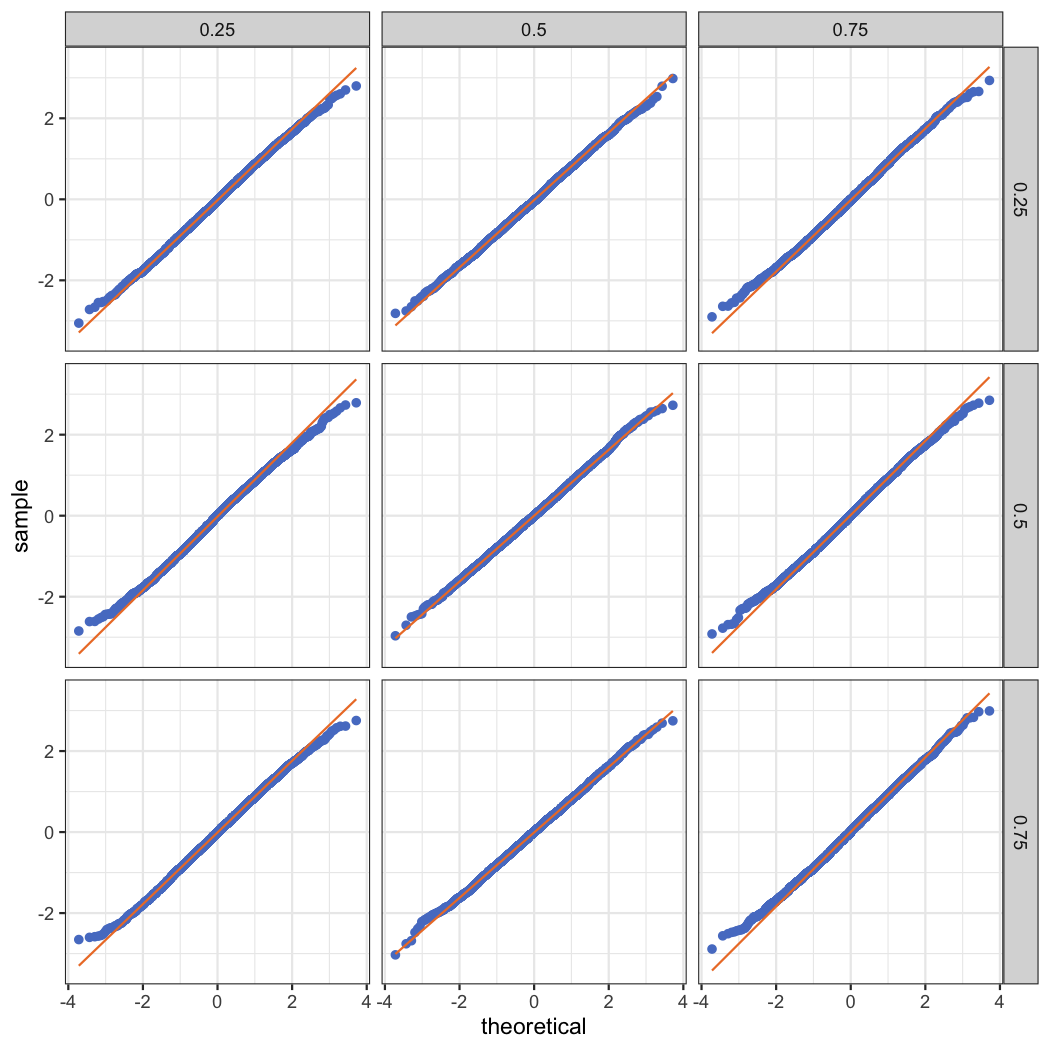}
    \caption{$\epsilon_{i, t} \sim N(0, 1)$, $N = 100$}
    \end{subfigure}%
    \begin{subfigure}{.5\textwidth}
    \centering
    \includegraphics[width=2.5in, height=2.5in]{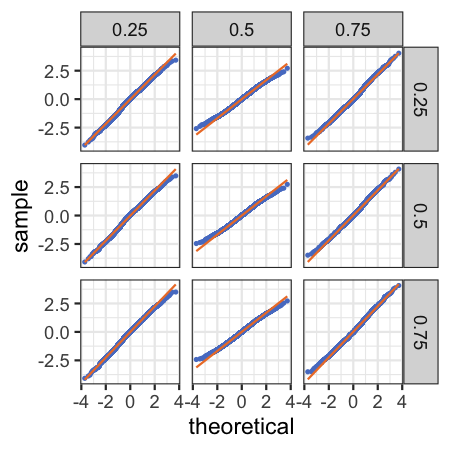}
    \caption{$\epsilon_{i, t} \sim Cauchy$, $N = 50,000$}
    \end{subfigure}
    \caption{Quantile-quantile plots for the simulated randomization distribution for $\hat{\bar{\tau}}^{\dagger}_{\cdot t}(1, 0; 1)$ under different choices of the parameter $\phi$ and treatment probability $p(w)$. The quantile-quantile plots compare the quantiles of the simulated randomization distribution (y-axis) against the quantiles of a standard normal random variable (x-axis). The 45 degree line is plotted in solid orange. The rows index the parameter $\phi \in \{0.25, 0.5, 0.75\}$, and the columns index the treatment probability $p(w) \in \{0.25, 0.5, 0.75\}$. Panel (a) plots the quantile-quantile plots for simulated randomization distribution with normally distributed errors $\epsilon_{i, t} \sim N(0, 1)$ and $N= 1000$. Panel (b) plots the quantile-quantile plots simulated randomization distribution with Cauchy distribution errors $\epsilon_{i, t} \sim Cauchy$ and $N = 50,000$. Results are computed over 5,000 simulations. See Section \ref{section:simulation} of the main text for further details on the simulation design.}
    \label{fig: unit average qqplot plag1}
\end{figure}

\newpage
\clearpage
\subsection{Simulations for the estimator of the unit-$i$ average dynamic causal effects}

We present simulation results for our estimator of the unit-$i$ average dynamic causal effect, $\hat{\bar{\tau}}_{i, \cdot}(1, 0; 0)$, with $T = 100$ time periods when the potential outcomes are generated with normally distributed errors and $T = 50,000$ with Cauchy distributed errors.

\paragraph{Normal approximations and size control:} Figure \ref{fig: time average histogram plag0} plots the randomization distribution for $\hat{\bar{\tau}}_{i \cdot}(1, 0;0)$. We see a similar pattern as before---when the errors are normally distributed, the randomization distribution converges quickly to a normal distribution, but it takes longer to do so when the errors are heavy-tailed. Figure \ref{fig: time average qqplot plag0} provides quantile-quantile plots of the simulation randomization distributions to further illustrate the quality of the normal approximations. The null rejection rates for the hypothesis, $H_0: \bar{\tau}_{i,\cdot}(1, 0;0) = 0$ are reported in Table \ref{table: null rejection prob, time average} and, again, the test controls size well across a wide range of parameters. 

\begin{figure}[!htbp]
    \centering
    \begin{subfigure}{.5\textwidth}
    \centering
    \includegraphics[width=2.5in, height=2.5in]{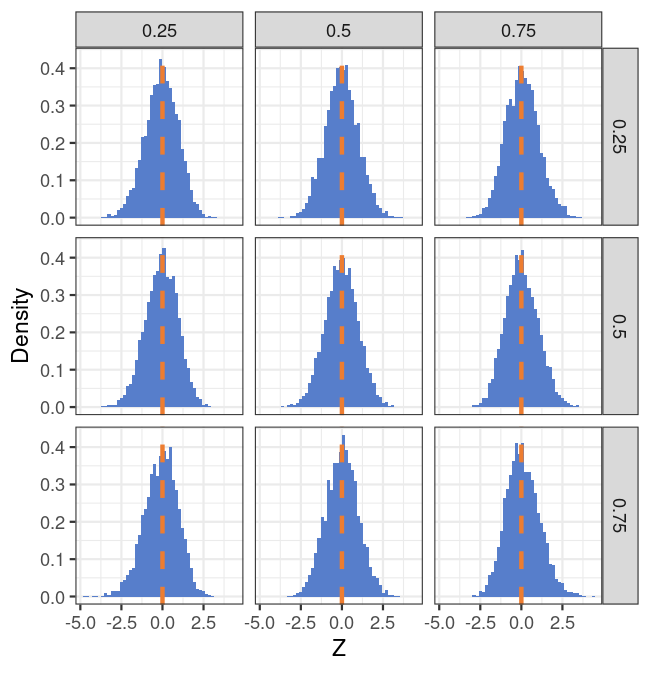}
    \caption{$\epsilon_{i, t} \sim N(0, 1)$, $T = 100$}
    \end{subfigure}%
    \begin{subfigure}{.5\textwidth}
    \centering
    \includegraphics[width=2.5in, height=2.5in]{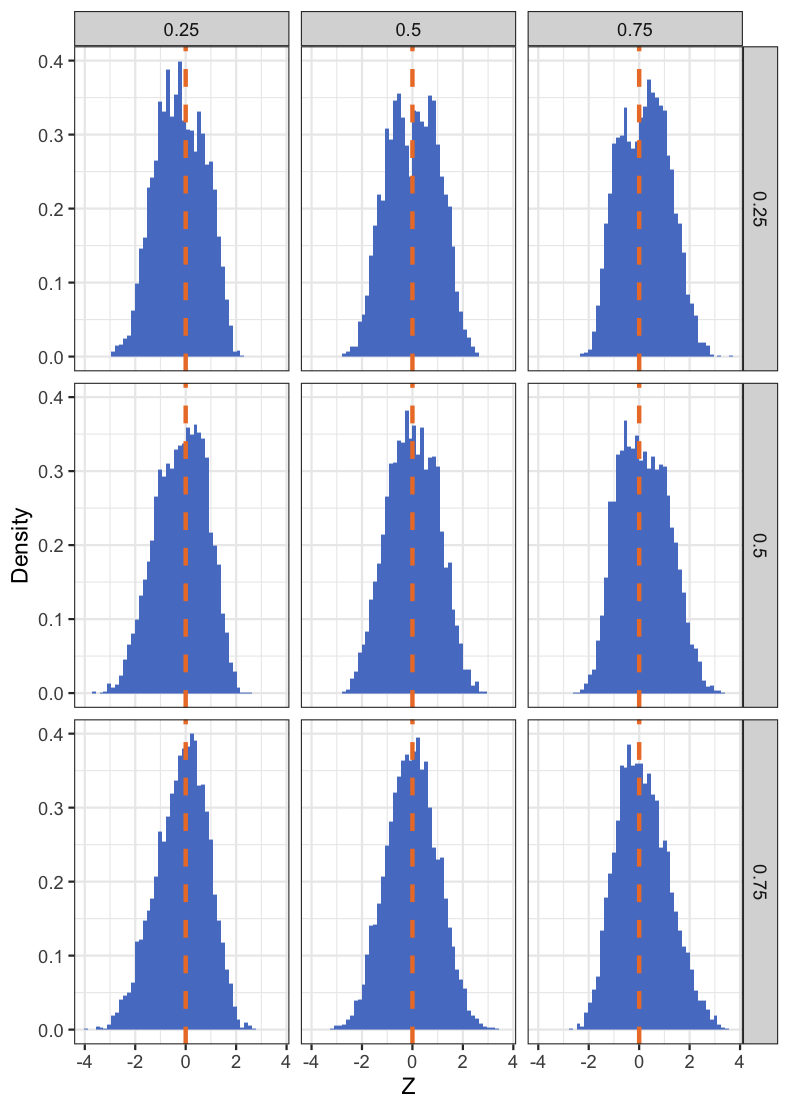}
    \caption{$\epsilon_{i, t} \sim Cauchy$, $T = 50,000$}
    \end{subfigure}
    \caption{Simulated randomization distribution for $\hat{\bar{\tau}}_{i \cdot}(1, 0;0)$ under different choices of the parameter $\phi$ and treatment probability $p(w)$. The rows index the parameter $\phi \in \{0.25, 0.5, 0.75\}$, and the columns index the treatment probability $p(w) \in \{0.25, 0.5, 0.75\}$. Panel (a) plots the simulated randomization distribution with normally distributed errors $\epsilon_{i, t} \sim N(0, 1)$ and $T = 100$. Panel (b) plots the simulated randomization distribution with Cauchy distribution errors $\epsilon_{i, t} \sim Cauchy$ and $T = 50,000$. Results are computed over 5,000 simulations. See Section \ref{section:simulation} of the main text for further details on the simulation design.}
    \label{fig: time average histogram plag0}
\end{figure}

\begin{table}[htbp!]
    \begin{minipage}{0.5\textwidth}
        \centering
            \begin{tabular}{r l || c c c }
                 & & \multicolumn{3}{c}{$p(w)$} \\
                 & & $0.25$ & $0.5$ & $0.75$ \\
                 \hline \hline
                 \multirow{3}{*}{$\phi$} & $0.25$ & $0.052$ & $0.047$ & $0.054$ \\
                 & $0.5$ & $0.049$ & $0.049$ & $0.048$ \\
                 & $0.75$ & $0.058$ & $0.046$ & $0.054$
            \end{tabular}
            \subcaption{$\epsilon_{i, t} \sim N(0, 1), T = 100$}
        \end{minipage}
        \begin{minipage}{0.5\textwidth}
            \centering
            \begin{tabular}{r l || c c c}
                & & \multicolumn{3}{c}{$p(w)$} \\
                 & & $0.25$ & $0.5$ & $0.75$ \\
                 \hline \hline
                 \multirow{3}{*}{$\phi$} & $0.25$ & $0.031$ & $0.031$ & $0.034$ \\
                 & $0.5$ & $0.048$ & $0.039$ & $0.043$ \\
                 & $0.75$ & $0.052$ & $0.047$ & $0.057$
            \end{tabular}
            \subcaption{$\epsilon_{i, t} \sim Cauchy, T = 50,000$}
        \end{minipage}
        \caption{Null rejection rate for the test of the null hypothesis $H_0: \bar{\tau}_{i \cdot }(1, 0; 0) = 0$ based upon the normal asymptotic approximation to the randomization distribution of $\hat{\bar{\tau}}_{i \cdot}(1, 0; 0)$. Panel (a) reports the null rejection probabilities in simulations with $\epsilon_{i, t} \sim N(0, 1)$ and $T = 100$. Panel (b) reports the null rejection probabilities in simulations with $\epsilon_{i, t} \sim Cauchy$ and $T = 50,000$. Results are computed over 5,000 simulations. See Section \ref{section:simulation} of the main text for further details on the simulation design.}
        \label{table: null rejection prob, time average}
\end{table}

\begin{figure}[htbp!]
    \centering
    \begin{subfigure}{.5\textwidth}
    \centering
    \includegraphics[width=2.5in, height=2.5in]{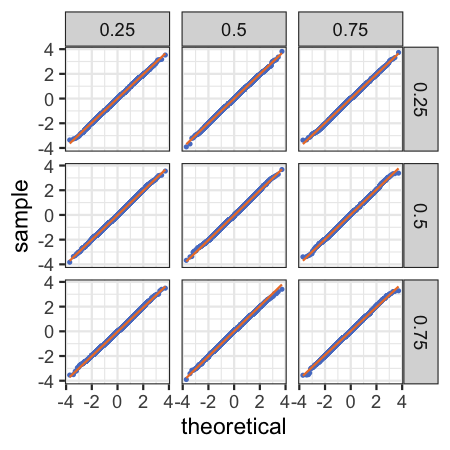}
    \caption{$\epsilon_{i, t} \sim N(0, 1)$, $T = 100$}
    \end{subfigure}%
    \begin{subfigure}{.5\textwidth}
    \centering
    \includegraphics[width=2.5in, height=2.5in]{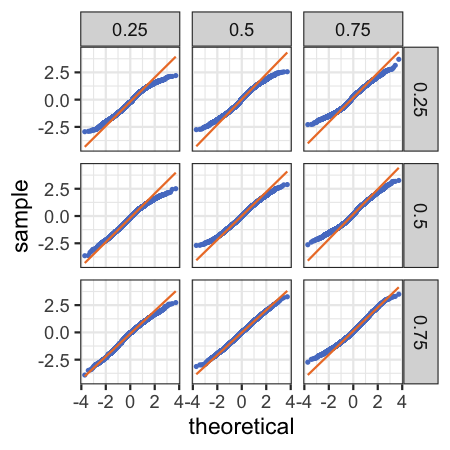}
    \caption{$\epsilon_{i, t} \sim Cauchy$, $T = 50,000$}
    \end{subfigure}
    \caption{Quantile-quantile plots for the simulated randomization distribution for $\hat{\bar{\tau}}_{i \cdot}(1, 0; 0)$ under different choices of the parameter $\phi$ and treatment probability $p(w)$. The quantile-quantile plots compare the quantiles of the simulated randomization distribution (y-axis) against the quantiles of a standard normal random variable (x-axis). The 45 degree line is plotted in solid orange. The rows index the parameter $\phi \in \{0.25, 0.5, 0.75\}$, and the columns index the treatment probability $p(w) \in \{0.25, 0.5, 0.75\}$. Panel (a) plots the quantile-quantile plots for simulated randomization distribution with normally distributed errors $\epsilon_{i, t} \sim N(0, 1)$ and $T = 100$. Panel (b) plots the quantile-quantile plots simulated randomization distribution with Cauchy distribution errors $\epsilon_{i, t} \sim Cauchy$ and $T = 50,000$. Results are computed over 5,000 simulations. See Section \ref{section:simulation} of the main text for further details on the simulation design.}
    \label{fig: time average qqplot plag0}
\end{figure}

\paragraph{Rejection rates:}

Next, we investigate the rejection rate of the statistical test based on the normal asymptotic approximation for $H_0: \bar{\tau}^{\dagger}_{i \cdot}(1, 0; 0) = 0$ and $H_0: \bar{\tau}^{\dagger}_{i \cdot}(1, 0; 1) = 0$, plotting the rejection rates in Figure \ref{fig: time average power plot, T100}. For $p = 0$,  Once again, we observe that the rejection rate against $H_0: \bar{\tau}^{\dagger}_{i \cdot}(1, 0;0) = 0$ has good power properties across a range of simulations. However, once again for $p = 1$, our conservative test has low power and the persistence of the causal effects $\phi$ has an important effect on the power of our tests. Additionally, Figure \ref{fig: time average power plot, T1000} shows the same power plots for $T = 1000$ time periods. In this case, we observe that the conservative test has good power against the weak null of no unit-$i$ average dynamic causal effects for both $p = 0, 1$. This suggests that detecting unit-$i$ average dynamic causal effects requires a long time dimension in the panel experiment.

\begin{figure}[!htbp]
    \centering
    \includegraphics[width=5in, height=2.5in]{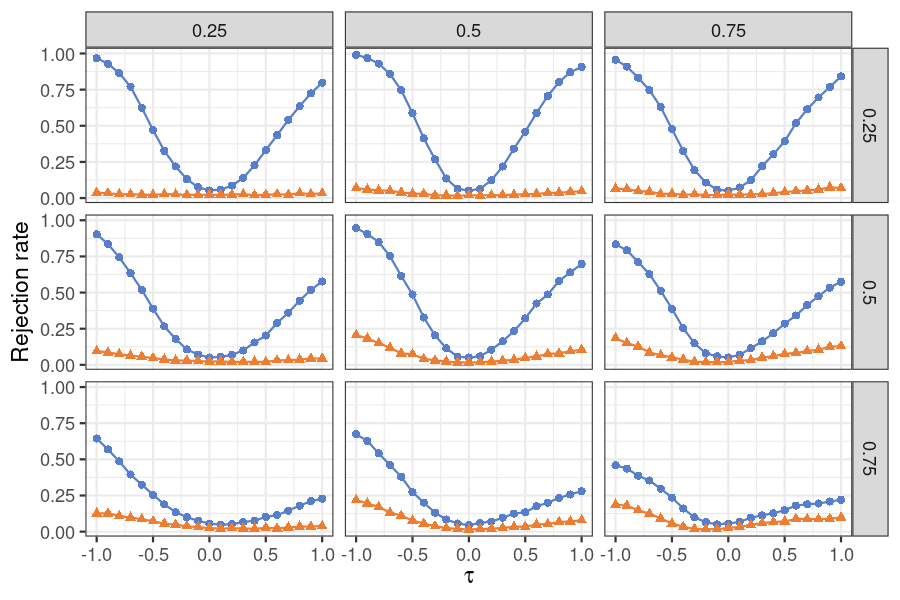}
    \caption{Rejection probabilities for a test of the null hypothesis $H_0: \bar{\tau}^{\dagger}_{i \cdot}(1, 0; 0) = 0$ and $H_0: \bar{\tau}^{\dagger}_{i \cdot}(1, 0; 1) = 0$ as the parameter $\beta$ varies under different choices of the parameter $\phi$ and treatment probability $p(w)$. The rejection rate curve against $H_0: \bar{\tau}^{\dagger}_{i \cdot}(1, 0; 0) = 0$ is plotted in blue and the rejection rate curve against $H_0: \bar{\tau}^{\dagger}_{i \cdot}(1, 0; 1) = 0$ is plotted in orange. The rows index the parameter $\phi \in \{0.25, 0.5, 0.75\}$, and the columns index the treatment probability $p(w) \in \{0.25, 0.5, 0.75\}$. The simulations are conducted with normally distributed errors $\epsilon_{i, t} \sim N(0, 1)$ and $T = 100$. Results are averaged over $5000$ simulations. See Section \ref{section:simulation} of the main text for further details on the simulation design.}
    \label{fig: time average power plot, T100}
\end{figure}

\begin{figure}[!htbp]
    \centering
    \includegraphics[width=5in, height=2.5in]{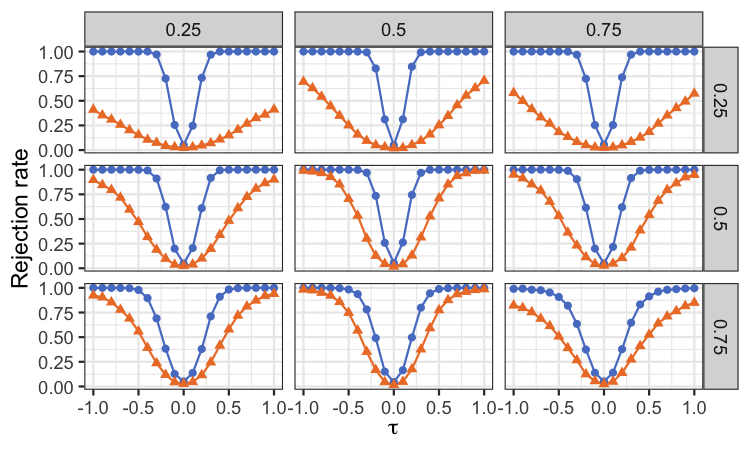}
    \caption{Rejection probabilities for a test of the null hypothesis $H_0: \bar{\tau}^{\dagger}_{i \cdot}(1, 0; 0) = 0$ and $H_0: \bar{\tau}^{\dagger}_{i \cdot}(1, 0; 1) = 0$ as the parameter $\beta$ varies under different choices of the parameter $\phi$ and treatment probability $p(w)$. The rejection rate curve against $H_0: \bar{\tau}^{\dagger}_{i \cdot}(1, 0; 0) = 0$ is plotted in blue and the rejection rate curve against $H_0: \bar{\tau}^{\dagger}_{i \cdot}(1, 0; 1) = 0$ is plotted in orange. The rows index the parameter $\phi \in \{0.25, 0.5, 0.75\}$, and the columns index the treatment probability $p(w) \in \{0.25, 0.5, 0.75\}$. The simulations are conducted with normally distributed errors $\epsilon_{i, t} \sim N(0, 1)$ and $T = 1000$. Results are averaged over $5000$ simulations. See Section \ref{section:simulation} of the main text for further details on the simulation design.}
    \label{fig: time average power plot, T1000}
\end{figure}

\paragraph{Simulation results for the estimator of the lag-$1$, unit-$i$ weighted average dynamic causal effect, $\bar{\tau}^\dagger_{i, \cdot}(1, 0; 1)$:} We now present simulation results that analyze the properties of our estimator for the lag-$1$ total weighted average dynamic causal effect, $\hat{\bar{\tau}}^\dagger_{i, \cdot}(1, 0; 1)$. We choose the weights to $a_{\mathbf v}$ to place equal weight on the future treatment paths. Figure \ref{fig: unit average histogram plag1} plots the simulated randomization distribution for $\hat{\bar{\tau}}^\dagger_{i, \cdot}(1, 0; 1)$ and Figure \ref{fig: unit average qqplot plag1} plots the associated quantile-quantile plot. We observe that the normal approximation remains accurate for lagged dynamic causal effects.

\begin{figure}[htbp!]
    \centering
    \begin{subfigure}{.5\textwidth}
    \centering
    \includegraphics[width=2.5in, height=2.5in]{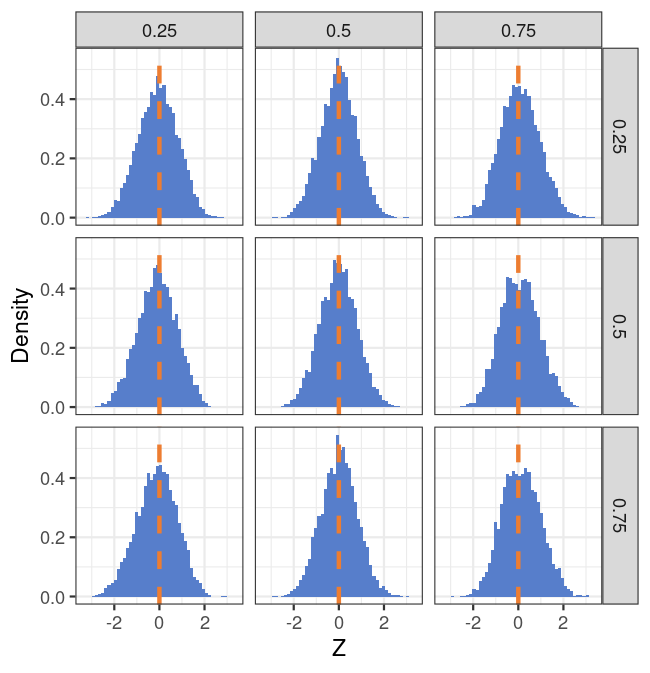}
    \caption{$\epsilon_{i, t} \sim N(0, 1)$, $T = 100$}
    \end{subfigure}%
    \begin{subfigure}{.5\textwidth}
    \centering
    \includegraphics[width=2.5in, height=2.5in]{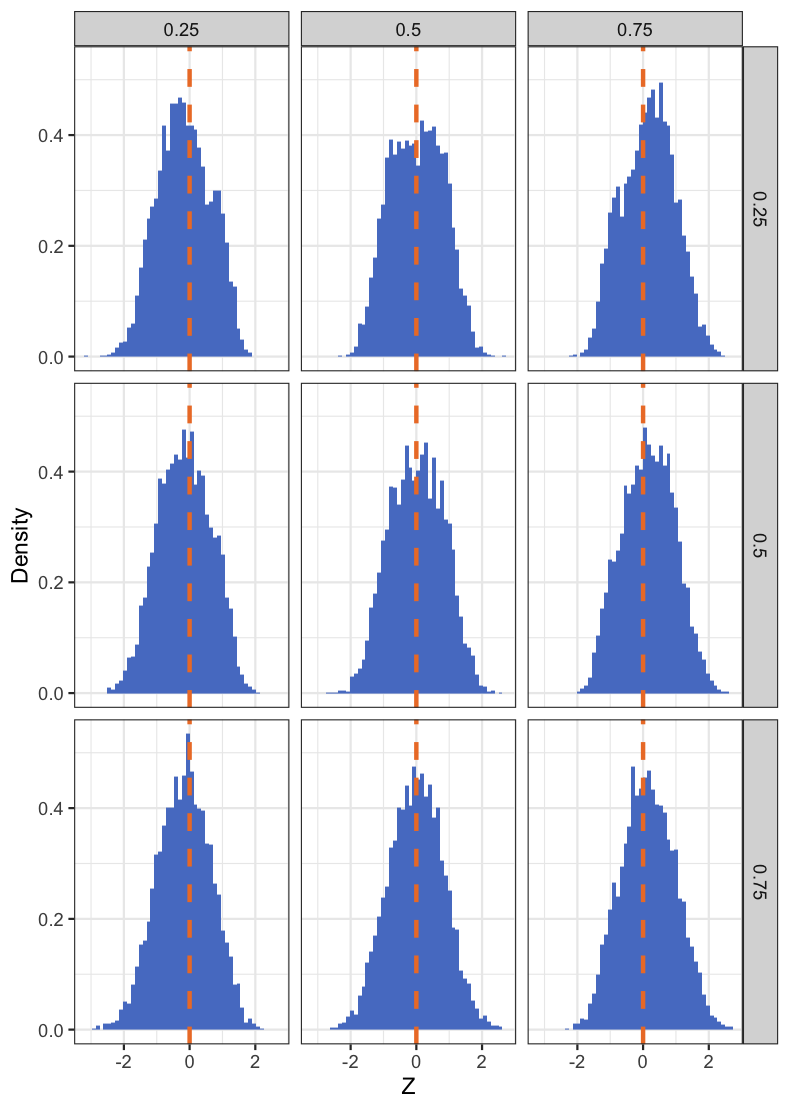}
    \caption{$\epsilon_{i, t} \sim Cauchy$, $T = 50,000$}
    \end{subfigure}
    \caption{Simulated randomization distribution for $\hat{\bar{\tau}}^{\dagger}_{i \cdot}(1, 0; 1)$ under different choices of the parameter $\phi$ and treatment probability $p(w)$. The rows index the parameter $\phi \in \{0.25, 0.5, 0.75\}$, and the columns index the treatment probability $p(w) \in \{0.25, 0.5, 0.75\}$. Panel (a) plots the simulated randomization distribution with normally distributed errors $\epsilon_{i, t} \sim N(0, 1)$ and $T = 100$. Panel (b) plots the simulated randomization distribution with Cauchy distribution errors $\epsilon_{i, t} \sim Cauchy$ and $T = 50,000$. Results are computed over 5,000 simulations. See Section \ref{section:simulation} of the main text for further details on the simulation design.}
    \label{fig: time average histogram plag1, T1000}
\end{figure}

\begin{figure}[htbp!]
    \centering
    \begin{subfigure}{.5\textwidth}
    \centering
    \includegraphics[width=2.5in, height=2.5in]{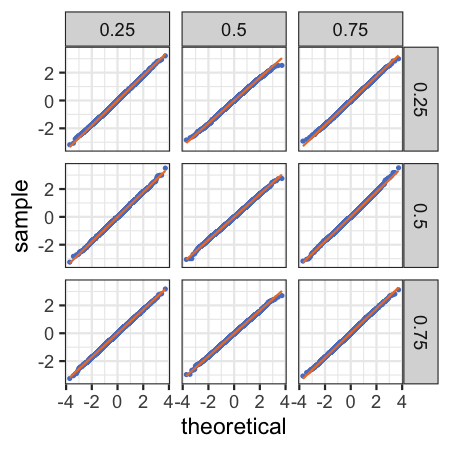}
    \caption{$\epsilon_{i, t} \sim N(0, 1)$, $T = 1000$}
    \end{subfigure}%
    \begin{subfigure}{.5\textwidth}
    \centering
    \includegraphics[width=2.5in, height=2.5in]{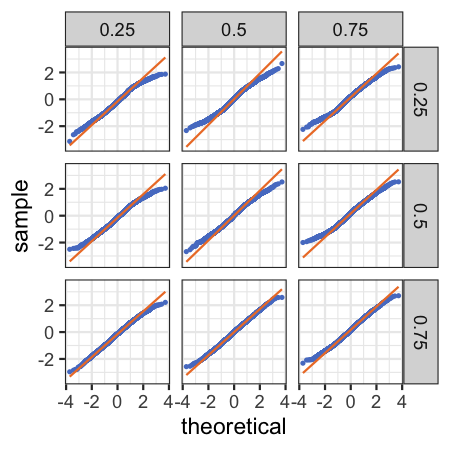}
    \caption{$\epsilon_{i, t} \sim Cauchy$, $T = 50,000$}
    \end{subfigure}
    \caption{Quantile-quantile plots for the simulated randomization distribution for $\hat{\bar{\tau}}^{\dagger}_{i \cdot}(1, 0; 1)$ under different choices of the parameter $\phi$ and treatment probability $p(w)$. The quantile-quantile plots compare the quantiles of the simulated randomization distribution (y-axis) against the quantiles of a standard normal random variable (x-axis). The 45 degree line is plotted in solid orange. The rows index the parameter $\phi \in \{0.25, 0.5, 0.75\}$, and the columns index the treatment probability $p(w) \in \{0.25, 0.5, 0.75\}$. Panel (a) plots the quantile-quantile plots for simulated randomization distribution with normally distributed errors $\epsilon_{i, t} \sim N(0, 1)$ and $T = 1000$. Panel (b) plots the quantile-quantile plots simulated randomization distribution with Cauchy distribution errors $\epsilon_{i, t} \sim Cauchy$ and $T = 50,000$. Results are computed over 5,000 simulations. See Section \ref{section:simulation} of the main text for further details on the simulation design.}
    \label{fig: time average qqplot plag1}
\end{figure}

\newpage
\clearpage
\section{Additional empirical results}\label{section: additional empirical results}
\subsection{Analysis of unit and time-specific average dynamic causal effects}

We estimate unit-specific average dynamic causal effects in the panel experiment conducted by \cite{AndreoniSamuelson(06)}. We focus on two randomly selected units in the experiment and construct estimates of their average $i,t$-th lag-$0$ dynamic causal effect, $\tau_{i,t}(1, 0;0)$ (Definition \ref{defn:pq-causal-effect}). Figure \ref{fig: unit plots, cooperation} shows the nonparametric estimates $\hat{\tau}_{i,t}(1,0;0)$ for $t \in [T]$, for the two units. The figure also contains the nonparametric estimate of the average unit-$i$ lag-$0$ dynamic causal effect $\bar \tau_{i\cdot}(1,0;0) = \frac{1}{T} \sum_{t=1}^{T} \hat{\tau}_{i,t}(1,0;0)$. The result shows that the point estimate of the average unit-$i$ lag-$0$ dynamic causal effect is positive for both units, suggesting that a larger value of $\lambda$ in the current game increases the likelihood of cooperation for both units. Since each unit only plays a total of twenty rounds, the estimated variance of these unit-specific estimators is quite large.  

\begin{figure}[htbp!]
    \centering
    \includegraphics[width = 5in, height = 3in]{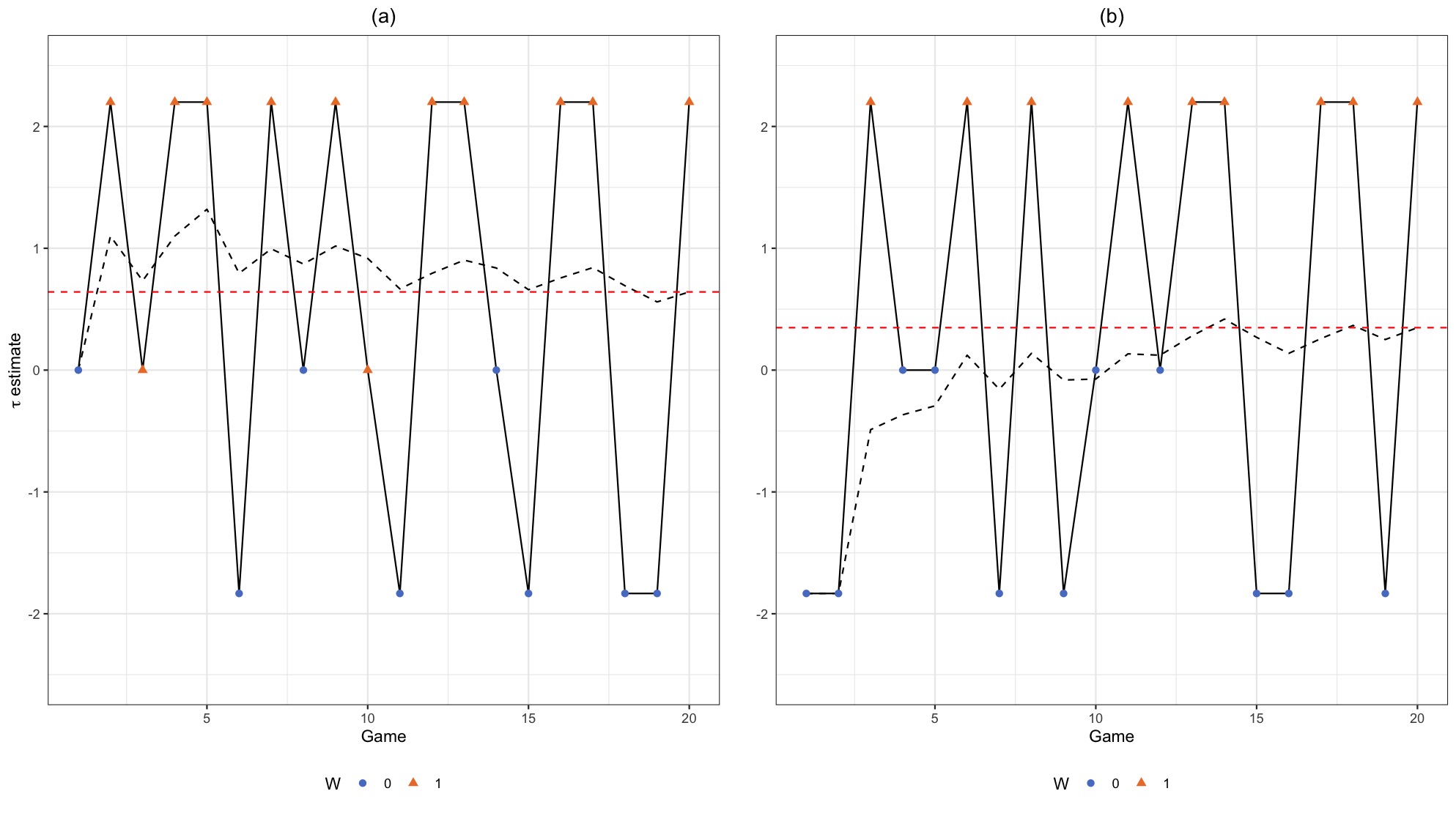}
    \caption{Estimates of the weighted average $i,t$-th lag-$0$ dynamic causal effect (Definition \ref{defn:pq-causal-effect}) of $W = \mathbbm{1}\{\lambda \geq 0.6\}$ on cooperation in period one for two units in the experiment of \cite{AndreoniSamuelson(06)}. The solid black line plots the nonparametric estimator $\hat{\tau}_{i,t}(1,0;0)$ given in Remark \ref{remark: HT weighted average causal effect}. The dashed black line plots the running average of the period-specific estimator for each unit: for each $t \in [T]$, $\frac{1}{t} \sum_{s=1}^{t} \hat{\tau}_{i,s}(1,0;0)$. The dashed red line plots the estimated weighted average unit-$i$ lag-$0$ dynamic causal effect, $\hat{\bar{\tau}}_{i \cdot}(1, 0;0) = \frac{1}{T} \sum_{t=1}^{T} \hat{\tau}_{i,t}(1,0;0)$.}
    \label{fig: unit plots, cooperation}
\end{figure}

We next estimate period-specific, weighted average dynamic causal effects that pool information across units in order to gain precision. For each time period $t \in [T]$, we construct estimates based on the nonparametric estimator of the weighted average time-$t$, lag-$p$ dynamic causal effect $\bar{\tau}_{\cdot t}^{\dagger}(1, 0; p) = \frac{1}{N} \sum_{i=1}^{N} \tau_{i,t}^{\dagger}(1, 0; p) $ for $p = 0, 1, 2, 3$. For each value of $p$, the dashed black line in Figure \ref{fig: cross section plots, cooperation} plots the estimates $\hat{\bar{\tau}}_{\cdot t}^{\dagger}(1, 0; p)$ and the grey region plots a 95\% pointwise conservative confidence band for the period-specific weighted average dynamic causal effects. For each value of $p$, there appears to be some heterogeneity in the period-specific weighted causal dynamic causal effects across time periods.

To further investigate these dynamic causal effects, the solid blue line in Figure \ref{fig: cross section plots, cooperation} plots the nonparametric estimator the total lag-$p$ weighted average causal effect $\bar{\tau}^{\dagger}(1, 0; p)$ for $p = 0, 1, 2, 3$, which further pools information across all units and time periods. The dashed blue lines plot the conservative confidence interval for the total lag-$p$ weighted average causal effect. See the main text for further discussion of the total lag-$p$ weighted average causal effect estimates.

\begin{figure}[t]
    \centering
    \includegraphics[width = 5in, height = 3in]{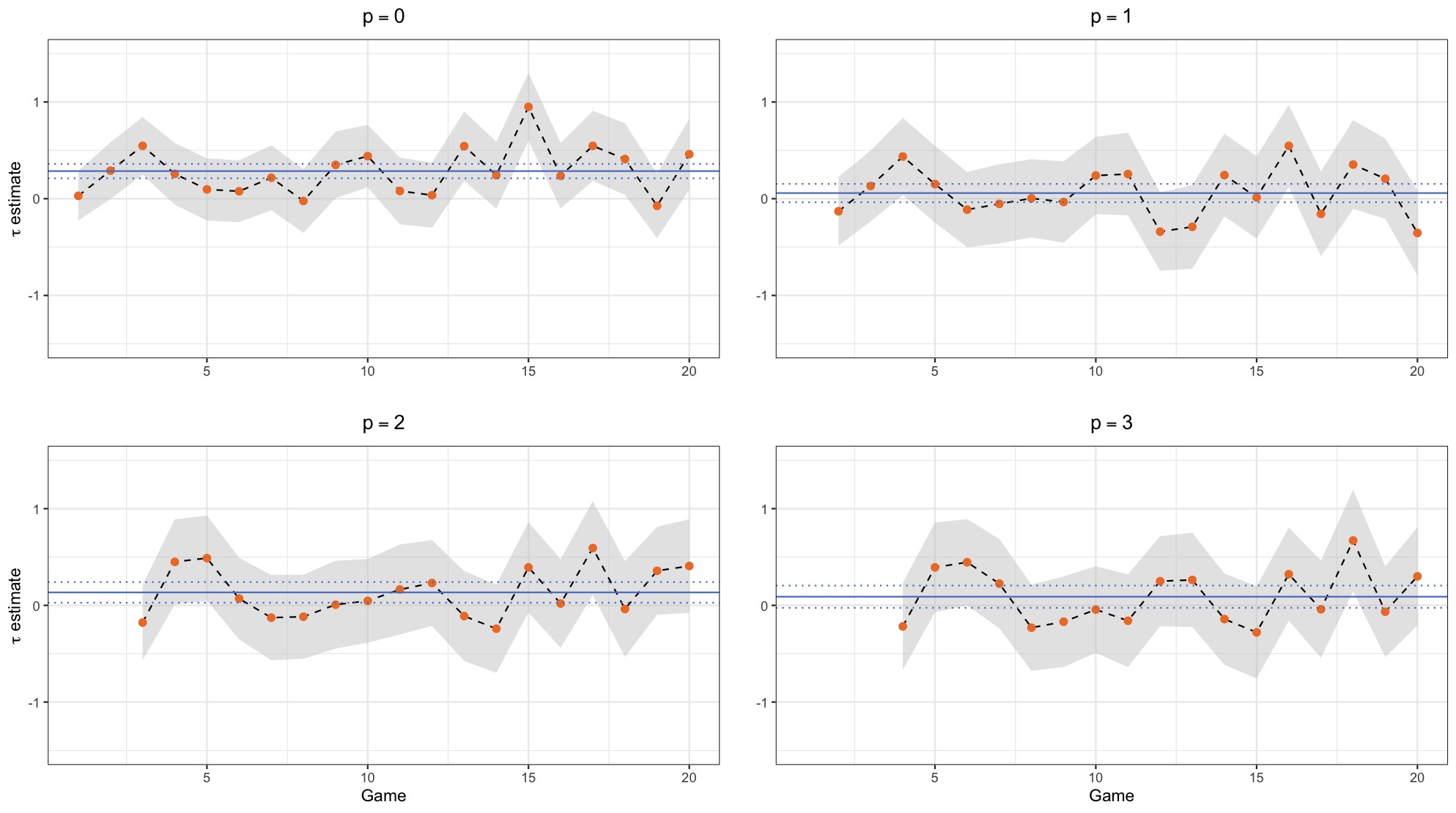}
    \caption{Estimates of the time-$t$ lag-$p$ weighted average dynamic causal effect, $\bar{\tau}^{\dagger}_{\cdot t}(1,0;p)$ of $W = \mathbbm{1}\{\lambda \geq 0.6\}$ on cooperation in period one based on the experiment of \cite{AndreoniSamuelson(06)} for each time period $t \in [T]$ and $p = 0,1,2,3$. The black dashed line plots the nonparametric estimator of the time-$t$ lag-$p$ weighted average dynamic causal effect, $\hat{\bar{\tau}}^{\dagger}_{\cdot t}(1,0;p)$, for each period $t \in [T]$. The grey region plots the 95\% point-wise confidence band for $\bar{\tau}^{\dagger}_{\cdot t}(1,0;p)$ based on the conservative estimator of the asymptotic variance of the nonparametric estimator (Theorem \ref{thm:clts}). The solid blue line plots the nonparametric estimator of the total lag-$p$ weighted average dynamic causal effect, $\hat{\bar{\tau}}^{\dagger}(1,0;p)$ and the dashed blue lines plot the 95\% confidence interval for $\bar{\tau}^{\dagger}(1,0;p)$ based on the conservative estimator of the asymptotic variance of the nonparametric estimator.}
    \label{fig: cross section plots, cooperation}
\end{figure}

\end{document}